\newtheorem{defn}{Definition}[section]
\newtheorem{rem}[defn]{Remark}
\newtheorem{thm}[defn]{Theorem}
\newtheorem{lemma}[defn]{Lemma}
\newtheorem{prop}[defn]{Proposition}
\newtheorem{coro}[defn]{Corollary}
\newtheorem{ex}{Example}[section]
\newcommand\crule[3][black]{\textcolor{#1}{\rule{#2}{#3}}}
\newcommand{\ra}{\rightarrow}
\newcommand{\lra}{\longrightarrow}
\newcommand{\Ra}{\Rightarrow}
\newcommand{\Da}{\Downarrow}
\newcommand{\midsp}{\;|\;}
\newcommand{\sub}[2]{#1_{{}_{#2}}}
\newcommand{\telos}{\hfill$\Box$}
\newcommand{\type}[1]{{\tt #1}}
\newcommand{\iso}{\backsimeq}
\newcommand{\val}[1]{\mbox{$[\![#1]\!]$}}
\newcommand{\forces}{\Vdash}
\newcommand{\dforces}{\forces^{\!\!\partial}}
\newcommand{\yvval}[1]{\mbox{$(\!|#1|\!) $}}
\newcommand{\infrule}[2]{\frac{\mbox{\rm $#1$}}{\mbox{\rm $#2$}}}
\newcommand{\proves}{\vdash}
\newcommand{\vproves}{\mbox{$\medvert\!\!\!\!\sim\;$}}
\newcommand{\vmodels}{\mbox{$\medvert\!\!\!\!\approx\;$}}
\newcommand{\zmodels}{\mbox{$\medvert\!\!\!\!\eqsim\;$}}
\newcommand{\upv}{\upVdash}
\newcommand{\rperp}{\mbox{${}^{\upv}$}}
\newcommand{\gphi}{{\mathcal  G}(Z_\partial)}
\newcommand{\gpsi}{{\mathcal  G}(Z_1)}
\newcommand{\bbox}{\blacksquare}
\newcommand{\lperp}{{}\rperp}
\newcommand{\ldd}{\mbox{$\largediamond\hspace*{-10pt}\Diamond\;$}}
\newcommand{\lbvert}{\mbox{\Large \mbox{$\boxvert$}}}
\newcommand{\lbminus}{\mbox{\Large \mbox{$\boxminus$}}}
\newcommand{\lbplus}{\mbox{\Large \mbox{$\boxplus$}}}
\newcommand{\ldvert}{\raisebox{0.5pt}{\Large \mbox{$\diamondvert$}}}
\newcommand{\ldminus}{\raisebox{0.5pt}{\Large \mbox{$\diamondminus$}}}
\newcommand{\blackdiamond}{\raisebox{-1.5pt}{\mbox{\LARGE {$\filleddiamond$}}}}
\newcommand{\lbb}{\mbox{\raisebox{1pt}{$\largesquare\hspace*{-7.6pt}\Box\;$}}}
\newcommand{\lbbox}{\raisebox{-1.2pt}[0pt][0pt]{\crule[black]{0.27cm}{0.27cm}}\hspace*{1pt}}
\newcommand{\lbboxi}{\sub{\lbbox}{I}}
\newcommand{\ldi}{\sub{\largediamond}{I}}
\newcommand{\stx}[2]{\mbox{ST$_{#1}(#2)$}}
\newcommand{\sty}[2]{\mbox{ST$_{#1}(#2)$}}
\newcommand{\lfspoon}{\leftfilledspoon}
\newcommand{\rfspoon}{\rightfilledspoon}
\newcommand{\rspoon}{\rightspoon}
\newcommand{\tright}{{\triangleright}}
\newcommand{\tdown}{{\triangledown}}
\newcommand{\btdown}{\boxtimes}
\newcommand{\lbtdown}{\mbox{\Large $\btdown$}}
\newcommand{\ttdown}{\mbox{$\tdown\hskip-7pt\smalltriangledown$}}
\newcommand{\ltdown}{\raisebox{1.1pt}{\mbox{$\largetriangledown$}}}
\newcommand{\lttdown}{\mbox{\raisebox{1.6pt}{$\largetriangledown\hskip-3.53mm\tdown$}}}
\newcommand{\dd}{{\diamonddiamond}}
\newcommand{\bb}{{\boxbox}}
\newcommand{\ccirc}{\circledcirc}
\newcommand{\ltright}{\largetriangleright}
\newcommand{\lset}{\langle}
\newcommand{\rset}{\rangle}
\title{Distribution-Free Modal Logics:\\ Sahlqvist -- Van Benthem Correspondence}
\author{Takis Hartonas \\
 University of Thessaly, Greece; hartonas@uth.gr}
\begin{document}
\maketitle

\begin{abstract}
We present an extension and generalization of Sahlqvist--Van Benthem correspondence to the case of distribution-free modal logic, with, or without negation and/or implication connectives. We follow a reductionist strategy, reducing the correspondence problem at hand to the same problem, but for a suitable system of sorted modal logic (the modal companion of the distribution-free system). The reduction, via a fully abstract translation, builds on  duality between normal lattice expansions and sorted residuated frames with relations (a generalization of classical Kripke frames with relations). The approach is scalable and it can be generalized to other systems, with or without distribution, such as distributive modal logic, or substructural logics with, or without additional modal operators.
\\
{\bf Keywords:} {Sub-classical modal logic; Correspondence theory; Sahlqvist - Van Benthem algorithm} 
\end{abstract}

\section{Introduction}
\label{intro}
There appears to be a gap, a discontinuity, in the correspondence theory for non-classical, distribution-free (modal, or otherwise) logics, in particular. A new paradigm in correspondence theory emerged, founded by Conradie, Ghilardi and Palmigiano  \cite{Conradie-unified} (2014), the {\em unified}, or {\em algebraic correspondence theory}. And with considerable success \cite{alg-dist,palmigiano-zhao,conradie-constructive-canonicity-fixed-point,conradie-craig-canonicity,Gehrke-Venema} over the last decade.
To a certain extent at least, it was influenced by the work of Conradie, Goranko and Vakarelov \cite{Conradie-Goranko-Vakarelov-I,Conradie-Goranko-Vakarelov-II,Conradie-Goranko-Vakarelov-III, Conradie-Goranko-IV,Conradie-Goranko-Vakarelov-V} and by the Ghilardi and Meloni insights in their work on constructive canonicity \cite{GHILARDI1997}. Its semantic underpinnings rest, essentially, on Gehrke's \cite{mai-gen} RS-frames semantics for distribution-free logics, continued and furthered in joint work with Palmigiano and coworkers \cite{mai-grishin,palmigiano-categories,palmigiano-to-meaning,palmigiano-cats2}.
 Published literature leaves the reader with the impression that the classical Sahlqvist -- Van Benthem \cite{van-b,SAHLQVIST1975} approach to correspondence, eliminating second-order quantifiers by a minimal instantiation argument, has little to contribute in a new, non-classical, even distribution-free setting. Noticeable exceptions are Suzuki's \cite{SUZUKI_2013} and a much clearer presentation in a recent article by Bezhanishvili, Dmitrieva, de Groot and Morachini \cite{choice-free-dmitrieva-bezanishvili} on distribution-free positive modal logic, but the signature of the logic is limited and it remains to be demonstrated how the approach can be generalized and extended to richer systems.
 
 The present article contributes by restoring (while generalizing) the classical Sahlqvist -- Van Benthem  approach to correspondence theory in a distribution-free setting and it reduces to it if the frames we work with, and their associated logics, are classical. We demonstrate the results by working with distribution-free modal logic, equipped  with both negation and  implication logical operators and we briefly discuss possible extensions (to substructural logics, for example). We adopt a uniform relational semantics approach based on representation and (topological) duality for normal lattice expansions and sorted residuated frames with relations \cite{duality2}, a generalization of the J\'{o}nsson and Tarski \cite{jt1} representation of Boolean algebras with operators, with particular cases of \cite{duality2} treated in \cite{pnsds,choiceFreeHA,choiceFreeStLog,dfnmlA}. The proposed semantics reduces to classical semantics in the special case where our frames, and their associated logics, are classical.
 
 The present article is part of a project of reduction of non-distributive logics to sorted residuated modal logics \cite{pll7,redm,pnsds,discr,discres,choiceFreeHA,choiceFreeStLog,dfnmlA,vb} (their sorted modal companion logics), relying heavily on duality \cite{sdl,dloa,sdl-exp,duality2} for lattice expansions with normal operators. 
 
 Our correspondence argument is mediated by a translation and a dual-translation (co-translation) of the language of the target logic in the (sorted) language of an extension of its sorted companion modal logic, in which adjoint and dual operators are added. Whereas the target logic (distribution-free modal logic, in the present article) is the logic of the full complex algebras of frames in a broad frame-class, the sorted companion modal logic is the logic of the dual sorted powerset algebras of the frames.
 
 Pursuing Sahlqvist correspondence via translation is not an idea that is new with this article. In their article {\em Sahlqvist via Translation}, Conradie, Palmigiano and Zhao \cite{palmigiano-zhao} carried out this approach for distributive systems with some known G\"{o}del type translation into their modal companions. As the authors remind their reader, the idea had been already around, in one form or another \cite{Gehrke-Venema,vbenthem-bezh-hol,wolter-zakhar,wolter-zakha-in-book}.
 \\

Section~\ref{logics etc section} presents the language and proof system of Distribution-free Modal Logic ($\mathrm{DfML}$, the logic of main focus in this article), as well as its equivalent modal lattice algebraic semantics and it briefly describes the basics of sorted residuated frames, interpretations and models. 

Section~\ref{lang section} presents a suitably chosen language for the dual sorted powerset algebra of a frame, together with the interpretation of the language in sorted residuated frames. The first and second-order frame languages are also briefly described and the last part of the section is devoted to the definition of a translation and a co-translatioin (dual translation) of the language of DfML into the language of its sorted companion modal logic. A full abstraction result for the translation is reported, as an instance of a more general result proven in \cite[Theorem~1]{vb}. The section is completed with a brief description of the standard first and second-order fully abstract translation of the language of the companion sorted modal logic.

The extension of the Sahlqvist -- Van Benthem correspondence theory to distribution-free modal logic (with, or without additional negation and/or implication connectives) is presented in Section~\ref{sahlqvist section}. 

Section~\ref{reduction section} presents the reduction pre-processing stage, reducing the translation, or co-translation of a given sequent to what we define to be a system of formal inequalities in canonical Sahlqvist form. 

Section~\ref{reduction structure section} describes the main steps of the algorithm, from pre-processing to generation of the guarded second-order translation and, finally, to elimination of second-order quantifiers and termination with a local first-order correspondent.

To simplify the presentation we treat separately the case of DfML with necessity and possibility connectives only (Section~\ref{box and diamond section}), with a negation connective added (Section~\ref{negation section}), or with implication as well (Section~\ref{implication section}). Most of the needed work is presented in the first part, Section~\ref{box and diamond section}, with only box and diamond in the language. The reduction strategy is presented and a detailed proof of the extended Sahlqvist -- Van Benthem correspondence result is given in Theorem~\ref{Sahlqvist thm}, subsequently extended for negation and implication as well (Theorem~\ref{Sahlqvist thm with negation}, Theorem~\ref{Sahlqvist thm with implication}). 

In the last Section~\ref{etc} we first substantiate the claim that the generalized Sahlqvist -- Van Benthem approach of this article reduces to the classical result (Section~\ref{kripke case}) if our frames and their associated logics are classical, we give enough pointers for the interested reader to work out an extension of this article's approach to substructural logics (Section~\ref{suzuki case}), we briefly comment on the Conradie-Palmigiano \cite{conradie-palmigiano} approach, pointing out that not every sequent for which a local first-order correspondent can be calculated in this article's approach is Sahlqvist in the approach taken in \cite{conradie-palmigiano}.

We leave out any discussion of canonicity in this article, postponing it for a future report. There appear to exist some inherent difficulties in establishing canonicity in a ``via translation'' approach and this has been pointed out already in \cite{palmigiano-zhao}.

\section{Logics, Algebras, Frames and Models}
\label{logics etc section}
\subsection{Implicative Modal Lattices and Logics}
\label{lat and log section}
Let $\{1,\partial\}$ be a 2-element set, $\mathbf{L}^1=\mathbf{L}$ and $\mathbf{L}^\partial=\mathbf{L}^\mathrm{op}$ (the opposite lattice). Extending established terminology \cite{jt1}, a function $f:\mathbf{L}_1\times\cdots\times\mathbf{L}_n\lra\mathbf{L}_{n+1}$ will be called {\em additive} and {\em normal}, or a {\em normal operator}, if it distributes over finite joins of the lattice $\mathbf{L}_i$, for each $i=1,\ldots n$, delivering a join in $\mathbf{L}_{n+1}$.

An $n$-ary operation $f$ on a bounded lattice $\mathbf L$ is {\em a normal lattice operator of distribution type  $\delta(f)=(j_1,\ldots,j_n;j_{n+1})\in\{1,\partial\}^{n+1}$}  if it is a normal additive function  $f:{\mathbf L}^{j_1}\times\cdots\times{\mathbf L}^{j_n}\lra{\mathbf L}^{j_{n+1}}$ (distributing over finite joins in each argument place), where  each $j_k$, for  $k=1,\ldots,n+1$,   is in the set $\{1,\partial\}$, hence ${\mathbf L}^{j_k}$ is either $\mathbf L$, or ${\mathbf L}^\partial$.

If $\tau$ is a tuple (sequence) of distribution types, a {\em normal lattice expansion (NLE) of (similarity) type $\tau$} is a lattice with a normal lattice operator of distribution type $\delta$ for each $\delta$ in $\tau$. 

In this article we consider normal lattice expansions $\mathbf{L}=(L,\leq,\wedge,\vee,0,1,\Box,\Diamond,\tdown,\ra)$ where $\delta(\Box)=(\partial;\partial)$, $\delta(\Diamond)=(1;1)$, $\delta(\tdown)=(1;\partial)$ and $\delta(\ra)=(1,\partial;\partial)$. In other words, the following axioms are assumed to hold, on top of the axioms for bounded lattices,
\begin{tabbing}
\hskip5mm\=(D$\Box$)\hskip5mm\= $\Box(a\wedge b)=\Box a\wedge\Box b$\hskip1cm\=(D$\Diamond$)\hskip5mm\=$\Diamond(a\vee b)=\Diamond a\vee\Diamond b$\\
\>(N$\Box$) \> $\Box 1=1$\>(N$\Diamond$) \> $\Diamond 0=0$\\
\>(N$\tdown$)\> $\tdown 0=1$ \> (D$\tdown$)\> $\tdown(a\vee b)=\tdown a\wedge\tdown b$
\end{tabbing}
\begin{tabbing}
\hskip5mm\=(A1)\hskip5mm\= $(a\vee b)\ra c=(a\ra c)\wedge(b\ra c)$\\
\>(A2)\> $a\ra(b\wedge c)=(a\ra b)\wedge(a\ra c)$\\
\>(N)\> $(0\ra a) = 1 = (a\ra 1)$
\end{tabbing}
We refer to algebras as above as {\em implicative modal lattices with a quasi-complementation operation}, or briefly as {\em modal lattices}.

The propositional language of modal lattices is defined by the grammar below
\[
\mathcal{L}_{\Box\Diamond}^{\tdown\ra}\ni\varphi\;:=\;p_i(i\in\mathbb{N})\midsp\top\midsp\bot\midsp\varphi\wedge\varphi\midsp \varphi\vee\varphi\midsp\bb\varphi\midsp\dd\varphi\midsp\ttdown\varphi\midsp\varphi\rfspoon\varphi.
\]
A proof system for distribution-free modal logic (DfML), the logic of modal lattices, is defined in Table~\ref{proof system}, in terms of single-premiss single-conclusion sequents, written as $\varphi\proves\psi$. It is left to the interested reader to verify that the Lindenbaum-Tarski algebra of DfML is a modal lattice. 

\begin{table}[!htbp]
\caption{A Proof System for Minimal Distribution-Free Modal Logic}
\label{proof system}
\begin{tabbing}
$\varphi\proves\varphi$  \hskip2.5cm\= $\varphi\proves\varphi\vee\psi$ \hskip3cm\= $\psi\proves\varphi\vee\psi$\\
$\varphi\proves\top$\> $\varphi\wedge\psi\proves\varphi$ \> $\varphi\wedge\psi\proves\psi$\\
$\bot\proves\varphi$ \> $\infrule{\varphi\proves\psi}{\varphi[\vartheta/p]\proves\psi[\vartheta/p]}$ \> $\infrule{\varphi\proves\psi\hskip4mm\psi\proves\vartheta}{\varphi\proves\vartheta}$\\[1mm]
\> $\infrule{\varphi\proves\vartheta\hskip4mm\psi\proves\vartheta}{\varphi\vee\psi\proves\vartheta}$ \> $\infrule{\varphi\proves\psi\hskip4mm\varphi\proves\vartheta}{\varphi\proves\psi\wedge\vartheta}$
\\[2mm]
$\dd\bot\proves\bot$ \> $\dd(\varphi\vee\psi)\proves\dd\varphi\vee\dd\psi$ \> $\infrule{\varphi\proves\psi}{\dd\varphi\proves\dd\psi}$\\
$\top\proves\bb\top$ \> $\bb\varphi\wedge\bb\psi\proves\bb(\varphi\wedge\psi)$ \> $\infrule{\varphi\proves\psi}{\bb\varphi\proves\bb\psi}$
\\
$\top\proves\ttdown\bot$ \> $\ttdown\varphi\wedge\ttdown\psi\proves\ttdown(\varphi\vee\psi)$ \> $\infrule{\varphi\proves\psi}{\ttdown\psi\proves\ttdown\varphi}$
\\
$\top\proves\bot\rfspoon\varphi$ \> $\top\proves\varphi\rfspoon\top$
\\[2mm]
$\varphi\vee\psi\rfspoon\vartheta\proves(\varphi\rfspoon\vartheta)\wedge(\psi\rfspoon\vartheta)$
\hskip6mm
$(\vartheta\rfspoon\varphi)\wedge(\vartheta\rfspoon\psi)\proves\vartheta\rfspoon\varphi\wedge\psi$
\\[2mm]
$\infrule{\proves\varphi}{\overline{\top\proves\varphi}}$ \> $\infrule{\varphi\proves\psi}{\proves\varphi\rfspoon\psi}$
\>
$\infrule{\psi\proves\varphi\hskip4mm\vartheta\proves\chi}{\varphi\rfspoon\vartheta\proves\psi\rfspoon\chi}$
\end{tabbing}
\end{table}
For a detailed discussion on the algebraic semantics we refer the reader to \cite{dfnmlA}. Our interest in the current article is more with relational semantics in sorted frames, briefly reviewed from \cite{duality2,dfnmlA} in the next section.

\subsection{Sorted Residuated Frames and Models}
\label{frames section}
\subsubsection{Relational Structures}
\begin{defn}\label{frame defn}
By a (relational) frame we mean a structure $\mathfrak{F}=(s,Z,I,(R_j)_{j\in J},\sigma)$, where $s=\{1,\partial\}$ is a set of sorts, $Z=(Z_t)_{t\in s}$ is a nonempty sorted set ($Z_t\neq\emptyset$, for each $t\in s$), where we make no assumption of disjointness of sorts, $I\subseteq\prod_{t\in s}Z_t$ is a distinguished sorted relation, $\sigma$ is a sorting map on $J$ with  $\sigma(j)\in s^{n(j)+1}$ and $(R_j)_{j\in J}$ is a family of sorted relations such that if $\sigma(j)=(j_{n(j)+1};j_1,\ldots,j_{n(j)})$, then $R_j\subseteq Z_{j_{n(j)+1}}\times\prod_{k=1}^{n(j)}Z_{j_k}$. 

The {\em sort $\sigma(R_j)$ (or just $\sigma(j)$, or $\sigma_j$) of the relation} $R_j$ is the tuple $\sigma(j)=(j_{n(j)+1};j_1\cdots j_{n(j)})$. 
\end{defn}

We often display the sort of a relation as a superscript, as in $\mathfrak{F}=(s,Z,I,(R^{\sigma(j)}_j)_{j\in J})$. For example, $R^{11}, T^{\partial 1\partial}$ designate sorted relations $R\subseteq Z_1\times Z_1$ and $T\subseteq Z_\partial\times Z_1\times Z_\partial$. 

In the intended application of the present article the frame relations considered are $R^{11}_\Diamond, R^{\partial\partial}_\Box$, $R^{\partial 1}_\tdown$ and $R^{\partial 1\partial}_\ra$, but we use $T$ for the latter, or $T^{\partial 1\partial}$ (displaying its sort), in order to make it easier to relate to results obtained in \cite{choiceFreeHA,dePaiva-Bierman2000}. 

Sorted frames collapse to classical Kripke frames when $Z_1=Z_\partial$ and $I$ is the identity relation (consult \cite[Remark~3.4, Remark~3.9]{dfnmlA} for details).
  
\subsubsection{The Underlying Polarity of a Frame -- The Lattices of Stable and Co-stable Sets}
The relation $I$ generates a residuated pair $\lambda:\powerset(Z_1)\leftrightarrows\powerset(Z_\partial):\rho$, defined as usual by
\[
\lambda(U)=\{y\in Z_\partial\midsp\exists x\in Z_1(xIy\wedge x\in U)\}\hskip1cm
\rho(V)=\{x\in Z_1\midsp\forall y\in Z_\partial(xIy\lra y\in V)\}.
\]
We may also use the notation $\sub{\largediamond}{I} U$ for $\lambda U$ and $\sub{\lbbox}{I} V$ for $\rho V$, as we have often done in previous published work. 

The complement of $I$ will be designated by $\upv$ and we refer to it as the {\em Galois relation of the frame}. It generates a Galois connection $(\;)\rperp:\powerset(Z_1)\leftrightarrows\powerset(Z_\partial)^{\rm op}:\rperp(\;)$ defined by
\[
U\rperp=\{y\in Z_\partial\midsp\forall u\in Z_1 (u\in U\lra u\upv y)\}\hskip1cm
\rperp V=\{x\in Z_1\midsp\forall y\in Z_\partial(y\in V\lra x\upv y)\}.
\]

Observe that the closure operators generated by the residuated pair and the Galois connection are identical, i.e. $\rho\lambda U=\rperp(U\rperp)$ and $\lambda\rho V=(\rperp V)\rperp$. This follows from the fact that $U\rperp=\sub{\largesquare}{I}(-U)$ and ${}\rperp V=\sub{\lbbox}{I}(-V)$. 

To simplify, we often use a priming notation for both Galois maps $(\;)\rperp$ and $\rperp(\;)$, i.e. we let $U'=U\rperp$, for $U\subseteq Z_1$, and $V'=\rperp V$, for $V\subseteq Z_\partial$. Hence $U''=\rperp(U\rperp)=\rho\lambda U$ and $V''=(\rperp V)\rperp=\lambda\rho V$. 

\begin{defn}\label{Galois set lattice}
The complete lattice of all {\em Galois stable} sets $Z_1\supseteq U=U''$ will be designated by $\mathcal{G}(Z_1)$ and the complete lattice of all {\em Galois co-stable} sets $Z_\partial\supseteq V=V''$ will be similarly denoted by $\mathcal{G}(Z_\partial)$. We refer to Galois stable and co-stable sets as {\em Galois sets}.
\end{defn} 

Note that each of $Z_1, Z_\partial$ is a Galois set, but the empty set need not be Galois. Clearly, the Galois connection is a dual isomorphism $(\;)':\gpsi\iso\gphi^\partial:(\;)'$. 

For an element $u$ in either $Z_1$ or $Z_\partial$ and a subset $W$, respectively of $Z_\partial$ or $Z_1$, we write $u|W$, under a well-sorting assumption, to stand for either $u\upv W$ (which stands for $u\upv w$, for all $w\in W$), or $W\upv u$ (which stands for $w\upv u$, for all $w\in W$), where well-sorting means that either $u\in Z_1, W\subseteq Z_\partial$, or $W\subseteq Z_1$ and $u\in Z_\partial$, respectively. Similarly for the notation $u|v$, where $u,v$ are elements of different sort. We occasionally decorate the subset relation with a sort subscript, writing for example $B\subseteq_\partial D$ when $B,D\in\gphi$ and $B\subseteq D$.

A preorder relation is defined on each of $Z_1,Z_\partial$ by $u\preceq w$ iff $\{u\}'\subseteq\{w\}'$. We call a frame {\em separated} if $\preceq$ is in fact a partial order $\leq$. For an element $u$ (of either $Z_1$ or $Z_\partial$) we write $\Gamma u$ for the set of elements $\preceq$-above it. We hereafter assume that frames are separated.

Sets $\Gamma w$ and $\{w\}'$ will be referred to as {\em principal elements}.  $\Gamma w$ will be referred to as a {\em closed element} and $\{w\}'$ as an {\em open element}. 

The following basic facts will be often used without reference to  Lemma~\ref{basic facts}.
\begin{lemma}
\label{basic facts}
Let $\mathfrak{F}=(s,Z,I,(R_j)_{j\in J},\sigma)$ be a  frame, $u\in Z=Z_1\cup Z_\partial$ and $\upv$ the Galois relation of the frame. Let $v|G$ refer to either $G\upv v$, if $G\in\gpsi, v\in Z_\partial$, or $v\upv G$, if $v\in G_1$ and $G\in\gphi$.
\begin{enumerate}
\item $\upv$ is increasing in each argument place (and thereby its complement $I$ is decreasing in each argument place).
\item $(\Gamma u)'=\{u\}'$ and $\Gamma u=\{u\}^{\prime\prime}$ is a Galois set.
\item Galois sets are increasing, i.e. $u\in G$ implies $\Gamma u\subseteq G$.
\item For a Galois set $G$, $G=\bigcup_{u\in G}\Gamma u$.
\item For a Galois set $G$, $G=\bigvee_{u\in G}\Gamma u=\bigcap_{v|G}\{v\}'$.
\item For a Galois set $G$ and any set $W$, $W^{\prime\prime}\subseteq G$ iff $W\subseteq G$.
\end{enumerate}
\end{lemma}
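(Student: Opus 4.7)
The strategy is to prove the six items in the listed order, since each uses the previous ones. The whole proof rests on two very standard facts about Galois connections, namely that $(\;)''$ is a closure operator on $\powerset(Z_1)$ (and likewise on $\powerset(Z_\partial)$) and that Galois sets are exactly the fixed points of this closure, together with the definition of the preorder $u\preceq w\Leftrightarrow\{u\}'\subseteq\{w\}'$, which is tailored precisely so that it interacts correctly with $\upv$.

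For (1), the monotonicity of $\upv$ in its left argument is just the unfolding of $u\preceq u'\Leftrightarrow\{u\}'\subseteq\{u'\}'$: if $u\upv v$, then $v\in\{u\}'\subseteq\{u'\}'$, hence $u'\upv v$. The argument on the right is symmetric using the preorder on $Z_\partial$. For (2), the inclusion $(\Gamma u)'\subseteq\{u\}'$ is immediate from $u\in\Gamma u$, while the reverse inclusion uses precisely that $w\in\Gamma u$ means $\{u\}'\subseteq\{w\}'$, so any $v\in\{u\}'$ lies in $\{w\}'$. Taking Galois duals of $(\Gamma u)'=\{u\}'$ yields $(\Gamma u)''=\{u\}''$. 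Finally, to identify $\Gamma u$ with $\{u\}''$ one observes that $w\in\{u\}''$ is literally the statement $\{u\}'\subseteq\{w\}'$, i.e.\ $u\preceq w$; so $\Gamma u=\{u\}''$ is a Galois set.

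Items (3)--(6) are then routine applications of the closure-operator view. For (3), if $u\in G$ and $G$ is Galois, monotonicity of $''$ gives $\Gamma u=\{u\}''\subseteq G''=G$; equivalently, this also follows from the fact established in (1), namely that $u\preceq w$ preserves $\upv$-relatedness, together with $G=\rperp(G')$. Item (4) follows by combining (3) (for $\supseteq$) with reflexivity of $\preceq$ (for $\subseteq$). For (5), recall that arbitrary joins in $\gpsi$ are computed as $\bigvee_i G_i=\bigl(\bigcup_i G_i\bigr)''$, so $\bigvee_{u\in G}\Gamma u=\bigl(\bigcup_{u\in G}\Gamma u\bigr)''=G''=G$ by (4), and the intersection formula $G=\bigcap_{v|G}\{v\}'$ is just the rewriting of $G=\rperp(G')$ using the principal-element description $\{v\}'$. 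Item (6) is the characteristic adjunction property of a closure operator: $\Rightarrow$ uses extensivity $W\subseteq W''$, and $\Leftarrow$ uses monotonicity of $''$ together with $G''=G$.

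The only step that requires a touch of care is identifying $\Gamma u$ with $\{u\}''$ in (2): one has to notice that the defining condition of $\{u\}''$ unfolds to exactly the defining condition of $\preceq$, and this is where the definition of the preorder earns its keep. Everything else is mechanical Galois-connection bookkeeping, and nothing uses separation of the frame (separation would only be needed if we wanted $\Gamma u=\Gamma w\Rightarrow u=w$).
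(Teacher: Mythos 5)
Your proof is correct and is exactly the ``simple calculation'' the paper alludes to (deferring the details to an external reference): the only step the paper spells out, namely $\bigcup_{u\in G}\Gamma u\subseteq G$ via claim 3 plus reflexivity of $\preceq$ for the converse inclusion, is handled the same way in your item (4), and your identification of $\Gamma u$ with $\{u\}''$ by unfolding the definition of $\preceq$ is the intended key observation. One cosmetic caution: in item (1) you use $u'$ as the name of a second element, which collides with the paper's priming notation for the Galois maps --- prefer $u_1$ throughout.
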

\begin{proof}
  By simple calculation. Proof details are included in \cite[Lemma 2.2]{sdl-exp}.  For claim 4, $\bigcup_{u\in G}\Gamma u\subseteq G$ by claim 3 (Galois sets are upsets). 
\end{proof}

We let $\vec{u}[\;]_k$ be the vector with a hole (or just a place-holder) at the $k$-th position and write $u[w]_k$ either to display the element at the $k$-th place, or to designate the result of filling the $k$-th place of $u[\;]_k$, or to denote the result of replacing the element $u_k$ in $\vec{u}$ by the element $w$.

For $1\leq k\leq n$, the {\em $k$-th section of an $(n+1)$-ary relation $S$} is the set $wS\vec{u}[\;]_k$. For $k=n+1$ the section is simply the set $S\vec{u}=\{w\midsp wS\vec{u}\}$. 

\subsubsection{Galois Dual and Double Dual Relations -- Smoothness and Frame Axioms}
\begin{defn}
\label{Galois dual relation}
For a sorted $(n+1)$-ary frame relation $R_j$, its {\em Galois dual relation} $R^\prime_j$ is defined by $R_j^\prime u_1\cdots u_n=(R_ju_1\cdots u_n)'$.
\end{defn} 
For the relations $R^{11}_\Diamond, R^{\partial\partial}_\Box, R^{\partial 1}_\tdown$ and $T^{\partial 1\partial}$ we will also have use of their double-duals, defined below.

\begin{defn}\label{double dual relations}
The {\em double dual relations} $R''_\Diamond, R''_\Box, R''_\tdown$ and $R^{111}$ of $R^{11}_\Diamond, R^{\partial\partial}, R^{\partial 1}_\tdown$ and $T^{\partial 1\partial}$, respectively, are defined as follows.
\begin{enumerate}
\item[($R''_\Diamond$)] Set $yR_\Diamond''=\lperp(yR_\Diamond')=(yR'_\Diamond)'$ (and recall that the Galois dual $R_\Diamond'$ is defined from $R_\Diamond\subseteq Z_1\times Z_1$ by setting $R_\Diamond'z=(R^{11}_\Diamond z)\rperp=(R^{11}_\Diamond z)'$). 
\item[($R''_\Box$)] Similarly, for the {\em double dual} $R_\Box''$ defined from $R_\Box\subseteq Z_\partial\times Z_\partial$ by first letting $R_\Box'y=\lperp(R_\Box y)$ be the Galois dual relation, then setting $xR_\Box''=\lperp(xR_\Box')$.
\item[($R''_\tdown$)] Similarly for the double dual $R''_\tdown$ defined by first letting $R'_\tdown x=(R^{\partial 1}x)'$, then setting for any $z\in Z_1$, $zR''_\tdown=(zR'_\tdown)'\subseteq Z_1\times Z_\partial$.
\item[($R^{111}$)]\ \ Define  $R^{111}\subseteq Z_1\times(Z_1\times Z_1)$ from the frame relation $T^{\partial 1\partial}$ by first setting $T'zv=(T^{\partial 1\partial}zv)'$, then permuting arguments to define $vR^{\partial 11}zx$ iff $xT^{11\partial}zv$ and lastly, taking the Galois dual relation of $R^{\partial 11}$, by setting $R^{111}zx=(R^{\partial 1 1}zx)'$.
\end{enumerate}
\end{defn}

\begin{rem}[Double duals in Kripke Frames]
\label{double-duals in Kripke}
  If the frame $\mathfrak{F}=(s,Z,I,(F_j)_{j\in J})$ is a Kripke frame, i.e. $Z_1=Z_\partial$ and $I\subseteq Z_1\times Z_\partial$ is the identity relation (hence $x\upv y$ iff $x\neq y$) it was pointed out in \cite[Remark~3.4,Remark~3.9]{dfnmlA} that the Galois connection is set-complementation, hence every subset is Galois. Then also the Galois dual relation is a complement relation, e.g. $R'_\Diamond z=-R^{11}_\Diamond z$. It then follows that double-dual relations are identical to the original relations. Indeed,  
  \begin{tabbing}
 $yR''_\Diamond v$ \hskip3mm\= iff\hskip2mm\= for any $z$, if $yR'_\Diamond z$, then $z\neq v$\\
 \>iff\> for any $z$, if $z=v$, then it is not the case that $y$ is in $-R^{11}_\Diamond z$\\
 \>iff\> $yR^{11}_\Diamond v$
  \end{tabbing}
  Similar arguments apply to the other double-dual relations, with $R^{111}$ and $T^{\partial 1\partial}$ only differing by the permutation of arguments involved in the definition of $R^{111}$.
  
  This has the further consequence that the semantics we define in Table~\ref{sat} collapses to classical semantics, with co-interpretation $y\dforces\varphi$ being simply $y\not\forces\varphi$ and boxes and diamonds being interpreted classically. See also Section~\ref{kripke case} for more on this issue and on its significance for our correspondence approach.
\end{rem}

\begin{defn}
\label{smooth defn}
Call a frame relation $R_j$ {\em smooth} iff every section of its Galois dual relation $R_j^\prime$ is a Galois set (stable, or co-stable, according to the sort $\sigma(R_j)$ of the relation). 
\end{defn}

Hereafter, when considering a structure $\mathfrak{F}=(s,Z,I,(R_j)_{j\in J},\sigma)$ we always assume that the frame is separated and that all frame relations are smooth.
Since no other kind of frame (relational structure) will be considered in this article, we shall refer to relational structures $\mathfrak{F}=(s,Z,I,(R_j)_{j\in J},\sigma)$ simply as frames, or sorted residuated frames.

Table~\ref{frames axioms}  lists a minimal set of frame axioms that we need, the separation and smoothness axioms.

\begin{table}[!htbp]
\caption{Axiomatization of Frames $\mathfrak{F}=(s,Z,I,(R_j)_{j\in J},\sigma)$}
\label{frames axioms}
($R_j\in\{R^{\partial\partial}_\Box,R^{11}_\Diamond,R^{\partial 1}_\tdown,T^{\partial 1\partial}\}$)
\begin{tabbing}
  (F1)\hskip0.5cm \=  The frame is separated. \\
  (F2)  \>   Every frame relation $R_j$ is smooth.
\end{tabbing}
\hrule
\end{table}

For convenience only we will also consider adding the following two axioms.
\begin{tabbing}
(F0)\hskip0.5cm \= 
  The frame relation $I$  is quasi-serial, i.e. the conditions\\
  \>  $\forall x\in Z_1\exists y\in Z_\partial\; xIy$ and $\forall y\in Z_\partial\exists x\in Z_1\; xIy$ hold.\\
(F3)\> Every frame relation is increasing in the left (first) argument place and decreasing \\
  \> in every other argument place.
\end{tabbing}
 Note that (F0) enforces that the empty set is a Galois set. All four axioms are canonical (they hold in the canonical frame of a modal lattice) and they are part of the axiomatization of frames in the topological duality argument of \cite{duality2}.

\subsubsection{Relational Semantics for DfML}
A relational model $\mathfrak{M}=(\mathfrak{F},V)$ consists of a frame $\mathfrak{F}$ and a sorted valuation $V=(V^1,V^\partial)$ of propositional variables, interpreting a variable $p$ as a Galois stable set $V^1(p)\in\gpsi$ and co-interpreting it as a Galois co-stable set $V^\partial(p)=V^1(p)\rperp\in\gphi$.  Interpretations and co-interpretations determine each other in the sense that for any sentence $\varphi\in\mathcal{L}_\tau$, if $\val{\varphi}\in\mathcal{G}(Z_1)$ is an interpretation extending a valuation $V^1$ of propositional variables as stable sets, then $\val{\varphi}\rperp=\yvval{\varphi}\in\mathcal{G}(Z_\partial)$  is the co-interpretation extending the valuation $V^\partial$.

Satisfaction ${\forces}\subseteq Z_1\times\mathcal{L}_\tau$ and co-satisfaction (refutation) ${\dforces}\subseteq Z_\partial\times\mathcal{L}_\tau$ relations are then defined as expected, by $Z_1\ni x\forces\varphi$ iff $x\in\val{\varphi}$ and $Z_\partial\ni y\dforces\varphi$ iff $y\in\yvval{\varphi}$. Satisfaction and co-satisfaction determine each other, in the sense that $x\forces\varphi$ iff $\forall y(y\dforces\varphi\lra x\upv y)$ and $y\dforces\varphi$ iff $\forall x(x\forces\varphi\lra x\upv y)$.
Therefore, for each operator it suffices to provide either its satisfaction, or its co-satisfaction (refutation) clause. We do this in Table \ref{sat}, in line with the principle of order-dual relational semantics introduced in \cite{odigpl}.

\begin{table}[!htbp]
\caption{(Co)Satisfaction relations}
\label{sat}
\begin{tabbing}
$x\forces p_i$\hskip8mm\=iff\hskip3mm\= $x\in V^1(p_i)$
\hskip3cm\= 
$y\dforces p_i$    \hskip8mm\=iff\hskip3mm\= $V^1(p_i)\upv y$     \\
$x\forces\top$ \>iff\> $x=x$ \>  $y\dforces\bot$\>iff\> $y=y$\\
$x\forces\varphi\wedge\psi$\>iff\> $x\forces\varphi$ and $x\forces\psi$ \> $y\dforces\varphi\vee\psi$ \>iff\> $y\dforces\varphi$ and $y\dforces\psi$\\
$x\forces \bb\varphi$ \>iff\> $\forall z\in Z_1\;(xR_\Box''z\lra z\forces\varphi)$
\>  $y\dforces \dd\varphi$ \hskip6mm\>iff\> $\forall v\in Z_\partial\;(yR_\Diamond''v\lra v\dforces\varphi)$
\\
$x\forces\ttdown\varphi$\>iff\> $\forall y\in Z_\partial(xR''_\tdown y\lra y\dforces\varphi)$ 
  \\
$x\forces\varphi\rfspoon\psi$ \>iff\> $\forall u,z\in Z_1(u\forces\varphi\wedge zR^{111}ux\lra z\forces\psi)$ 
\end{tabbing}
\hrule
\end{table}

\begin{rem}\label{alternative sat}
The  clause $x\forces\varphi\rfspoon\psi$ iff $\forall u\in Z_1\forall y\in Z_\partial(u\forces\varphi\;\wedge\; y\dforces\psi\lra xT'uy)$ was also presented in \cite[Table~2]{dfnmlA} as an alternative to modeling implication and a proof of equivalence with the clause presented in Table~\ref{sat} was given in \cite[Proposition~3.11]{dfnmlA}. For the weak negation operator $\ttdown$, the Galois dual $R'_\tdown=\bot$ of the frame relation $R^{\partial 1}_\tdown$ was used in \cite[Section~3.2]{choiceFreeStLog}, yielding the satisfaction clause $x\forces\ttdown\varphi$ iff $\forall z\in Z_1(z\forces\varphi\lra x\bot z)$, but the two clauses can be easily verified to be equivalent by the interested reader.
\end{rem}

\subsection{Sorted Powerset Algebras and Full Complex Algebras of Frames}
\label{dual algebras section}
While the language of distribution-free modal logic (with negation and implication) is the language of the full complex algebras of frames (which are implicative modal lattices as these were defined in Section~\ref{lat and log section}), 
the {\em reduction language} that we use in our generalized Sahlqvist -- Van Benthem result is the language of the dual sorted powerset algebras of frames, the structure of which is detailed in this section.

Given a frame $\mathfrak{F}=(s,Z,I,(R_j)_{j\in J},\sigma)$, each relation $R_j\subseteq Z_{j_{n(j)+1}}\times\prod_{k=1}^{n(j)}Z_{j_k}$ generates a sorted image operator, defined as in the Boolean case, except for the sorting
\begin{align}\label{sorted image ops}
F_j(\vec{W})&=\;\{w\in Z_{i_{n(j)+1}}\midsp \exists \vec{w}\;(wR_j\vec{w}\wedge\bigwedge_{s=1}^{n(j)}(w_s\in W_s))\} &=\; \bigcup_{\vec{w}\in\vec{W}}R_j\vec{w}.
\end{align}

\begin{defn}\label{sorted powerset algebra defn}
The {\em dual sorted powerset algebra} of a frame $\mathfrak{F}=(s,Z,I,(R_j)_{j\in J},\sigma)$ is the algebra $\mathbf{P}=((\;)':\powerset(Z_1)\leftrightarrows\powerset(Z_\partial):(\;)',(F_j)_{j\in J} )$, where for each $j\in J$, $F_j$ is the sorted image operator  generated by the frame relation $R_j$ by~\eqref{sorted image ops}. 
\end{defn}

Equation~\eqref{sorted image ops} specializes in our case of interest to \eqref{ldvert}--\eqref{ltright}, where $U\subseteq Z_1, V\subseteq Z_\partial$,
\begin{eqnarray}
\ldvert U &=& \{x\in Z_1\midsp\exists z\in Z_1(xR^{11}_\Diamond z\wedge z\in U)\}\label{ldvert}\\
\ldminus V &=& \{y\in Z_\partial\midsp \exists v\in Z_\partial(yR^{\partial\partial}_\Box v\wedge v\in V)\}\label{ldminus}\\
\ltdown U &=& \{y\in Z_\partial\midsp\exists x\in Z_1(yR^{\partial 1}_\tdown x\wedge x\in U)\}\label{ltdown}\\
U{\largetriangleright} V&=&\{y\in Y\midsp\exists x,v(x\in U\;\wedge\;v\in V\;\wedge\;yTxv)\}, \label{ltright}
\end{eqnarray}
resulting in the dual powerset algebra $\mathbf{P}=((\;)':\powerset(Z_1)\leftrightarrows\powerset(Z_\partial):(\;)', \ldminus,\ldvert,\ltdown,\ltright )$. 

The propositional language of the dual powerset algebra of a frame is displayed below
\begin{eqnarray*}
\mathcal{L}_1\ni\alpha,\eta,\zeta &=& P_i (i\in\mathbb{N})\;\midsp\top\midsp\bot\midsp \alpha\cap\alpha\midsp\alpha\cup\alpha\midsp{\diamondvert}\alpha\midsp \beta'
\\
\mathcal{L}_\partial\ni\beta,\delta,\xi &=& P^i(i\in\mathbb{N})\midsp\top\midsp\bot\midsp \beta\cap\beta\midsp\beta\cup\beta\midsp{\diamondminus}\beta
\midsp{\tdown}\alpha\midsp\alpha'\midsp \alpha\tright\beta 
\end{eqnarray*}
The language is interpreted in frames in a standard way, with satisfaction clauses based on equations \eqref{ldvert}--\eqref{ltright}. 

\begin{rem}
The language displayed is the language of the companion sorted modal logic of DfML, equipped with three unary and one binary (sorted) normal additive (diamond) operators. DfML can be embedded in a full and faithful way in its companion sorted modal logic (consult Theorem~\ref{full abstraction of trans in sorted modal}), with a syntactic translation that parallels the representation of operators in the frame. 

For our correspondence argument we will work with an extension of the sorted modal language, as we will in effect have use for residuals and/or Galois dual operators of the above image operators, as well. In addition, right residuals in the powerset algebra restrict to operations on Galois sets, a fact of which we make use in our correspondence argument.
\end{rem}

For a  subset $W$ of $Z_1$ or $Z_\partial$ let $\overline{W}=W''$ be its closure and
if $F_j$ is the (sorted) image operator generated by the frame relation $R_j$, let $\overline{F}_j$ be the closure of the restriction of $F_j$ to Galois sets (stable, or co-stable, according to sort).
\[
\xymatrix{
\prod_{k=1}^{n(j)}\powerset(Z_{j_k})\ar^{F_j}[rr]\ar^{(\;)''}@<0.5ex>@{->>}[d] && \powerset(Z_{j_{n(j)+1}})\ar^{(\;)''}@<0.5ex>@{->>}[d]\\
\prod_{k=1}^{n(j)}\mathcal{G}(Z_{j_k})\ar^{\overline{F}_j}[rr]\ar@<0.5ex>@{^{(}->}[u] && \mathcal{G}(Z_{j_{n(j)+1}})\ar@<0.5ex>@{^{(}->}[u]
}
\]

\begin{thm}
\label{dist from section stability}
The sorted operator $\overline{F}_j:\prod_{k=1}^{n(j)}\mathcal{G}(Z_{j_k})\lra \mathcal{G}(Z_{j_{n(j)+1}})$ distributes over arbitrary joins of Galois sets, in each argument place, returning a join in $\mathcal{G}(Z_{j_{n(j)+1}})$.
\end{thm}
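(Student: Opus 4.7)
The plan is to show, for a fixed argument position $k$, that
\[
\overline{F}_j(G_1,\ldots,\bigvee_\alpha G^k_\alpha,\ldots,G_{n(j)}) \;=\; \bigvee_\alpha \overline{F}_j(G_1,\ldots,G^k_\alpha,\ldots,G_{n(j)})
\]
whenever the $G_s$ and $G^k_\alpha$ are Galois sets of the appropriate sort. First I would unfold the joins in $\mathcal{G}(Z_{j_k})$ and $\mathcal{G}(Z_{j_{n(j)+1}})$ as $\bigvee H_\beta = (\bigcup H_\beta)''$, and exploit the standard identity $(\bigcup H_\beta'')'' = (\bigcup H_\beta)''$ (which is just idempotency and monotonicity of closure). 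Since the image operator $F_j$ is given by $F_j(\vec{W}) = \bigcup_{\vec{w}\in\vec{W}} R_j\vec{w}$, it already distributes over arbitrary unions in each argument on the nose. Combining these two remarks reduces the claim to the single identity
\[
(F_j(G_1,\ldots,W'',\ldots,G_{n(j)}))'' \;=\; (F_j(G_1,\ldots,W,\ldots,G_{n(j)}))''
\]
for an arbitrary set $W$ in the $k$-th slot, with Galois sets in the remaining slots.

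The $\supseteq$ inclusion is free from monotonicity of $F_j$ and of $(\;)''$. The substantive inclusion $\subseteq$ reduces, by taking closure, to showing
\[
F_j(G_1,\ldots,W'',\ldots,G_{n(j)}) \;\subseteq\; (F_j(G_1,\ldots,W,\ldots,G_{n(j)}))''.
\]
Using Lemma~\ref{basic facts}(6), it suffices to check that every $w$ in the Galois dual $(F_j(G_1,\ldots,W,\ldots,G_{n(j)}))'$ also lies in $(F_j(G_1,\ldots,W'',\ldots,G_{n(j)}))'$. Unfolding $(\;)'$ through the union, this membership says precisely that $w\in R'_j w_1\cdots w_n$ for all well-sorted choices with $w_k$ ranging over the appropriate set.

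The crucial step is then to fix $w$ and $w_s\in G_s$ for $s\neq k$ and consider the section
\[
S \;=\; \{\,u_k \;:\; w\in R'_j w_1\cdots w_{k-1}\,u_k\,w_{k+1}\cdots w_{n(j)}\,\}.
\]
This is exactly a $k$-th section of the Galois dual relation $R'_j$, hence a Galois set by the smoothness axiom (F2). The hypothesis $w\in (F_j(G_1,\ldots,W,\ldots,G_{n(j)}))'$ gives $W\subseteq S$, and since $S$ is Galois we conclude $W''\subseteq S$; this is exactly the desired membership for all $w_k\in W''$. Hence the distribution identity holds in position $k$, and since $k$ was arbitrary the theorem follows.

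The main obstacle I expect is not the algebraic manipulation itself but getting the sorts and the direction of the Galois dual correct. Smoothness is deployed at a single, pinpointed place: it is precisely the condition that makes the passage $W\rightsquigarrow W''$ invisible to $F_j$ modulo closure, and without it the image operator would only be monotone on $\mathcal{G}$ rather than join-preserving.
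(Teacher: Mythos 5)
Your proof is correct. The paper gives no in-line argument for this theorem --- it simply cites \cite[Theorem~3.12]{duality2} and remarks that the smoothness of $R_j$ is what drives the result --- and your argument is precisely the standard one behind that citation: reduce, via the on-the-nose union-distribution of the image operator $F_j$ and the identity $(\bigcup H_\beta'')''=(\bigcup H_\beta)''$, to the single closure-invariance identity $(F_j(\vec{G}[W'']_k))''=(F_j(\vec{G}[W]_k))''$, and then absorb the closure $W\mapsto W''$ using the fact that the $k$-th sections of the Galois dual relation $R'_j$ are Galois sets (axiom (F2)) together with Lemma~\ref{basic facts}(6). Smoothness is deployed at exactly the point the cited proof indicates it must be, so I have nothing to correct.
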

\begin{proof}
The claim was proven in \cite[Theorem~3.12]{duality2}, using the smoothness property of the frame relation $R_j$.
\end{proof}

Let $\mathbf{G}=((\;)':\gpsi\iso\gphi^\partial:(\;)', (\overline{F}_j)_{j\in J})$ be the sorted algebra of Galois sets and observe that by Theorem~\ref{dist from section stability} $\overline{(\;)}:\mathbf{P}\lra\mathbf{G}$ is a homomorphism of sorted algebras, indeed an epimorphism, taking intersections to intersections, unions to joins (closures of unions) and normal additive operators $F_j$ to normal additive operators $\overline{F}_j$. 

By the complete distribution property, $\overline{F}_j$ is residuated at each argument place and, from residuation, it follows that $\overline{F}_j$ is normal, i.e. $\overline{F}_j(\vec{G}[\emptyset]_k)=\emptyset$.

Note that for each $j\in J$ the sorted set operator $F_j:\prod_{k=1}^{n(j)}\powerset(Z_{j_k})\lra\powerset(Z_{j_{n(j)+1}})$ in the (sorted) powerset algebra $\mathbf{P}$ is completely additive (it distributes over arbitrary unions) in each argument place. Hence it is residuated, i.e. for each $1\leq k\leq n(j)$ there exists a set map $G_{j,k}$ such that $F_j(\vec{W}[V]_k)\subseteq U$ iff $V\subseteq G_{j,k}(\vec{W}[U]_k)$ which is defined by equation~\eqref{k-residual}
\begin{equation}\label{k-residual} 
G_{j,k}(\vec{W}[U]_k)=\bigcup\{V\subseteq Z_{j_k}\midsp F_j(\vec{W}[V]_k)\subseteq U\}.
\end{equation}

We let $\lbbox_1,\lbbox_\partial$ be the right residuals of $\ldvert,\ldminus$, respectively. We may occasionally drop the subscripts $1,\partial$, letting context determine which residual is meant.

\begin{thm}\label{preservation of residuals}
If $G_{j,k}$ is the right residual of $F_j$ at the $k$-th argument place, then its restriction to Galois sets is the right residual $\overline{G}_{j,k}$ of $\overline{F}_j$ at the $k$-th argument place. Letting $P,Q,E$ range over Galois sets (and $\vec{P},\vec{Q}$ over tuples thereof) the right $k$-residual $\overline{G}_{j,k}$  of $\overline{F}_j$ can be defined in any of the equivalent ways in equation \eqref{residuals def}
  \begin{equation}\label{residuals def}
  \begin{array}{ccl}
  \overline{G}_{j,k}(\overline{P}[Q]_k) &=& \bigcup\{E\in\mathcal{G}(Z_{j_k})\midsp F_j(\vec{P}[E]_k)\subseteq Q\} \\
  &=&  \bigcup\{\Gamma u\in\mathcal{G}(Z_{j_k})\midsp F_j(\vec{P}[\Gamma u]_k)\subseteq Q\}  \\
  &=& \{u\in Z_{j_k}\midsp F_j(\vec{P}[\Gamma u]_k)\subseteq Q\}.
  \end{array}
  \end{equation}
\end{thm}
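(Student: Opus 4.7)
My plan is to pivot on Form~3 of \eqref{residuals def}: I will identify it directly with $\overline{G}_{j,k}(\vec{P}[Q]_k)$ via the adjunction, and then derive Forms~2 and 1 as routine rewrites. The essential preparatory observation is that for Galois $E,Q$, the condition $\overline{F}_j(\vec{P}[E]_k)\subseteq Q$ is equivalent to $F_j(\vec{P}[E]_k)\subseteq Q$: since $\overline{F}_j(\vec{P}[E]_k)=F_j(\vec{P}[E]_k)''$ and $Q=Q''$, this is Lemma~\ref{basic facts}(6). Theorem~\ref{dist from section stability} additionally provides that $E\mapsto\overline{F}_j(\vec{P}[E]_k)$ preserves arbitrary joins in the complete lattice $\mathcal{G}(Z_{j_k})$, so by the adjoint functor theorem for complete lattices a right adjoint $\overline{G}_{j,k}(\vec{P}[-]_k)$ exists, characterized by $\overline{F}_j(\vec{P}[E]_k)\subseteq Q\Longleftrightarrow E\subseteq\overline{G}_{j,k}(\vec{P}[Q]_k)$ for Galois $E$.

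The proof of Form~3 $=\overline{G}_{j,k}(\vec{P}[Q]_k)$ proceeds by double inclusion. If $F_j(\vec{P}[\Gamma u]_k)\subseteq Q$, take $E=\Gamma u$ (Galois by Lemma~\ref{basic facts}(2)); combining the reduction above with the adjunction gives $\Gamma u\subseteq\overline{G}_{j,k}(\vec{P}[Q]_k)$, hence $u$ is inside. Conversely, if $u\in\overline{G}_{j,k}(\vec{P}[Q]_k)$, then $\Gamma u\subseteq\overline{G}_{j,k}(\vec{P}[Q]_k)$ because Galois sets are upsets (Lemma~\ref{basic facts}(3)); running the adjunction backwards and unwinding the reduction yields $F_j(\vec{P}[\Gamma u]_k)\subseteq Q$. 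A welcome byproduct of this argument is that Form~3 is certified to be a Galois set without any direct closure computation, which is the main technical point I want to circumvent.

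The remaining equalities are bookkeeping using Lemma~\ref{basic facts}. For Form~3 $=$ Form~2 in both directions: one uses $u\in\Gamma u$; the other, that $v\in\Gamma u$ implies $\Gamma v\subseteq\Gamma u$, so monotonicity of $F_j$ in the $k$-th argument propagates the bound $\subseteq Q$ from $u$ to $v$. For Form~2 $\subseteq$ Form~1 one notes $\Gamma u$ is Galois; for Form~1 $\subseteq$ Form~2, any Galois $E$ with $F_j(\vec{P}[E]_k)\subseteq Q$ contains $\Gamma u$ for every $u\in E$ (upset property), so $F_j(\vec{P}[\Gamma u]_k)\subseteq F_j(\vec{P}[E]_k)\subseteq Q$ by monotonicity, exhibiting $u$ inside Form~2. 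Comparing Form~1 with the powerset formula \eqref{k-residual} then shows that $\overline{G}_{j,k}$ arises as the restriction of $G_{j,k}$ to Galois inputs, completing the theorem.
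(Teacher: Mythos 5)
Your treatment of the three displayed forms is sound: deriving $\overline{G}_{j,k}$ from the adjoint functor theorem via Theorem~\ref{dist from section stability}, identifying it with Form~3 by the adjunction together with Lemma~\ref{basic facts}(2),(3),(6), and then shuttling between Forms~1--3 using the upset property and monotonicity of $F_j$ all checks out. (For what it is worth, the paper itself gives no inline argument here --- it defers to \cite[Theorem~3.7]{dfnmlA} --- so there is no internal proof to compare against.)

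The gap is in your last sentence. The powerset residual \eqref{k-residual} is $G_{j,k}(\vec{P}[Q]_k)=\bigcup\{V\subseteq Z_{j_k}\midsp F_j(\vec{P}[V]_k)\subseteq Q\}$, a union over \emph{all} subsets $V$, whereas Form~1 unions only over Galois $E$. ``Comparing'' the two formulas gives only the inclusion $\mathrm{Form~1}\subseteq G_{j,k}(\vec{P}[Q]_k)$; for the converse you must show that a non-Galois witness $V$ contributes no new points, i.e.\ that $F_j(\vec{P}[V]_k)\subseteq Q$ forces $F_j(\vec{P}[V'']_k)\subseteq Q$ (equivalently, that $F_j(\vec{P}[\{v\}]_k)\subseteq Q$ already yields $F_j(\vec{P}[\Gamma v]_k)\subseteq Q$ for each $v\in V$). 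This is exactly where the smoothness hypothesis earns its keep: one needs the identity $(F_j(\vec{W}))''=\overline{F}_j(\vec{W}'')$ (the commuting square preceding Theorem~\ref{dist from section stability}, i.e.\ the full content of \cite[Theorem~3.12]{duality2}), since $\{v\}$ is not Galois and distribution over joins of Galois sets alone does not reach it. Under the optional axiom (F3) the step is immediate from monotonicity of the sections, but (F3) is not among the standing hypotheses (F1)--(F2), so the argument as written does not establish that the restriction of $G_{j,k}$ to Galois inputs is $\overline{G}_{j,k}$ rather than something strictly larger. Everything else in your proposal survives once this one implication is supplied.
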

\begin{proof}
Consult \cite[Theorem~3.7]{dfnmlA}.
\end{proof}

\begin{coro}
The (sorted) homomorphism $\overline{(\;)}:\mathbf{P}\lra\mathbf{G}$ preserves any residuation facts that obtain in $\mathbf{P}$. Moreover, if $F_j\dashv G_{j,k}$, then the restriction of $G_{j,k}$ to Galois sets returns a Galois set (i.e. the restriction is identical to the closure $\overline{G}_{j,k}$ of the restriction ) and $\overline{F}_j\dashv\overline{G}_{j,k}$.
\end{coro}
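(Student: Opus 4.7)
The proof proceeds by showing both claims reduce directly to Theorem~\ref{preservation of residuals}. The strategy is to unpack the three equivalent descriptions of $\overline{G}_{j,k}$ supplied by \eqref{residuals def} and then extract preservation of residuation as a formal consequence.

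First I would establish that the restriction of $G_{j,k}$ to Galois inputs returns a Galois set. Given Galois arguments $\vec{P}, Q$, equation \eqref{k-residual} describes $G_{j,k}(\vec{P}[Q]_k)$ as the union, over \emph{arbitrary} subsets $V$ of $Z_{j_k}$, of those $V$ with $F_j(\vec{P}[V]_k)\subseteq Q$. Theorem~\ref{preservation of residuals} shows that this very set admits the equivalent description $\{u\in Z_{j_k}\mid F_j(\vec{P}[\Gamma u]_k)\subseteq Q\}$, which is by construction the value $\overline{G}_{j,k}(\vec{P}[Q]_k)$ of the right $k$-residual of $\overline{F}_j$, hence a Galois set. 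Consequently, on Galois inputs the map $G_{j,k}$ already takes Galois values, i.e.\ the restriction of $G_{j,k}$ to Galois sets is identical to the closure $\overline{G}_{j,k}$ of its restriction, with no extra closure operation required.

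The second assertion, $\overline{F}_j\dashv\overline{G}_{j,k}$ in $\mathbf{G}$, is then precisely the residuation property used to characterise $\overline{G}_{j,k}$ in Theorem~\ref{preservation of residuals}. From this, preservation of residuation facts under $\overline{(\;)}:\mathbf{P}\lra\mathbf{G}$ follows formally: a residuation fact in $\mathbf{P}$ is an adjunction $F_j\dashv G_{j,k}$, i.e.\ the biconditional $F_j(\vec{W}[V]_k)\subseteq U \Leftrightarrow V\subseteq G_{j,k}(\vec{W}[U]_k)$ for all (unrestricted) sorted inputs. Restricting to Galois inputs and using that $\overline{(\;)}$ sends $F_j$ to $\overline{F}_j$ and, by the step just established, sends $G_{j,k}$ to $\overline{G}_{j,k}$, the biconditional becomes $\overline{F}_j(\vec{P}[E]_k)\subseteq Q \Leftrightarrow E\subseteq\overline{G}_{j,k}(\vec{P}[Q]_k)$, which is the residuation fact $\overline{F}_j\dashv\overline{G}_{j,k}$ in $\mathbf{G}$. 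Here I would also note that inclusion in $\mathbf{P}$ restricts to the order of $\mathbf{G}$ on Galois sets, since $\mathcal{G}(Z_{j_k})$ inherits its order from $\powerset(Z_{j_k})$, so the biconditional transfers without reinterpretation.

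The one point I would check carefully, and the only place where a subtlety could hide, is that the join in $\mathcal{G}(Z_{j_k})$ of a family of Galois sets is in general the closure of their union rather than the union itself; one must verify that the particular union appearing in \eqref{k-residual}, evaluated on Galois arguments, needs no further closure. This is exactly the content of Theorem~\ref{preservation of residuals}, so once that theorem is invoked the corollary is immediate and requires no separate computation.
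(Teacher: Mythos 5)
Your proposal is correct and follows the paper's own route: the paper offers no separate proof of this corollary, treating it as an immediate consequence of Theorem~\ref{preservation of residuals}, and your unpacking (the restriction of $G_{j,k}$ to Galois inputs coincides with the right residual $\overline{G}_{j,k}$ of $\overline{F}_j$, hence is Galois-valued and satisfies the adjunction) is exactly the intended reading. The only elided micro-step is that $F_j(\vec{P}[E]_k)\subseteq Q$ iff $\overline{F}_j(\vec{P}[E]_k)=(F_j(\vec{P}[E]_k))''\subseteq Q$ for Galois $Q$, which is Lemma~\ref{basic facts}(6) and is used throughout the paper without reference.
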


The Galois connection is a dual isomorphism of the complete lattices of stable and co-stable sets, $(\;)':\mathcal{G}(Z_1)\iso\mathcal{G}(Z_\partial)^\mathrm{op}:(\;)'$. This allows for extracting single-sorted operators $\overline{F}_j^1:\prod_{k=1}^{n(j)}\mathcal{G}(Z_1)\lra\mathcal{G}(Z_1)$ and $\overline{F}_j^\partial:\prod_{k=1}^{n(j)}\mathcal{G}(Z_\partial)\lra\mathcal{G}(Z_\partial)$, by composition with the Galois connection maps
\begin{equation}\label{2single-sorted}
  \overline{F}^1_j(A_1,\ldots,A_{n(j)})=
  \left\{
  \begin{array}{cl}
  \overline{F}_j(\ldots,\underbrace{A_k}_{j_k=1},\ldots,\underbrace{A'_r}_{j_r=\partial},\ldots) & \mbox{if } j_{n(j)+1}=1
  \\
  (\overline{F}_j(\ldots,\underbrace{A_k}_{j_k=1},\ldots,\underbrace{A'_r}_{j_r=\partial},\ldots))' & \mbox{if } j_{n(j)+1}=\partial.
  \end{array}
  \right.
\end{equation}
The dual operators $\overline{F}_j^\partial:\prod_{k=1}^{n(j)}\mathcal{G}(Z_\partial)\lra\mathcal{G}(Z_\partial)$ are similarly defined by
\begin{equation}\label{2single-sorted dual}
  \overline{F}^\partial_j(B_1,\ldots,B_{n(j)})=
  \left\{
  \begin{array}{cl}
  \overline{F}_j(\ldots,\underbrace{B_k}_{j_k=\partial},\ldots,\underbrace{B'_r}_{j_r=1},\ldots) & \mbox{if } j_{n(j)+1}=\partial
  \\
  (\overline{F}_j(\ldots,\underbrace{B_k}_{j_k=\partial},\ldots,\underbrace{B'_r}_{j_r=1},\ldots))' & \mbox{if } j_{n(j)+1}=1.
  \end{array}
  \right.
\end{equation}
Illustrating for a unary map $F$, we obtain $\overline{F}^1(A)=(\overline{F}^\partial(A'))'$ and conversely. Since $\overline{F}^\partial(B)=(F(B))''$, it follows that $\overline{F}^1(A)=(\overline{F}^\partial(A'))'=(F(A')'$.

In our case of interest, $\overline{F}^1_j$ are the maps defined on $A,C\in\gpsi$ by 
\begin{itemize}
\item
$\lbb A=(\ldminus A')'$ and ${\ldd}A=(\ldvert A)''$, for the modal operators,  
\item $\lttdown A=(\ltdown A)'$, for quasi-complementation, and 
\item $A\Ra C=(A\ltright C')'$, for implication. 
\end{itemize}
Normality and the complete (co)distribution properties of the operators $\lbb,\ldd,\lttdown$ and $\Ra$ are a consequence of Theorem~\ref{dist from section stability}.

\begin{defn}\label{full complex algebra defn}
For a frame $\mathfrak{F}=(s,Z,I,(R_j)_{j\in J},\sigma)$, its full complex algebra $L(\mathfrak{F})$ is defined as the normal lattice expansion $L(\mathfrak{F})=\mathfrak{F}^+=(\gpsi,\subseteq,\bigcap,\bigvee,\emptyset, Z_1,(\overline{F}^1_j)_{j\in J})$. In particular for the case of current interest, the full complex algebra is the algebra $\mathfrak{F}^+=(\gpsi,\subseteq,\bigcap,\bigvee,\emptyset, Z_1,\lbb,\ldd,\lttdown,\Ra)$.
\end{defn}

By Theorem~\ref{preservation of residuals}, right adjoints restricted to Galois sets return Galois sets. We make repeated use of this fact in the sequel, for implication, necessity and quasi-complement. 

\paragraph{Implication.}
Let $\bigodot:\powerset(Z_1)\times\powerset(Z_1)\lra \powerset(Z_1)$ be the binary image operator generated by the double dual relation $R^{111}$ of $T^{\partial 1\partial}$ (Definition~\ref{double dual relations}), defined on $U,V\subseteq Z_1$ by
\begin{equation}\label{odot defn}
U\mbox{$\bigodot$}V=\{x\in Z_1\midsp\exists u,z(u\in U\wedge z\in V\wedge xR^{111}uz)\}=\bigcup_{u\in U}^{z\in V}R^{111}uz.
\end{equation}
Being completely additive, $\bigodot$ is residuated (in both argument places)
and we let $\Ra_T$ be the residual $U\Ra_T W=\bigcup\{X\midsp U\bigodot X\subseteq W\}$, satisfying the residuation condition $V\subseteq U\Ra_T W$ iff $U\bigodot V\subseteq W$. Since $x\in U\Ra_T W$ iff $U\bigodot\{x\}\subseteq W$, a straightforward calculation, left to the interested reader, gives 
\begin{equation}\label{Lra membership condition}
  U\Ra_T W=\{x\in Z_1\midsp\forall z,u\in Z_1(u\in U\wedge zR^{111}ux\lra z\in W)\}
\end{equation}

Let $\bigovert$ be the closure of the restriction of $\bigodot$ to stable sets. 

\begin{prop}
\label{Ra long and short}
  The operators $\bigovert,\Ra$ are residuated in $\gpsi$. In other words, for any Galois stable sets $A,F,C$ we have $A\bigovert F\subseteq C$ iff $F\subseteq A\Ra C$. Consequently, $\Ra$ is the restriction to Galois stable sets of the residual $\Ra_T$ of $\bigodot$.
\end{prop}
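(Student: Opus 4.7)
The plan is to reduce the residuation claim in $\gpsi$ to the already-available residuation of $\bigodot$ in the powerset algebra (using that stable sets are closed), and then to identify the residual that arises with the single-sorted operator $\Ra$ defined by $A\Ra C=(A\ltright C')'$.

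First I would set up the main chain of equivalences. Since $\bigodot$ is a sorted image operator, it is completely additive in each argument, hence residuated in $\powerset(Z_1)$ with right residual $\Ra_T$ characterized by \eqref{Lra membership condition}. Now for stable $A,F,C$, using Lemma~\ref{basic facts}(6) (which says $W''\subseteq G$ iff $W\subseteq G$ for any Galois set $G$), I get
\[
  A\bigovert F\subseteq C\;\Llra\;(A\bigodot F)''\subseteq C\;\Llra\;A\bigodot F\subseteq C\;\Llra\;F\subseteq A\Ra_T C.
\]
Moreover, by Theorem~\ref{preservation of residuals} the value $A\Ra_T C$ is itself a Galois stable set when $A,C$ are. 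Hence $\bigovert$ is residuated in $\gpsi$ with right residual given by the restriction of $\Ra_T$. The ``consequently'' part of the proposition then follows once I verify that $A\Ra C=A\Ra_T C$ for stable arguments, by uniqueness of right residuals.

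For that identification, I would unfold definitions starting from \eqref{Lra membership condition}: $x\in A\Ra_T C$ iff $\forall u\in A,\forall z\in Z_1\,(zR^{111}ux\lra z\in C)$. Using Definition~\ref{double dual relations}, $zR^{111}ux\Llra z\in(R^{\partial 11}ux)'$, and $v\in R^{\partial 11}ux\Llra xT^{11\partial}uv\Llra x\in T'uv=(Tuv)'$. Combining these with the fact that $C=(C')'$ (since $C$ is stable, hence closed), the quantifier over $z$ collapses via the Galois connection into a universal quantifier over $C'$, yielding
\[
  x\in A\Ra_T C\;\Llra\;\forall u\in A,\forall v\in C',\forall y\,(yTuv\lra x\upv y)\;\Llra\;\forall y\,(y\in A\ltright C'\lra x\upv y),
\]
which is precisely $x\in (A\ltright C')'=A\Ra C$.

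The main obstacle will be the bookkeeping in this last identification: threading the Galois duality through the three layers of dualization and argument permutation that define $R^{111}$ from $T^{\partial 1\partial}$, while making certain that the stability of $A$ and the co-stability of $C'$ are invoked at the right moments to convert set-level inclusions into elementwise quantified statements. Smoothness (axiom~(F2)) is what guarantees that the relevant sections are Galois sets so that these conversions close up.
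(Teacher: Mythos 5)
Your argument is correct, but note that the paper itself does not prove this proposition: its ``proof'' is a citation of \cite[Proposition~3.11]{dfnmlA}, so what you have written is a self-contained reconstruction rather than a variant of an in-paper argument. Your route is the natural one given the machinery the paper does develop: the chain $A\bigovert F\subseteq C\Llra(A\bigodot F)''\subseteq C\Llra A\bigodot F\subseteq C\Llra F\subseteq A\Ra_T C$ uses exactly Lemma~\ref{basic facts}(6) together with powerset residuation, in the same way the paper justifies reduction rules (R2)/(R3), and the identification $A\Ra_T C=(A\ltright C')'=A\Ra C$ is a correct unwinding of Definition~\ref{double dual relations}. Two small remarks. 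First, the decisive step in that identification is the one you only gesture at in your closing sentence: passing from $\forall u\in A\,\bigl((R^{\partial 11}ux)'\subseteq C\bigr)$ to $\forall u\in A\,\bigl(C'\subseteq R^{\partial 11}ux\bigr)$ needs not just $C=(C')'$ but also that the section $R^{\partial 11}ux=\{v\in Z_\partial\midsp x\in T'uv\}$ is itself Galois co-stable, since the Galois connection by itself only yields $C'\subseteq(R^{\partial 11}ux)''$; this is precisely where smoothness of $T^{\partial 1\partial}$ (axiom (F2)) is consumed, and it deserves to be made explicit rather than left as a concluding remark. Second, your appeal to Theorem~\ref{preservation of residuals} for the stability of $A\Ra_T C$ is slightly off-label, since that theorem concerns residuals of the image operators of the frame relations $R_j$, whereas $\bigodot$ is generated by the derived relation $R^{111}$; but the appeal is also unnecessary, because once the identity $A\Ra_T C=(A\ltright C')'$ is established, stability is automatic: primed sets are Galois.
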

\begin{proof}
 Consult \cite[Proposition~3.11]{dfnmlA}.
\end{proof}
Proposition~\ref{Ra long and short} justifies our henceforth using the notation $\Ra$ for either the powerset operator, or the Galois set operator.

\paragraph{Necessity.}
The box operator $\lbb$ in the full complex algebra of a frame was defined on a stable set $A\in\gpsi$ as the Galois dual operator of (the closure of) the diamond operator $\ldminus$, by $\lbb A=(\ldminus A')'$. 

\begin{prop}\label{box as restriction}
The stable set operator $\lbb$ is the restriction to stable sets of the powerset dual image operator $\lbminus:\powerset(Z_1)\lra\powerset(Z_1)$ generated by the relation $R''_\Box\subseteq Z_1\times Z_1$. 
\end{prop}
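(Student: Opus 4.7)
Plan: I would prove the equality $\lbb A = \lbminus A$ for every $A \in \gpsi$ by unwinding both sides into explicit first-order conditions on the frame relation $R^{\partial\partial}_\Box$ and the Galois relation $\upv$, then observing that the two resulting conditions coincide after relabeling bound variables. The main engine is smoothness of $R_\Box$, which forces the relevant sections of the Galois dual $R'_\Box$ to be Galois sets, so no residual closure operation has to be tracked through the computation.

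First I would unwind $\lbb A = (\ldminus A')'$ using the definition of $\ldminus$ in~\eqref{ldminus} and the $\lperp$ definition of the prime on subsets of $Z_\partial$. This gives
\[
\lbb A \;=\; \{x \in Z_1 \mid \forall v \in A' \;\forall y \in Z_\partial\;(y R^{\partial\partial}_\Box v \lra x \upv y)\}.
\]

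Next I would treat $\lbminus A = \{x \in Z_1 \mid xR''_\Box \subseteq A\}$. By Definition~\ref{double dual relations} we have $xR''_\Box = \lperp(xR'_\Box) = (xR'_\Box)'$, so $xR''_\Box$ is automatically a Galois stable set. Smoothness of $R_\Box$ forces the section $xR'_\Box \subseteq Z_\partial$ to be Galois co-stable, hence $(xR'_\Box)'' = xR'_\Box$. Because $A$ and $xR''_\Box$ are both stable and $(\cdot)': \gpsi \iso \gphi^\partial$ is order-reversing, the inclusion $xR''_\Box \subseteq A$ is equivalent to $A' \subseteq (xR''_\Box)' = (xR'_\Box)'' = xR'_\Box$. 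Unwinding this last inclusion yields
\[
\lbminus A \;=\; \{x \in Z_1 \mid \forall y \in A' \;\forall w \in Z_\partial\;(w R^{\partial\partial}_\Box y \lra x \upv w)\}.
\]

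The two displayed conditions are literally the same formula up to the bijective renaming $(v,y) \leftrightarrow (y,w)$ of their bound variables, which concludes the argument. The one delicate step is the appeal to smoothness to identify $(xR'_\Box)'' = xR'_\Box$; without this identification one would be stuck with a stray closure operator obstructing the direct matching. Everything else is bookkeeping with the $\lperp/\rperp$ priming conventions.
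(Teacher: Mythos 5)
Your argument is correct: the unwinding of $\lbb A=(\ldminus A')'$ and of $xR''_\Box\subseteq A$ to the same first-order condition is exactly the computation behind this fact, and you correctly isolate the one non-trivial ingredient, namely that smoothness (axiom (F2)) gives $(xR'_\Box)''=xR'_\Box$, without which one would only obtain $A'\subseteq (xR'_\Box)''$ rather than $A'\subseteq xR'_\Box$. The paper itself only cites \cite[Proposition~3.13]{dfnmlA} for the proof, and your write-up is essentially the standard argument that citation stands for, so there is nothing to add.
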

\begin{proof}
Consult \cite[Proposition~3.13]{dfnmlA}.
\end{proof}
\begin{coro}\label{boxvert as restriction}
The Galois dual $(\ldvert B')'$ of the diamond operator $\ldvert:\powerset(Z_1)\lra\powerset(Z_1)$ can be similarly defined and the analogue of Proposition~\ref{box as restriction} holds for the dual image operator $\lbvert:\powerset(Z_\partial)\lra\powerset(Z_\partial)$ generated by the relation $R''_\Diamond\subseteq Z_\partial\times Z_\partial$, obtaining  $(\ldvert B')'$ as the restriction of $\lbvert$ to Galois co-stable sets, i.e. the identity $\lbvert B= (\ldvert B')'$ holds. \telos
\end{coro}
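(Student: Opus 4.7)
The statement is the precise order-dual of Proposition~\ref{box as restriction}, so the plan is to mirror the chain of equivalences used there, working on the co-stable side of the Galois connection. The only substantive fact beyond definition-chasing will be that $B = B''$, which is just the hypothesis $B \in \gphi$.

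I would unfold the right-hand side first. By definition of the Galois map, for $y \in Z_\partial$ we have $y \in (\ldvert B')'$ iff $\forall x \in Z_1 (x \in \ldvert B' \to x \upv y)$. Expanding $\ldvert B'$ via equation~\eqref{ldvert}, this is equivalent to
\[
\forall x,z \in Z_1\bigl(xR^{11}_\Diamond z \wedge z \in B' \lra x \upv y\bigr),
\]
and by simple reshuffling of quantifiers, to $\forall z \in Z_1(z \in B' \lra \forall x(xR^{11}_\Diamond z \lra x \upv y))$. Using Definition~\ref{double dual relations}, the inner clause says exactly $y \in (R^{11}_\Diamond z)' = R'_\Diamond z$, i.e.\ $yR'_\Diamond z$. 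Hence $y \in (\ldvert B')'$ iff $B' \subseteq \{z \in Z_1 \mid yR'_\Diamond z\} = yR'_\Diamond$, viewing the latter as a section in $Z_1$.

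Next I would apply the Galois connection, which is order-reversing: $B' \subseteq yR'_\Diamond$ iff $(yR'_\Diamond)' \subseteq B''$. Here the co-stability hypothesis $B = B''$ enters, reducing the right-hand inclusion to $(yR'_\Diamond)' \subseteq B$. By the second clause of Definition~\ref{double dual relations}, $(yR'_\Diamond)' = yR''_\Diamond$, so the condition becomes $yR''_\Diamond \subseteq B$, which is precisely $y \in \lbvert B$ by definition of the dual (box-like) image operator generated by $R''_\Diamond \subseteq Z_\partial \times Z_\partial$. Chaining the equivalences yields $\lbvert B = (\ldvert B')'$ for every $B \in \gphi$.

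The main obstacle is essentially bookkeeping of sorts and of which Galois map one is applying where; no nontrivial frame axiom is needed beyond the very definitions of $R'_\Diamond$ and $R''_\Diamond$ and the idempotency $B'' = B$ for co-stable $B$. Since the argument is entirely formally dual to the proof of Proposition~\ref{box as restriction} (equivalently, \cite[Proposition~3.13]{dfnmlA}), in the write-up I would either reproduce the short chain of equivalences above or simply invoke that proposition via the dual isomorphism $(\;)' : \gpsi \iso \gphi^\partial : (\;)'$ together with the observation that transporting the identity $\lbb A = \lbminus A$ (for $A \in \gpsi$) along this isomorphism produces exactly $\lbvert B = (\ldvert B')'$.
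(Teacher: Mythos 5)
Your computation is correct and is essentially the argument the paper has in mind: the corollary is stated without proof as the evident order-dual of Proposition~\ref{box as restriction}, and your chain of equivalences is exactly that dual argument written out (your closing suggestion to transport the identity along the dual isomorphism $(\;)':\gpsi\iso\gphi^\partial:(\;)'$ matches the paper's ``similarly defined'' phrasing). One correction, though: your claim that no frame axiom is needed beyond the definitions and the idempotency $B''=B$ is not quite right. In your pivotal step, the implication from $B'\subseteq yR'_\Diamond$ to $(yR'_\Diamond)'\subseteq B''$ is pure antitonicity, but the converse direction only yields $B'\subseteq (yR'_\Diamond)''$; to conclude $B'\subseteq yR'_\Diamond$ you need the section $yR'_\Diamond$ to be Galois stable, i.e.\ $(yR'_\Diamond)''=yR'_\Diamond$. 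That is precisely the smoothness axiom (F2) for $R^{11}_\Diamond$ (Definition~\ref{smooth defn}), which the paper assumes of all frame relations throughout, so your proof stands in context --- but smoothness is doing real work at that step and should be cited rather than dismissed.
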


\paragraph{Quasi-Complement.}
The relation $R''_\tdown\subseteq Z_1\times Z_\partial$ (double dual of the relation $R^{\partial 1}_\tdown$) generates a sorted box operator $\lbtdown:\powerset(Z_\partial)\lra\powerset(Z_1)$, defined as usual by $\lbtdown V=\{x\in Z_1\midsp xR''_\tdown\subseteq V\}$. 

\begin{prop}\label{ltdown and black}
  The restriction of $\lbtdown$ to $\gphi$ (the lattice of co-stable sets) is the Galois dual operation of $\ltdown$, i.e. the following identities hold
  \[
(\ltdown A)'=\lttdown A =\lbtdown A'\;\mbox{ and } \lttdown B'=\lbtdown B=(\ltdown B')'.
\]
\end{prop}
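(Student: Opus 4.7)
The plan is to reduce the whole proposition to a single non-trivial identity and then dispatch it by unfolding definitions and invoking the frame smoothness axiom. The first equation $(\ltdown A)' = \lttdown A$ is merely the defining equation of $\lttdown$ as the Galois dual of the restriction of $\ltdown$ to stable sets, so there is nothing to prove there. Under the dual isomorphism $(\cdot)':\gpsi\iso\gphi^{\mathrm{op}}:(\cdot)'$, both remaining identities collapse to the single claim that $\lbtdown A' = (\ltdown A)'$ for every $A\in\gpsi$; substituting $A = B'$ for $B\in\gphi$ and using co-stability $B = B''$ then recovers the second chain of equalities in the statement.

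The next step is to unfold both sides of the claim set-theoretically. Writing $xR'_\tdown$ for the section $\{w\in Z_1\midsp xR'_\tdown w\}$ and recalling that $R'_\tdown w = (R^{\partial 1}_\tdown w)'$, a direct calculation yields
\[
(\ltdown A)' \;=\; \{x\in Z_1\midsp A\subseteq xR'_\tdown\}, \qquad \lbtdown A' \;=\; \{x\in Z_1\midsp (xR'_\tdown)'\subseteq A'\},
\]
where the second expression uses $xR''_\tdown = (xR'_\tdown)'$ from Definition~\ref{double dual relations} together with the defining formula $\lbtdown V = \{x\midsp xR''_\tdown\subseteq V\}$. What remains to be verified is the pointwise equivalence
\[
A\subseteq xR'_\tdown \;\iff\; (xR'_\tdown)'\subseteq A'.
\]

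The forward implication is immediate since $(\cdot)'$ reverses inclusions. The converse is the step where the frame axioms carry the weight: from $(xR'_\tdown)'\subseteq A'$ one gets $A'' \subseteq (xR'_\tdown)''$ by applying $(\cdot)'$ once more, and the desired $A\subseteq xR'_\tdown$ follows only if $A = A''$ and $(xR'_\tdown)'' = xR'_\tdown$. Stability of $A$ is by hypothesis; stability of $xR'_\tdown$ is precisely what the smoothness axiom (F2) asserts, applied to the frame relation $R^{\partial 1}_\tdown$: every section of its Galois dual relation $R'_\tdown$ is Galois, and $xR'_\tdown$ is such a section. This is the one place where a non-trivial frame ingredient is needed; everything else is algebra of the Galois connection, and the argument runs in complete parallel to the proof of Proposition~\ref{box as restriction} for the necessity operator.
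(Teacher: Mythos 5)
Your proof is correct and follows essentially the same route as the paper's: both unfold the definitions down to the key pointwise equivalence $A\subseteq xR'_\tdown \iff (xR'_\tdown)'\subseteq A'$ and settle its nontrivial direction using that $A''\subseteq(xR'_\tdown)''$ together with smoothness of $R^{\partial 1}_\tdown$ (stability of the section $xR'_\tdown$). The paper packages this as a single element-wise chain of equivalences from $x\in\lttdown A$ to $x\in\lbtdown A'$ and leaves both the appeal to smoothness and the derivation of the second chain of identities (via $A=B'$, $B=B''$) implicit, whereas you make these explicit; there is no substantive difference in the argument.
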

\begin{proof}
The following calculation
\begin{tabbing}
$x\in\lttdown A$\hskip6mm\= iff\hskip2mm\= $\forall y\in Z_\partial(y\in\ltdown A\lra x\upv y)$\\
\>iff\> $\forall y\in Z_\partial[\exists z\in Z_1(yR^{\partial 1}_\tdown z\wedge z\in A)\lra x\upv y]$\\
\>iff\> $\forall z[z\in A\lra R^{\partial 1}_\tdown z\subseteq\{x\}\rperp]$\\
\>iff\>  $\forall z[z\in A\lra\Gamma x\subseteq R'_\tdown z]$\\
\>iff\> $\forall z[z\in A\lra xR'_\tdown z]$ iff $ A\subseteq xR'_\tdown$  \\
\>iff\> $xR''_\tdown\subseteq A'$  \\
\>iff\> $x\in\lbtdown A'$
\end{tabbing} 
completes the proof.
\end{proof}

Table~\ref{interaction identities} reviews the identities established. 

\begin{table}[!htbp]
\caption{Interaction Identities}
\label{interaction identities}
($A,F,C\in\gpsi, B\in\gphi, U,V\subseteq Z_1$)
\begin{tabbing}
$(\ltdown A)'=\lttdown A =\lbtdown A'$ \hskip1cm\= $\lttdown B'=\lbtdown B=(\ltdown B')'$\\[2mm]
$\lbminus A=\lbb A=(\ldminus A')'$ \> $\lbvert B=(\ldvert B')'$\hskip2cm\= $\ldd A=(\ldvert A)''$\\[2mm]
$A\Ra C =(A\ltright C')'=A\Ra_TC$ \>\> $A\bigovert C=(A\bigodot C)''$
\\[2mm]
$V\subseteq U\Ra_T(U\bigodot V)$ \> $U\bigodot(U\Ra_T V)\subseteq V$
\\[2mm]
$F\subseteq A\Ra(A\bigovert F)$ \> $A\bigovert(A\Ra F)\subseteq F$
\\[2mm]
$\ldvert\lbbox_1 A\subseteq A\subseteq \lbbox_1\ldvert A$ \> $\ldminus\lbbox_\partial B\subseteq B\subseteq \lbbox_\partial\ldminus B$
\end{tabbing}
\hrule
\end{table}

Note that $\lbb:\gpsi\lra\gpsi$ is defined as the Galois dual of $\ldminus:\powerset(Z_\partial)\lra\powerset(Z_\partial)$. If the frame is a classical Kripke frame ($Z_1=Z_\partial$ and with $I$ the identity relation) then the Galois connection maps $(\;)'$ are the set-complement map and the definition of $\lbb$ from $\ldminus$ collapses to the classical definition (consult \cite[Remark~3.4 and Remark~3.9]{dfnmlA} for details).

\begin{defn}\label{extended powerset algebra defn}
The {\em extended sorted powerset algebra} of $\mathfrak{F}=(s,Z,I,R^{11}_\Diamond,R^{\partial\partial}_\Box,R^{1\partial}_\tdown, T^{\partial 1\partial})$ is the structure
\[\mathbf{P}=((\;)':\powerset(Z_1)\leftrightarrows\powerset(Z_\partial):(\;)',\mbox{$\bigodot$},\Ra, \lbminus,\lbvert,\ldminus,\lbbox_\partial,\ldvert,\lbbox_1,\ltdown,\lbtdown,\ltright,\Da).\]
\end{defn}

\begin{thm}[Soundness of DfML]\label{soundness for DfML}
  DfML is sound in frames $\mathfrak{F}=(s,Z,I,R^{11}_\Diamond,R^{\partial\partial}_\Box,R^{1\partial}_\tdown, T^{\partial 1\partial})$ axiomatized in Table~\ref{frames axioms}.
\end{thm}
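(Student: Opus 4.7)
The plan is to reduce semantic soundness to algebraic soundness through the full complex algebra $\mathfrak{F}^+$ of Definition~\ref{full complex algebra defn}. The first step is to recognise that the (co)satisfaction clauses of Table~\ref{sat} are simply the unpacking of an interpretation $\val{\cdot}\colon\mathcal{L}_{\Box\Diamond}^{\tdown\ra}\lra\gpsi$ that extends the valuation $V^1$ homomorphically: the clauses for $\wedge,\vee,\top,\bot$ match the lattice operations on $\gpsi$ directly; the clause for $\bb$ matches $\lbb$ by Proposition~\ref{box as restriction}; the clause for $\dd$ matches $\ldd$ by Corollary~\ref{boxvert as restriction} together with the mutual determination $\yvval{\varphi}=\val{\varphi}\rperp$; the clause for $\ttdown$ matches $\lttdown$ by Proposition~\ref{ltdown and black}; and the clause for $\rfspoon$ matches $\Ra$ via equation~\eqref{Lra membership condition} combined with Proposition~\ref{Ra long and short}, which identifies the powerset residual $\Ra_T$ with the Galois-set operation $\Ra$ on stable arguments.

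Second, I would verify that $\mathfrak{F}^+$ is an implicative modal lattice with quasi-complementation in the sense axiomatised in Section~\ref{lat and log section}. The bounded lattice axioms are automatic since $\gpsi$ is complete under $\subseteq$. The distribution and normality axioms for $\lbb$, $\ldd$, $\lttdown$ and $\Ra$ then follow from Theorem~\ref{dist from section stability} applied to the underlying sorted image operators $\ldvert,\ldminus,\ltdown,\ltright$, together with the interaction identities collected in Table~\ref{interaction identities}. Specifically, (D$\Box$) and (N$\Box$) come from $\lbb$ being a right adjoint on $\gpsi$; (D$\Diamond$) and (N$\Diamond$) from complete additivity of $\ldd$; (D$\tdown$) and (N$\tdown$) from the order-reversing behaviour of $\lttdown=(\ltdown\cdot)'$ together with Theorem~\ref{dist from section stability}; and (A1), (A2), (N) from the residuation laws for $\Ra$ in Proposition~\ref{Ra long and short}. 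Once $\mathfrak{F}^+$ is known to be a modal lattice, each axiomatic sequent in Table~\ref{proof system} becomes an inclusion of Galois stable sets that holds in any such algebra, while each inference rule is underwritten by $\val{\cdot}$ being a homomorphism, with the monotonicity/antitonicity rules for $\bb,\dd,\ttdown,\rfspoon$ being immediate consequences of adjunction.

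The main obstacle, though a mild one, is the bookkeeping needed to move between the two-sorted powerset algebra $\mathbf{P}$ of Definition~\ref{sorted powerset algebra defn}, its restriction to Galois sets, and the single-sorted algebra $\mathfrak{F}^+$, while keeping interpretations and co-interpretations consistent. The frame axioms (F1)--(F2) are what make this bookkeeping work: separation (F1) lets us treat $\preceq$ as a genuine partial order and identify principal Galois sets correctly, while smoothness (F2), via Theorem~\ref{dist from section stability}, is precisely what guarantees that the restricted operators are completely (co)distributive and hence residuated, so that the right adjoints and Galois duals used to define $\lbb,\ldd,\lttdown,\Ra$ on $\gpsi$ exist and behave well. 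Since all of the structural results have been assembled earlier in the section, the actual write-up reduces to a checklist: walk through each sequent and rule of Table~\ref{proof system}, in each case citing the corresponding identity in Table~\ref{interaction identities} or the corresponding distribution, normality, or residuation fact already proved.
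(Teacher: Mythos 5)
Your proposal is correct and is precisely the "standard argument, given the results of the present section" that the paper's proof invokes and leaves to the reader: identify the satisfaction clauses of Table~\ref{sat} with the operations $\lbb,\ldd,\lttdown,\Ra$ of the full complex algebra via Propositions~\ref{box as restriction}, \ref{ltdown and black}, \ref{Ra long and short} and Corollary~\ref{boxvert as restriction}, verify that $\mathfrak{F}^+$ is a modal lattice using Theorem~\ref{dist from section stability} and Table~\ref{interaction identities}, and then check the sequents and rules of Table~\ref{proof system} one by one. The only nitpick is that (D$\Box$)/(N$\Box$) are most directly obtained from $\lbb$ being the Galois dual $(\ldminus(\cdot)')'$ of a completely additive operator (hence completely multiplicative) rather than from an adjunction on $\gpsi$, but this is a presentational detail, not a gap.
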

\begin{proof}
  By a standard argument, given the results of the present section. Details are left to the interested reader.
\end{proof}

\section{Languages and Translations}
\label{lang section}
\subsection{Sorted Residuated Modal Logic}
\label{modal companions section}
The sorted modal logic of the extended dual powerset (modal) algebra of a frame $\mathfrak{F}$ is (an extension of the minimal) sorted modal companion of distribution-free implicative modal logic with weak negation. We display its full syntax below.  As indicated in the previoius section, a subset of the language is only needed in order to fully and faithfully embed the language of DfML, but we shall need the full language displayed below as a reduction language for our correspondence argument.
\begin{eqnarray*}
\mathcal{L}_1\ni\alpha,\eta,\zeta &=& P_i (i\in\mathbb{N})\;\midsp\top\midsp\bot\midsp \alpha\cap\alpha\midsp\alpha\cup\alpha\midsp{\diamondvert}\alpha\midsp\bbox_1\alpha\midsp\boxminus\alpha 
\midsp{\btdown}\beta\midsp\beta'\midsp \alpha{\odot}\alpha\midsp\alpha\rspoon\alpha\\
\mathcal{L}_\partial\ni\beta,\delta,\xi &=& P^i(i\in\mathbb{N})\midsp\top\midsp\bot\midsp \beta\cap\beta\midsp\beta\cup\beta\midsp{\diamondminus}_\partial\beta\midsp\bbox_\partial\beta\midsp\boxvert\beta 
\midsp{\tdown}\alpha\midsp\alpha'\midsp \alpha\tright\beta 
\end{eqnarray*}

Given a structure $\mathfrak{F}$, an interpretation of $\mathcal{L}_s$ is a sorted function $V=(V_1,V_\partial)$ assigning a subset $V_1(P_i)\subseteq Z_1$ and $V_\partial(P^i)\subseteq Z_\partial$ to the propositional variables of each sort.  A model $\mathfrak{M}$ is a pair $\mathfrak{M}=(\mathfrak{F},V)$ consisting of a frame and an interpretation of sorted propositional variables in the structure, as above.

Given a model $\mathfrak{M}$, its interpretation function generates a sorted satisfaction relation $\zmodels=(\models,\vmodels)$, where ${\models}\subseteq Z_1\times\mathcal{L}_1$ and ${\vmodels}\subseteq Z_\partial\times\mathcal{L}_\partial$, defined in Table~\ref{sorted sat table}. 

\begin{table}[!htbp]
\caption{Sorted satisfaction relation, given a model $\mathfrak{M}=(\mathfrak{F},V)$}
\label{sorted sat table}
 ($u\in Z_1, v\in Z_\partial$)
\begin{tabbing}
$u\models P_i$\hskip7mm\=iff\hskip2mm\= $u\in V_1(P_i)$\hskip3.2cm\= $v\vmodels P^i$ \hskip6mm\=iff\hskip2mm\= $v\in V_\partial(P^i)$
\\[2mm]
$u\models\top$ \>iff\> $u=u$ \> $v\vmodels\top$\>iff\> $v=v$
\\[2mm]
$u\models\bot$ \>iff\> $u\neq u$ \> $v\vmodels\bot$ \>iff\> $v\neq v$
\\[2mm]
$u\models\alpha\cap\eta$\>iff\> $u\models\alpha$  and $u\models\eta$   \>  $v\vmodels\beta\cap\delta$ \>iff\> $v\vmodels\beta$ and $v\vmodels\delta$
\\[2mm]
$u\models\alpha\cup\eta$ \>iff\> $u\models\alpha$ or $u\models\eta$
    \>
    $v\vmodels\beta\cup\delta$ \>iff\> $v\vmodels\beta$ or $v\vmodels\delta$
\\[2mm]
$u\models\beta'$ \>iff\> $\forall y\in Z_\partial(uIy\lra y\not\vmodels\beta)$
    \>
    $v\vmodels\alpha'$ \>iff\> $\forall u\in Z_1(uIy\lra u\not\models\alpha)$
\\[2mm]
$u\models\diamondvert\alpha$ \>iff\>    $\exists z\in Z_1(uR^{11}_\Diamond z\wedge z\models\alpha)$
    \>
    $v\vmodels{\diamondminus}\beta$ \>iff\>   $\exists y\in Z_\partial(vR^{\partial\partial}_\Box y\wedge y\vmodels\beta)$
\\[2mm]
$u\models\bbox_1\alpha$ \>iff\> $\forall z\in Z_1(zR^{11}_\Diamond u\lra z\models\alpha)$    \> $v\vmodels\bbox_\partial\beta$ \>iff\> $\forall y\in Z_\partial(yR^{\partial\partial}_\Box v\lra y\vmodels\beta)$
\\[2mm]
$u\models\boxminus\alpha$ \>iff\> $\forall z\in Z_1(uR''_\Box z\lra z\models\alpha)$
\>
$v\vmodels\boxvert\beta$ \>iff\> $\forall y\in Z_\partial(vR''_\Diamond y\lra y\vmodels\beta)$    
\\[2mm]
$u\models\btdown\beta$ \>iff\>  $\forall y\in Z_\partial(uR''_\tdown y\lra y\vmodels\beta)$    
\> $v\vmodels\tdown\alpha$ \>iff\>  $\exists x\in Z_1(yR^{\partial 1}_\tdown x\wedge x\models\alpha)$      
\\[2mm]
$u\models\alpha\odot\eta$ \>iff\> $\exists x,z\in Z_1(uR^{111}xz\wedge x\models\alpha\wedge z\models\eta)$   \> 
\\[2mm]
$u\models\alpha\rspoon\eta$ \>iff\> $\forall x,z\in Z_1(x\models\alpha\wedge zR^{111}xu  \lra z\models\eta)$
\\[2mm]
\>\>\hskip3cm $v\vmodels\alpha\tright\beta\;$ iff $\;\exists x\in Z_1\exists y\in Z_\partial(vT^{\partial 1\partial}xy\wedge x\models\alpha\wedge y\vmodels\beta)$
\end{tabbing}
\hrule
\end{table}

For $\alpha\in\mathcal{L}_1$ and $\beta\in\mathcal{L}_\partial$, we let $\val{\alpha}_\mathfrak{M}=\{x\in Z_1\midsp x\models\alpha\}$ and $\yvval{\beta}_\mathfrak{M}=\{y\in Z_\partial\midsp y\vmodels\beta\}$. If the model $\mathfrak{M}$ is understood from context, we omit the subscript.

Proof-theoretic consequence may be defined as a sorted relation ${\Ra} =(\proves,\vproves)$, with ${\proves}\subseteq\mathcal{L}_1\times\mathcal{L}_1$ and ${\vproves}\subseteq\mathcal{L}_\partial\times\mathcal{L}_\partial$.  We refer to sequents $\alpha\proves\eta$ as {\em 1-sequents} and to sequents $\beta\vproves\delta$ as {\em $\partial$-sequents}. Validity of a sequent in a model is defined as usual. We shall have no use of a proof system and, given a class $\mathbb{C}$ of frames, we may define the logic $\Lambda(\mathbb{C})$ as the sorted set of 1-sequents and $\partial$-sequents validated in every frame $\mathfrak{F}\in\mathbb{C}$. For our purposes, we let $\mathbb{C}$ be the class of sorted frames axiomatized as in Table~\ref{frames axioms}.

\subsection{Sorted First and Second-Order Language of Frames}
The first-order language of frames $\mathfrak{F}=(s,Z,I,(R_j)_{j\in J},\sigma)$ is the sorted language 
\[
\mathcal{L}^1_s(V_1,V_\partial,(\mathbf{P}_i)_{i\in\mathbb{N}},(\mathbf{P}^i)_{i\in\mathbb{N}},\mathbf{I},(\mathbf{R}^{\sigma(j)}_j)_{j\in J}, \wedge_1,\wedge_\partial,=_1,=_\partial)
\]
with 
\begin{itemize}
\item sorted variables $v^1_0,v^1_1,\ldots\in V_1$, $v^\partial_0,v^\partial_1,\ldots\in V_\partial$ 
\item unary predicates $\mathbf{P}_i, \mathbf{P}^i$, for each sort, respectively, and for $i\in\mathbb{N}$
\item a binary predicate $\mathbf{I}$ of sort $(1,\partial)$
\item a sorted predicate $\mathbf{R}_j$, for each $j\in J$ (a countable set), of sort $\sigma(j)$ and arity $n(j)+1$
\item identity predicates $=_1,=_\partial$ for each sort.
\end{itemize}
Terms (sorted) are first-order variables and meet terms obtained by applying the operations $\wedge_1,\wedge_\partial$ to terms of the appropriagte sort.  Atomic formulae of each sort are defined as usual from the predicates and terms, respecting sorting. Formulae are built recursively, closing the set of atomic formulae under negation, conjunction and sorted universal quantification $\forall^1 v^1_i\Phi(v^1_i), \forall^\partial v^\partial_i\Phi(v^\partial_i)$, observing sorting. By well formed (defined as in the single-sorted case) we mean, in addition, well-sorted. 

Given a sorted valuation $V$ of individual variables,  $\mathfrak{F}\models_s\Phi[V]$ is defined as in the case of unsorted FOL. When $V(u^1_k)=a\in Z_1$, we may also display the assignment in writing $\mathfrak{F}\models_s\Phi(u^1_k)[u^1_k:=a]$, or just $\mathfrak{F}\models_s\Phi(u^1_k)[a]$, and similarly for more variables. 

The second-order language $\mathcal{L}^2_s$ of frames $\mathfrak{F}=(s,Z,I,(R_j)_{j\in J},\sigma)$ is the sorted language 
\[
\mathcal{L}^2_s(V_1,V_\partial,({P}_i)_{i\in\mathbb{N}},({P}^i)_{i\in\mathbb{N}},\mathbf{I},(\mathbf{R}^{\sigma(j)}_j)_{j\in J},\wedge_1,\wedge_\partial,=_1,=_\partial)
\]
with countable sets of second-order variables $P_i, P^i$, with $i\in\mathbb{N}$, one for each sort, interpreted as subsets of $Z_1,Z_\partial$, according to sort. Terms and well-formed formulae are defined accordingly, in the obvious way. We use the same symbols $\forall^1,\forall^\partial$ for second-order quantification as for first-order.

In the current context of interest, the languages contain the binary predicates $\mathbf{R}^{11}_\Diamond,\mathbf{R}^{\partial\partial}_\Box$, $\mathbf{R}^{\partial 1}_\tdown$ and $\mathbf{T}^{\partial 1\partial}$ in addition to (the binary predicate) $\mathbf{I}$ and the sorted identity predicates $=_1,=_\partial$. We will typically drop the sorting superscripts, as understood.

Defined (in the expected way) predicates include $\widetilde{\mathbf{I}}$, to be interpreted as the Galois relation $\upv$ of the frame, $\mathbf{R}'_\Box, \mathbf{R}'_\Diamond,\mathbf{R}'_\tdown,\mathbf{T}'$, to be interpreted as the Galois dual relations of the relations interpreting the predicates $\mathbf{R}^{11}_\Diamond,\mathbf{R}^{\partial\partial}_\Box$, $\mathbf{R}^{\partial 1}_\tdown$ and $\mathbf{T}^{\partial 1\partial}$, as well as predicates $\mathbf{R}''_\Diamond, \mathbf{R}''_\Box, \mathbf{R}''_\tdown$ and $\mathbf{R}^{111}$, to be interpreted as the double dual relations of the relations interpreting the predicates $\mathbf{R}^{11}_\Diamond,\mathbf{R}^{\partial\partial}_\Box$, $\mathbf{R}^{\partial 1}_\tdown$ and $\mathbf{T}^{\partial 1\partial}$.

\subsection{Language Translations}

Table~\ref{syntactic translation into sorted} defines by mutual recursion a syntactic translation $(\;)^\bullet$ and co-translation $(\;)^\circ$ of the language $\mathcal{L}$ of modal lattices into the language $\mathcal{L}_s=(\mathcal{L}_1,\mathcal{L}_\partial)$ of sorted modal logic.

\begin{table}[!htbp]
\caption{Definition of the syntactic translation and co-translation of the language of modal lattices}
\label{syntactic translation into sorted}
\begin{tabbing}
\hspace*{1cm}\=$p_i^\bullet$ \hskip1.5cm\==\hskip4mm\= $P_i''$ \hskip3cm\= $p_i^\circ$\hskip1.5cm\==\hskip4mm\= $P_i'$\\
\> $\top^\bullet$\>=\>$\top$ \> $\top^\circ$\>=\> $\bot$\\
\> $\bot^\bullet$ \>=\> $\bot$ \> $\bot^\circ$ \>=\> $\top$\\
\> $(\varphi\wedge\psi)^\bullet$ \>=\> $\varphi^\bullet\cap\psi^\bullet$ \> $(\varphi\wedge\psi)^\circ$ \>=\> $(\varphi^\circ\cup\psi^\circ)''$\\
\> $(\varphi\vee\psi)^\bullet$ \>=\> $(\varphi^\bullet\cup\psi^\bullet)''$ \> $(\varphi\vee\psi)^\circ$ \>=\> $\varphi^\circ\cap\psi^\circ$
\\
\> $(\bb\varphi)^\bullet$ \>=\> $(\diamondminus\varphi^\circ)'$ \> $(\bb\varphi)^\circ$ \>=\> $(\diamondminus\varphi^\circ)''$
\\
\>\>=\> $\boxminus\varphi^\bullet$   
\\
\> $(\dd\varphi)^\bullet$ \>=\> $(\diamondvert\varphi^\bullet)''$ \> $(\dd\varphi)^\circ$ \>=\> $(\diamondvert\varphi^\bullet)'$
\\
\>\>\>\>\>=\> $\boxvert\varphi^\circ$
\\
\> $(\ttdown\varphi)^\bullet$ \>=\> $(\tdown\varphi^\bullet)'$ \> $(\ttdown\varphi)^\circ$ \>=\> $(\tdown\varphi^\bullet)''$
\\ 
\> \>=\> $\btdown\varphi^\circ$
\\
\> $(\varphi\rfspoon\psi)^\bullet$ \>=\> $(\varphi^\bullet\tright\psi^\circ)'$ \> $(\varphi\rfspoon\psi)^\circ$ \>=\> $(\varphi^\bullet\tright\psi^\circ)''$ 
\\
\>\>=\> $\varphi^\bullet\rspoon\psi^\bullet$
\\[2mm]
\> Translation and Co-translation of Sequents
\\
\> $(\varphi\proves\psi)^\bullet$ \>=\> $\varphi^\bullet\proves\psi^\bullet$
\>
$(\varphi\proves\psi)^\circ$ \>=\> $\psi^\circ\vproves\varphi^\circ$
\end{tabbing}
\end{table}

Both languages $\mathcal{L}$ and $\mathcal{L}_s$ are interpreted in frames $\mathfrak{F}=(s,Z,I,R^{11}_\Diamond, R^{\partial\partial}_\Box,R^{1\partial}_\tdown,T^{\partial 1\partial})$. Given a model $\mathfrak{M}=(\mathfrak{F},V)$ for $\mathcal{L}_s$, a model $\mathcal{N}=(\mathfrak{F},\bar{V})$ for $\mathcal{L}$ is obtained by setting $\bar{V}^1(p_i)=V_1(P_i)''$, generating an interpretation and a co-interpretation of $\mathcal{L}$-sentences.

A sentence $\alpha\in\mathcal{L}_1$ is a {\em classical modal correspondent} of a sentence $\varphi\in\mathcal{L}$ iff for any $\mathcal{L}_s$-model $\mathfrak{M}=(\mathfrak{F},V)$, $\val{\alpha}_\mathfrak{M}=\val{\varphi}_\mathfrak{N}$, where $\mathfrak{N}$ is defined as above.

\begin{thm}[Full Abstraction]
\label{full abstraction of trans in sorted modal}
Let $\mathfrak{M}=(\mathfrak{F},V)$ be a model of the sorted modal language $\mathcal{L}_s=(\mathcal{L}_1,\mathcal{L}_\partial)$. Then, for any sentence $\varphi\in\mathcal{L}$,
\begin{enumerate}
\item  its translation $\varphi^\bullet$ is a classical modal correspondent of $\varphi$. In other words, $\val{\varphi^\bullet}_\mathfrak{M}=\val{\varphi}_\mathfrak{N}= \val{(\varphi^\circ)'}_\mathfrak{M}=\val{(\varphi^\bullet)''}_\mathfrak{M}$
\item  $\yvval{\varphi^\circ}_\mathfrak{M}=\yvval{\varphi}_\mathfrak{N}= \yvval{(\varphi^\bullet)'}_\mathfrak{M}=\yvval{(\varphi^\circ)''}_\mathfrak{M}$
\item for any sequent $\varphi\proves\psi$ in the language $\mathcal{L}$ of distribution-free modal logic  $\mathfrak{M}\models\varphi^\bullet\proves\psi^\bullet$ iff $\mathfrak{N}\models\varphi\proves\psi$ iff $\mathfrak{M}\vmodels\psi^\circ\vproves\varphi^\circ$,
\end{enumerate}
where $\mathfrak{N}$ is defined as above, by setting $\bar{V}^1(p_i)=V_1(P_i)''$.
\end{thm}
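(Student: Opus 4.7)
The plan is to prove parts (1) and (2) by simultaneous structural induction on the sentence $\varphi$, and then derive (3) as an immediate corollary. I first note that the multiple equalities displayed in (1) (resp.\ (2)) collapse to a single essential claim per part: once we establish $\val{\varphi^\bullet}_\mathfrak{M}=\val{\varphi}_\mathfrak{N}$ (resp.\ $\yvval{\varphi^\circ}_\mathfrak{M}=\yvval{\varphi}_\mathfrak{N}$), the remaining identities follow. Indeed, $\val{\varphi}_\mathfrak{N}\in\gpsi$ and $\yvval{\varphi}_\mathfrak{N}\in\gphi$ are Galois (co-)stable, so idempotent under $(\cdot)''$; moreover, by Table~\ref{sorted sat table} the primed modality in $\mathcal{L}_s$ is interpreted as the Galois map $\rperp(\cdot)$ (resp.\ $(\cdot)\rperp$), so the identities $\val{(\varphi^\circ)'}_\mathfrak{M}=(\yvval{\varphi^\circ}_\mathfrak{M})'$ and $\val{(\varphi^\bullet)''}_\mathfrak{M}=(\val{\varphi^\bullet}_\mathfrak{M})''$ bridge the auxiliary equalities with the primary ones.

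\paragraph{Base case.} For a propositional variable $p_i$, Table~\ref{syntactic translation into sorted} gives $p_i^\bullet=P_i''$ and $p_i^\circ=P_i'$. Then $\val{p_i^\bullet}_\mathfrak{M}=(V_1(P_i))''=\bar V^1(p_i)=\val{p_i}_\mathfrak{N}$, and $\yvval{p_i^\circ}_\mathfrak{M}=(V_1(P_i))'=(\bar V^1(p_i))'=\yvval{p_i}_\mathfrak{N}$, using that $(\cdot)'''=(\cdot)'$ at the level of subsets of $Z_1$.

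\paragraph{Inductive step.} For each connective, the translation in Table~\ref{syntactic translation into sorted} was designed to mirror, at the powerset level, the defining identity of the corresponding operation of $L(\mathfrak{F})$ recorded in Table~\ref{interaction identities}. Concretely, I would dispatch the cases as follows: for $\wedge$ use that $\gpsi$ is closed under intersection; for $\vee$ use that joins in $\gpsi$ are closures of unions; for $\dd$ use $\ldd A=(\ldvert A)''$; for $\bb$ use $\lbb A=(\ldminus A')'$ and equivalently, via Proposition~\ref{box as restriction}, $\lbb A=\lbminus A$ on stable sets, which verifies both clauses of Table~\ref{syntactic translation into sorted}; for $\ttdown$ use Proposition~\ref{ltdown and black} giving $\lttdown A=(\ltdown A)'=\lbtdown A'$; for $\rfspoon$ use $A\Ra C=(A\ltright C')'=A\Ra_T C$ from Proposition~\ref{Ra long and short}. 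In each case, the inductive hypotheses on subformulas plug Galois (co-)stable sets into the relevant powerset operator, and the closure step encoded by the $(\cdot)''$ (or the Galois dual $(\cdot)'$) in the translation exactly realises the corresponding operation of $L(\mathfrak{F})$.

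\paragraph{Part (3).} Given (1), we have $\mathfrak{M}\models\varphi^\bullet\proves\psi^\bullet$ iff $\val{\varphi^\bullet}_\mathfrak{M}\subseteq\val{\psi^\bullet}_\mathfrak{M}$ iff $\val{\varphi}_\mathfrak{N}\subseteq\val{\psi}_\mathfrak{N}$ iff $\mathfrak{N}\models\varphi\proves\psi$. The equivalence with $\mathfrak{M}\vmodels\psi^\circ\vproves\varphi^\circ$ follows from (2) together with the order-reversing dual isomorphism $(\cdot)':\gpsi\iso\gphi^{\mathrm{op}}$, giving $\val{\varphi}_\mathfrak{N}\subseteq\val{\psi}_\mathfrak{N}$ iff $\yvval{\psi}_\mathfrak{N}\subseteq\yvval{\varphi}_\mathfrak{N}$.

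\paragraph{Expected obstacle.} The trickiest cases are $\bb$ and $\rfspoon$, for two reasons. First, the induction is genuinely simultaneous: the translation clauses for $\bb$ and $\rfspoon$ invoke the co-translation of subformulas, so parts (1) and (2) cannot be separated. Second, in the implication case one must track sorts carefully (since $\ltright$ delivers a co-stable set from mixed-sort arguments) and invoke Proposition~\ref{Ra long and short} to certify that $\rspoon$ of the extended powerset algebra restricts on Galois stable sets to the operation $\Ra$ of the full complex algebra $L(\mathfrak{F})$. Once that bridge is in place, the verification of the two equivalent translation clauses for $\rfspoon$ is routine.
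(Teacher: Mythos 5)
Your proposal is correct, but note that the paper itself contains no in-house argument for this theorem: its ``proof'' consists entirely of the observation that the translation and co-translation are instances of the general construction for arbitrary normal lattice expansions, deferring to \cite[Theorem~3.2]{vb}. Your simultaneous structural induction on $\varphi^\bullet$ and $\varphi^\circ$ is precisely the argument that citation stands in for, specialized to the signature $\{\Box,\Diamond,\tdown,\ra\}$, and the details check out: the base case correctly uses $\bar V^1(p_i)=V_1(P_i)''$ together with $(\cdot)'''=(\cdot)'$; each inductive case reduces to the corresponding identity of Table~\ref{interaction identities} (e.g.\ $\lbb A=(\ldminus A')'$, $\lttdown A=(\ltdown A)'$, $A\Ra C=(A\ltright C')'$), with Propositions~\ref{box as restriction}, \ref{ltdown and black} and \ref{Ra long and short} certifying the equivalence of the two displayed translation clauses for $\bb$, $\ttdown$ and $\rfspoon$; and your observation that the auxiliary equalities in claims (1) and (2) follow from the two primary ones via stability and the dual isomorphism $(\;)':\gpsi\iso\gphi^\partial$ is exactly right, as is the derivation of (3). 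You also correctly flag that the induction is genuinely simultaneous because the $\bb$ and $\rfspoon$ clauses invoke the co-translation of subformulas. The only thing your write-up buys beyond the paper is self-containedness; conversely, the cited general theorem buys uniformity over arbitrary similarity types rather than a case-by-case verification.
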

\begin{proof}
The translation and co-translation are special instances of the case of the languages and logics of arbitrary normal lattices expansions. A proof of all three claims for the general case was given in \cite[Theorem~3.2]{vb}.
\end{proof}

By the full abstraction Theorem~\ref{full abstraction of trans in sorted modal}, we may regard the language of modal lattices as a fragment $\mathcal{L}_\mathrm{reg}$ of $\mathcal{L}_1$, to which we refer as {\em the regular fragment}, and also as a fragment $\mathcal{L}_\mathrm{coreg}$ of $\mathcal{L}_\partial$, to which we may refer as {\em the co-regular fragment}. 

The standard translation of sorted modal logic into sorted  FOL  is exactly as in the single-sorted case, except for the relativization to two sorts, displayed in Table \ref{std-trans}, where $\stx{u}{}, \sty{v}{}$ are defined by mutual recursion and $u,v$ are individual variables of sort $1,\partial$, respectively.

\begin{table}[!htbp]
\caption{Standard Translation of the sorted modal language $\mathcal{L}_s=(\mathcal{L}_1,\mathcal{L}_\partial)$}
\label{std-trans}
($u$ a sort-1 variable, $v$ a sort-$\partial$ variable)
\begin{tabbing}
$\stx{u}{P_i}$\hskip0.8cm\==\hskip2mm\= ${\bf P}_i(u)$ \hskip3.5cm\= $\sty{v}{P^i}$\hskip0.8cm\==\hskip2mm\= ${\bf P}^i(v)$\\
$\stx{u}{\top}$ \>=\> $u=u$ \> $\stx{v}{\top}$\>=\> $v=v$\\
$\stx{u}{\bot}$ \>=\> $u\neq u$ \> $\sty{v}{\bot}$ \>=\> $v\neq v$\\
$\stx{u}{\alpha\cap \eta}$\>=\> $\stx{u}{\alpha}\cap\stx{u}{\eta}$\> $\sty{v}{\beta\cap \delta}$\>=\> $\sty{v}{\beta}\cap\sty{v}{\delta}$\\
$\stx{u}{\alpha\cup \eta}$\>=\> $\stx{u}{\alpha}\cup\stx{u}{\eta}$\> $\sty{v}{\beta\cup \delta}$\>=\> $\sty{v}{\beta}\cup\sty{v}{\delta}$\\
$\stx{u}{\beta'}$ \>=\> $\forall^\partial v\;(u{\bf I}v\;\lra\;\neg\sty{v}{\beta})$ \> $\sty{v}{\alpha'}$ \>=\> $\forall^1 u\;(u{\bf I}v\;\lra\;\neg\stx{u}{\alpha})$\\
$\stx{u}{\diamondvert\alpha}$ \>=\> $\exists^1 z\;(u{\bf R}^{11}_\Diamond z\;\wedge\;\stx{z}{\alpha})$ \> $\sty{v}{\diamondminus\beta}$ \>=\> $\exists^\partial y\;(v{\bf R}^{\partial\partial}_\Box y\;\wedge\;\stx{y}{\beta})$
\\
$\stx{u}{\bbox_1\alpha}$ \>=\> $\forall^1z(z\mathbf{R}^{11}_\Diamond u\lra\stx{z}{\alpha})$
\> 
$\sty{v}{\bbox_\partial\beta}$ \>=\> $\forall^\partial y(y\mathbf{R}^{\partial\partial}_\Box v\lra\sty{y}{\beta})$
\\
$\stx{u}{\boxminus\alpha}$ \>=\> $\forall^1 z\;(u{\bf R}''_\Box z\lra\stx{z}{\alpha})$\> $\sty{v}{\boxvert\beta}$ \>=\> $\forall^\partial y(v{\bf R}''_\Diamond y\lra\sty{y}{\beta})$
\\
$\stx{u}{\btdown\beta}$ \>=\>  $\forall^\partial y(x\mathbf{R}''_\tdown y\lra\sty{y}{\beta})$     
\> $\sty{v}{\tdown\alpha}$ \>=\> $\exists^1x(v\mathbf{R}^{\partial 1}_\tdown x\wedge\stx{x}{\alpha})$ 
\\
$\stx{u}{\alpha\odot\eta}$ \>=\>    $\exists^1x\exists^1 z(u\mathbf{R}^{111}xz\wedge\stx{x}{\alpha}\wedge\stx{z}{\eta})$
\\
\>\>\hskip3cm $\sty{v}{\alpha\tright\beta}$ \;=\; $\exists^1x\exists^\partial y(v\mathbf{T}^{\partial 1\partial}xy\wedge\stx{x}{\alpha}\wedge\sty{y}{\beta})$
\\
$\stx{u}{\alpha\rspoon\eta}$  \>=\> $\forall^1x\forall^1z(z\mathbf{R}^{111}xu\wedge\stx{x}{\alpha}\lra\stx{z}{\eta})$
\end{tabbing}
\hrule
\end{table}

If $\alpha(Q_{i_1},\ldots,Q_{i_n})$, where for each $j$, $Q_{i_j}\in\{P_{i_j}, P^{i_j}\}$, is an $\mathcal{L}_1$-sentence with propositional variables among the $Q_{i_j}$, then its second-order translation is defined to be the sentence $\mathrm{ST}^2_{x}(\alpha)=\forall Q_{i_1}\ldots\forall Q_{i_n}\forall^1 u\;\stx{u}{\alpha}$. It is understood that $\forall Q_{i_j}$ is $\forall^1 P_{i_j}$, if $Q_{i_j}=P_{i_j}\in\mathcal{L}_1$ and it is $\forall^\partial P^{i_j}$ otherwise. Similarly for $\beta(Q_{i_1},\ldots,Q_{i_n})$ and $\mathrm{ST}^2_v(\beta)=\forall Q_{i_1}\ldots\forall Q_{i_n}\forall^\partial v\sty{v}{\beta}$.

\begin{prop}
\label{std trans prop}
For any sorted modal formulae $\alpha,\beta$ (of sort $1, \partial$, respectively), any model $\mathfrak{M}=(\mathfrak{F},V)$ for $\mathcal{L}^1_s$  and any $x\in Z_1, y\in Z_\partial$, $\mathfrak{F}\models\mathrm{ST}^2_{u}(\alpha)[u:=x][Q_{i_j}:=V(Q_{i_j})]_{j=1}^n$ iff $\mathfrak{M},x\models \alpha$ iff $\mathfrak{F}\models\stx{u}{\alpha}[u:=x]$, where $x=V(u)$.

Similarly, $\mathfrak{M},y\vmodels \beta$ iff $\mathfrak{F}\vmodels\sty{v}{\beta}[v:=y]$ iff $\mathfrak{F}\vmodels\mathrm{ST}_v^2(\beta)[v:=y][Q_{i_j}:=V(Q_{i_j})]_{j=1}^n$.
\end{prop}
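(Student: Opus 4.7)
The result is the standard-translation correctness lemma for the two-sorted modal language $\mathcal{L}_s=(\mathcal{L}_1,\mathcal{L}_\partial)$, and proceeds by a routine structural induction, carried out \emph{simultaneously} on $\alpha\in\mathcal{L}_1$ and $\beta\in\mathcal{L}_\partial$. The simultaneity is forced by the Galois-primed operators $\beta'\in\mathcal{L}_1$ and $\alpha'\in\mathcal{L}_\partial$, together with the other sort-mixing connectives $\tdown,\btdown,\ltright$ that shuttle between the two sublanguages, preventing any purely single-sorted argument. My plan is to first establish the first-order equivalences $\mathfrak{M},x\models\alpha$ iff $\mathfrak{F}\models\stx{u}{\alpha}[u:=x]$ and $\mathfrak{M},y\vmodels\beta$ iff $\mathfrak{F}\vmodels\sty{v}{\beta}[v:=y]$ by the joint induction, and then to read off the second-order statement as a direct corollary.

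The base cases $P_i,P^i,\top,\bot$ unfold on both sides to the same condition directly from the definitions in Tables~\ref{sorted sat table} and \ref{std-trans}, and the Boolean cases $\cap,\cup$ follow immediately from the inductive hypothesis. The primed cases are the only genuinely cross-sort step: $u\models\beta'$ iff $\forall y\in Z_\partial(uIy\lra y\not\vmodels\beta)$ by Table~\ref{sorted sat table}, which is, term for term, the intended reading of $\stx{u}{\beta'}=\forall^\partial v(u\mathbf{I}v\lra\neg\sty{v}{\beta})$; since $\beta$ is of sort $\partial$ and strictly smaller than $\beta'$, the simultaneous inductive hypothesis applies, and the $\alpha'$ case is dual. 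For each of the remaining connectives --- the modalities $\diamondvert,\diamondminus,\bbox_1,\bbox_\partial,\boxminus,\boxvert$, the weak negations $\tdown,\btdown$, the fusion $\odot$, and the implication-like $\rspoon,\ltright$ --- the relational clause in Table~\ref{sorted sat table} is, once again term for term, the quantified FOL clause in Table~\ref{std-trans} provided each predicate $\mathbf{R}^{11}_\Diamond,\mathbf{R}^{\partial\partial}_\Box,\mathbf{R}''_\Box,\mathbf{R}''_\Diamond,\mathbf{R}''_\tdown,\mathbf{R}^{\partial 1}_\tdown,\mathbf{R}^{111},\mathbf{T}^{\partial 1\partial}$ is interpreted as its namesake frame relation; plugging the inductive hypothesis in under each quantifier closes the case.

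The second-order statement is then immediate: $\mathrm{ST}^2_u(\alpha)$ is, by definition, the universal second-order closure of $\stx{u}{\alpha}$ in the predicate variables $Q_{i_j}$, so its evaluation under the instantiation $Q_{i_j}:=V(Q_{i_j})$ (and $u:=x$) coincides with $\mathfrak{F}\models\stx{u}{\alpha}[u:=x]$ evaluated under the valuation $V$, which the induction has just shown equivalent to $\mathfrak{M},x\models\alpha$. The dual claim for $\beta$ is identical. No substantive obstacle is expected; the argument is mechanical once the simultaneous induction is set up, and the only point requiring care is scrupulous bookkeeping of sorts, so that each inductive step invokes its correctly-sorted companion clause.
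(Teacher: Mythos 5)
Your proof is correct and is exactly the standard simultaneous-induction argument that the paper implicitly relies on: the paper states Proposition~\ref{std trans prop} without proof, treating it as the routine correctness lemma for the standard translation, and your case analysis (base cases, Boolean cases, the cross-sort primed cases, and the term-for-term match between Table~\ref{sorted sat table} and Table~\ref{std-trans} for the relational connectives) supplies precisely the omitted details. Your reading of the second-order clause --- that instantiating the second-order variables $Q_{i_j}$ at $V(Q_{i_j})$ reduces $\mathrm{ST}^2$ to the first-order standard translation under $V$ --- is the intended (and only sensible) one, given that the paper's definition of $\mathrm{ST}^2$ formally binds those variables.
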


\begin{coro}
A sequent $\alpha\proves\eta$ in the sorted modal logic corresponds to the implication $\stx{u}{\alpha}\ra\stx{u}{\eta}$. In other words,  for any model $\mathfrak{M}=(\mathfrak{F},V)$, $\mathfrak{M}\models\alpha\proves\eta$ iff $\val{\alpha}_\mathfrak{M}\subseteq\val{\eta}_\mathfrak{M}$ iff $\mathfrak{F}\models(\stx{u}{\alpha}\ra\stx{u}{\eta})[V]$.
\end{coro}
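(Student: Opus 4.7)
The plan is to give a short unpacking proof that relies on three ingredients: the definition of sequent validity in a model, a simple set-theoretic reformulation of that definition as a universally quantified implication over $Z_1$, and Proposition~\ref{std trans prop} (the pointwise correctness of the standard translation).

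\medskip

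\noindent\textbf{Step 1: unfold sequent validity.} By the definition given at the end of Section~\ref{modal companions section}, the statement $\mathfrak{M}\models\alpha\proves\eta$ means that for every $x\in Z_1$, if $x\models\alpha$ then $x\models\eta$, which is precisely the set inclusion $\val{\alpha}_\mathfrak{M}\subseteq\val{\eta}_\mathfrak{M}$ after using the definition $\val{\alpha}_\mathfrak{M}=\{x\in Z_1\midsp x\models\alpha\}$ and analogously for $\eta$. This yields the first biconditional at once.

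\medskip

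\noindent\textbf{Step 2: translate pointwise.} For the second biconditional, I rewrite $\val{\alpha}_\mathfrak{M}\subseteq\val{\eta}_\mathfrak{M}$ as the first-order statement
\[
\forall^1 u\;\bigl(u\in\val{\alpha}_\mathfrak{M}\;\lra\;u\in\val{\eta}_\mathfrak{M}\bigr),
\]
i.e.\ $\forall^1 u\,(u\models\alpha\ra u\models\eta)$. Proposition~\ref{std trans prop} gives, for each $x\in Z_1$, the equivalence $\mathfrak{M},x\models\alpha$ iff $\mathfrak{F}\models\stx{u}{\alpha}[u:=x]$ (and similarly for $\eta$), where the interpretation of the predicate symbols in $\mathfrak{F}$ is given by the valuation $V$. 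Substituting into the displayed implication and pulling the universal quantifier over $u$ outside, I obtain
\[
\mathfrak{F}\models\forall^1 u\,(\stx{u}{\alpha}\ra\stx{u}{\eta})[V],
\]
which is exactly $\mathfrak{F}\models(\stx{u}{\alpha}\ra\stx{u}{\eta})[V]$ under the standard convention that free first-order variables in an open formula are implicitly universally closed (or, equivalently, the formula is required to hold under every assignment extending $V$ to the variable $u$).

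\medskip

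\noindent\textbf{What is and isn't hard.} There is no genuine obstacle: the argument is a direct chain of equivalences. The only point that requires a word of care is bookkeeping the role of $V$ — it supplies the interpretations of the unary predicates $\mathbf{P}_i,\mathbf{P}^i$ that appear in $\stx{u}{\alpha}$ and $\stx{u}{\eta}$, and Proposition~\ref{std trans prop} is invoked with exactly this $V$, so no shift of valuation is needed. Consequently the proof reduces to citing the preceding proposition and observing that the universal quantifier over $Z_1$ coming from the definition of set inclusion matches the implicit quantification over the free variable $u$ in the target formula.
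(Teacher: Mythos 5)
Your argument is correct and is exactly the intended one: the paper states this corollary without proof, treating it as an immediate consequence of Proposition~\ref{std trans prop} together with the definition of sequent validity, which is precisely the chain of equivalences you spell out. Your remark about the implicit universal closure of the free variable $u$ and the role of $V$ in interpreting the predicate symbols is the only bookkeeping point worth making, and you handle it correctly.
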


\section{Sahlqvist - Van Benthem Correspondence}
\label{sahlqvist section}
In Section~\ref{reduction section} we introduce the reduction strategy, designed to reduce a sequent in the language of distribution-free modal logic with negation and implication (DfML) to what we define to be a system of inequalities in the reduction language (the language of the modal companion of DfML) in {\em canonical Sahlqvist form}. The section is concluded with Definition~\ref{sahlqvist sequents and inequality systems}, fixing our understanding of when a sequent $\varphi\proves\psi$ in the language of DfML is Sahlqvist. Section~\ref{reduction structure section} describes the main steps of the (non-deterministic) generalized Sahlqvist -- Van Benthem algorithm, introducing {\em $\mathrm{t}$-invariance constraints} and the {\em guarded second-order translation}. To simplify the presentation we first prove, in Section~\ref{box and diamond section}, the correspondence result for the fragment of the language with necessity and possibility only, extending to negation in Section~\ref{negation section} and to implication in Section~\ref{implication section}.

\subsection{Reduction Rules and Sahlqvist Sequents}
\label{reduction section}
A {\em positive occurrence} of a propositional variable in a sentence $\zeta$ in the language of sorted modal logic without implication $\rspoon$ is one in the scope of an even number of applications of the priming operator. The variable {\em occurs positively} in $\zeta$ iff every one of its occurrences is positive. A sentence $\zeta$ is {\em positive} iff every propositional variable that occurs in $\zeta$ occurs positively in it.

\begin{defn}[Simple Sahlqvist Sequents]
\label{simple sahlqvist}
A {\em simple Sahlqvist sequent} $\alpha\proves\eta$ of the first sort, or $\beta\vproves\delta$ of the second sort, is a sequent with positive consequent $\eta$, respectively $\delta$, and such that the premiss $\alpha$, respectively $\beta$, of the sequent is built from $\top, \bot$ and boxed atoms by closing under conjunction and the additive operators $\diamondvert, \odot$ for the first sort, and $\diamondminus,\tdown$ and $\tright$ for the second sort. 
\end{defn}

Examples of simple Sahlqvist sequents are $\boxminus P\proves P, {\boxvert\boxvert} Q\vproves Q$, $\boxminus P\proves\diamondvert P$, ${\diamondvert\diamondvert}P\proves{\diamondvert} P$, $P\tright Q\proves P\tright(P\tright Q)$, $P\vproves \tdown Q$, but  $\boxminus P''\proves P''$ (i.e. the sequent $\bb p\proves p$ in the regular fragment of the language) is not simple Sahlqvist. Proving that every simple Sahlqvist sequent in the language of sorted modal logic effectively locally corresponds to a first-order formula can be done in the same way as in the classical case and no need for separate proof arises.

Semantically, a sequent can be equivalently regarded as a formal inequality $\zeta\leq_\sharp \xi$, where if the sequent is $\zeta\proves \xi$, then $\sharp=1$ and if the sequent is $\zeta\vproves \xi$, then $\sharp=\partial$. 

Simple Sahlqvist inequalities are defined in the obvious way, given Definition~\ref{simple sahlqvist}. 

Most of the pre-processing in the generalized Sahlqvist - Van Benthem correspondence algorithm consists in manipulating (reducing) formal systems of inequalities, which are systems of the following form, where  $n,m\geq 0, \mbox{ and } \sharp,\sharp_i\in\{1,\partial\}$,
\begin{equation}\label{set of inequalities}
S=\lset  Q_1''\leq_{\sharp_1} Q_1,\ldots,Q_n''\leq_{\sharp_n} Q_n, Q_{n+1}=_{\sharp_{P'_1}}P'_1,\ldots,Q_{n+m}=_{\sharp_{P'_m}}P'_m \midsp \zeta\leq_\sharp \xi \rset, 
\end{equation}
and where $\zeta\leq_\sharp \xi $ is its {\em main inequality} and $Q_i$ are propositional variables of sort determined by $\sharp_i$, for each $i=1,\ldots,n$ and $Q_{n+i}$ are of the same sort as $P_i'$, as indicated by the subscript to the equality symbol. 

Reduction aims at eliminating any occurrence of the priming operator on the left-side of the main inequality, as well as any occurrence of implication $\rspoon$.

We refer to formal inequalities of the form $Q''\leq_{\sharp Q}Q$ as {\em stability constraints} and to formal equations of the form $Q=_{\sharp_{P'}}P'$ as {\em change-of-variables constraints}. For brevity, we  write $\lset\mathrm{STB,CVC}\midsp\zeta\leq_\sharp\xi\rset $, at times displaying some constraints of interest included in $\mathrm{STB}$ and/or in $\mathrm{CVC}$.

\begin{defn}\label{equivalence of sets of inequalities}
For sets of formal inequalities $S_1,S_2$, define an equivalence relation by $S_1\sim S_2$ iff for any model $\mathfrak{M}=(\mathfrak{F},V)$ satisfying all the constraints, of the form $Q''\leq_{\sharp_Q} Q$, or $Q=_{\sharp_{P'}}P'$, in each of $S_1,S_2$, the model validates the main inequality of $S_1$ iff it validates the main inequality of $S_2$.
\end{defn}

\begin{ex}
$\lset\bb p\leq p\rset$,
$\lset\boxminus P''\leq_1P''\rset$, $\lset P''\leq_1P\midsp \boxminus P\leq_1 P\rset$, $\lset Q''\leq_1Q,P''\leq_1P\midsp \boxminus P\leq_1P\rset$ are equivalent in the sense of Definition~\ref{equivalence of sets of inequalities}. The first is an inequality in the regular fragment of the sorted modal language, which is then just an abbreviation for the second inequality $\boxminus P''\leq_1P''$. For the second and third sets,  $V(P)''\subseteq V(P)$ means that $V(P)=A$ is a Galois stable set, hence $A=A''$. The main (only) inequality of the second set is valid in a model iff $\lbminus A''\subseteq A''$ iff $\lbminus A\subseteq A$ iff the main inequality of the third set is valid in $\mathfrak{M}$. In the last set, the addition (or removal) of a redundant constraint $Q''\leq_1 Q$ is not relevant in the evaluation of the main inequalities.
\end{ex}

Table~\ref{reduction rules} presents a set of effectively executable reduction rules for sets of inequalities, proven to preserve equivalence in Lemma~\ref{R1-R6}. Note that  rule (R3) is a special instance of (R2), for $n=1$ and $\xi_1=\xi'$, but we include it because of its usefulness.
\begin{table}[!htbp]
\caption{Reduction Rules}
\label{reduction rules}
\begin{enumerate}
\item[(R1)] $\infrule{\lset \mathrm{STB,CVC}, P''\leq_{\sharp_P}P\midsp \zeta\leq_\sharp\xi\rset}{\lset \mathrm{STB,CVC}\midsp \zeta\leq_\sharp\xi\rset}$,\\ provided the propositional variable $P$ does  not occur in $\zeta$ or $\xi$
    \\
\item[(R2)]  $\infrule{\lset \mathrm{STB,CVC}\midsp \zeta''\leq_\sharp\xi_1'\cap\cdots\cap\xi'_n\rset}{\lset \mathrm{STB,CVC}\midsp \zeta\leq_\sharp\xi_1'\cap\cdots\cap\xi_n'\rset}$, for $n\geq 1$
\\
\item[(R3)]  $\infrule{\lset \mathrm{STB,CVC}\midsp \zeta''\leq_\sharp\xi''\rset}{\lset \mathrm{STB,CVC}\midsp \zeta\leq_\sharp\xi''\rset }$
\\
\item[(R4)] $\infrule{\lset \mathrm{STB,CVC}\midsp \zeta\leq_\sharp\xi\rset}{\lset \mathrm{STB,CVC}, P''\leq_{\sharp_P}P\midsp \zeta[P/P'']\leq_\sharp\xi[P/P'']\rset}$,\\
provided every occurrence of the propositional variable $P$ in each of $\zeta,\xi$ is double-primed and where
     $\sharp_P$ is the sort of $P$ and $\zeta[P/P''],\xi[P/P'']$ designate the results of uniformly replacing each occurrence of $P''$ by one of $P$ in each of $\zeta,\xi$
\\
\item[(R5)] If a re-write rule from the list $\mathrm{REWRITE}$ in Table~\ref{rewrite rules table} is applicable, update the system of inequalities by carrying out the re-write
\\
\item[(R6)] $\infrule{\lset \mathrm{STB,CVC}\midsp\zeta\leq_\sharp\xi\rset}{\lset \mathrm{STB,CVC},Q=_{\sharp_{P'}}P'\midsp\zeta[Q/P']\leq_\sharp\xi[Q/P']\rset }$,\\
provided the variable $P$ occurs in the main inequality only single-primed and $Q$ is a fresh variable of the same sort as $P'$.
\item[(R7)] Apply residuation to rewrite the main inequality according to the related rewrite rule
    \begin{tabbing}
    $\alpha\leq_1\eta\rspoon\zeta$ \hskip1cm\=$\mapsto$\hskip4mm\= $\eta\odot\alpha\leq_1\zeta$
    \hskip3cm\= $\diamondvert\alpha''\leq_1\eta$ \hskip1cm\=$\mapsto$\hskip4mm\= $\alpha''\leq_1\bbox_1\eta$
\\
 \>\> \> $\diamondminus\beta''\leq_\partial\delta$ \> $\mapsto$\> $\beta''\leq_\partial\bbox_\partial\delta$
    \end{tabbing}
\item[(R8)] $\infrule{\lset \mathrm{STB,CVC}\midsp \zeta\rspoon P\leq_\sharp\xi\rspoon P\rset}{\lset \mathrm{STB,CVC}\midsp \xi\leq_\sharp\zeta\rset}$\\
\item[(R9)] $\infrule{\lset \mathrm{STB,CVC}\midsp \zeta''\cap\boxplus P\leq_\sharp \theta_1'\cap\cdots\cap\theta_n'\rset}{\lset \mathrm{STB,CVC}\midsp \zeta\cap\boxplus P\leq_\sharp \theta_1'\cap\cdots\cap\theta_n'\rset}$,
      $\boxplus\in\{\boxminus,\boxvert\}$, $P$  constrained in $\mathrm{STB,CVC}$, $n\geq 1$
\end{enumerate}
\hrule
\end{table}

Introduction of change-of-variables constraints (rule (R6) in Table~\ref{reduction rules}) is motivated in the following example.
\begin{ex}\label{cvc example}
Consider the system $\lset P''\leq_1P\midsp(\diamondvert P)'\leq_\partial P'\rset $. 
By (R5.2), Table~\ref{rewrite rules table}, the system is equivalent to $\lset P''\leq_1P\midsp {\boxvert} P'\leq_\partial P'\rset$. As $P$ does not occur except single-primed, $P'$ designates an arbitrary Galois co-stable set $B$ and the inequality is valid iff $\lbvert B\subseteq B$. The value $V(P)\subseteq Z_1$ is not relevant in the evaluation of the inequality, just because $P$ does not occur unprimed. Consider $\lset P''\leq_1P, Q=_\partial P'\midsp \lbvert Q\leq_\partial Q\rset $. The change-of-variables constraint $Q=_\partial P'$ allows us to conclude that the two systems $\lset P''\leq_1P,Q=_\partial P'\midsp \lbvert Q\leq_\partial Q\rset $ and $\lset P''\leq_1P\midsp \boxvert P'\leq_\partial P'\rset $ are equivalent. As $P$ no longer occurs in the main inequality, by (R1) the system reduces to $\lset Q=_\partial P'\midsp \lbvert Q\leq_\partial Q\rset $.
\end{ex}

Replacing subterms by equivalent ones in an inequality $\zeta\leq_\sharp\xi$ obviously leads to equivalent systems of inequalities.  A minimal useful finite list of re-write rules is displayed in Table~\ref{rewrite rules table}. All rules are uni-directional, to avoid loops in execution. Most of the cases relate to taking advantage of the result in Proposition~\ref{box as restriction}, namely that the restriction to Galois sets of the box operator generated by a double-dual relation ($R''_\Box$, or $R''_\Diamond$)  is the dual to the diamond operator of the other sort, i.e. $\lbminus A=(\ldminus A')'$ and $\lbvert B=(\ldvert B')'$, for $A\in\gpsi$ and $B\in\gphi$.

\begin{table}[!htbp]
  \caption{REWRITE Rules}\label{rewrite rules table}
  (``$P$ constrained in $\mathrm{STB,CVC}$'' means that either $P''\leq_{\sharp_P}P$ is in $\mathrm{STB}$, or $P=_{\sharp_{Q'}}Q'$ is in $\mathrm{CVC}$)
  \begin{tabbing}
(R5.1) \hskip2mm\= $(\diamondminus\alpha')'$\hskip1.5cm\= $\mapsto\boxminus\alpha''$ \hskip1.8cm\= $\alpha\in\mathcal{L}_1$
\\
       \> $(\diamondvert\beta')'$\>$\mapsto\boxvert\beta''$ \>  $\beta\in\mathcal{L}_\partial$
\\[2mm]
(R5.2) \> $(\diamondminus P)'$\>$\mapsto \boxminus P'$ \>  $P$ constrained in $\mathrm{STB,CVC}$
\\
\> $(\diamondvert P)'$\>$ \mapsto \boxvert P'$  \> $P$ constrained in $\mathrm{STB,CVC}$
\\[2mm]
(R5.3) \> $(\boxminus P)''$\>$\mapsto\boxminus P$ \> $P$ constrained in $\mathrm{STB,CVC}$
\\
\> $(\boxvert P)''$\>$\mapsto \boxvert P$ \> $P$ constrained in $\mathrm{STB,CVC}$
\\[2mm]
(R5.4) \>  $P'''$\>$\mapsto P'$
\\[2mm]
(R5.5) \> $\boxminus(\zeta_1\cap\zeta_2)$\>$\mapsto \boxminus\zeta_1\cap\boxminus\zeta_2$ \> $\zeta_1,\zeta_2\in\mathcal{L}_1$
\\
\> $\boxvert(\zeta_1\cap\zeta_2)$\>$\mapsto \boxvert\zeta_1\cap\boxvert\zeta_2$ \> $\zeta_1,\zeta_2\in\mathcal{L}_\partial$
\\[2mm]
(R5.6) \> $(\eta\cap\zeta)''$\>$\mapsto   \eta''\cap\zeta''$
\\
\> $(\eta\cup\zeta)'$\>$\mapsto\eta'\cap\zeta'$
\\
(R5.7) 
\> $(\tdown\alpha'')'$\> $\mapsto$  $\boxtimes\alpha'$
\\
\> $(\tdown P)'$\>$\mapsto\boxtimes P'$ \> $P$ constrained in $\mathrm{STAB, CVC}$
\\
(R5.8)\> $P\rspoon Q$\>$\mapsto (P\tright Q')'$  \> $P,Q$ constrained in $\mathrm{STB}$, $\mathrm{CVC}$
\\[2mm]
(R5.9) \> $P_2\rspoon(P_1\rspoon Q)$\>$\mapsto P_1\odot P_2\rspoon Q$ \> for any variables $P_1,P_2,Q$
\end{tabbing}
\hrule
\end{table}

\begin{lemma}\label{R1-R6}
Executing any of the actions (R1)--(R9) listed in Table~\ref{reduction rules} to a system $\mathcal{S}_1$ of inequalities leads to an equivalent system $\mathcal{S}_2$.  
\end{lemma}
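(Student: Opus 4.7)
The plan is to proceed by case analysis on the nine rules (R1)--(R9), fixing for each rule a model $\mathfrak{M}=(\mathfrak{F},V)$ that satisfies the stability and change-of-variables constraints of both $\mathcal{S}_1$ and $\mathcal{S}_2$, and verifying that the main inequalities are validated under $\mathfrak{M}$ jointly or not at all. The bookkeeping rules (R1), (R4), (R6) are immediate: in (R1) the deleted stability constraint concerns a variable absent from the main inequality; in (R4) the added constraint $P''\leq_{\sharp_P}P$ together with the hypothesis that every occurrence of $P$ in $\zeta,\xi$ is as $P''$ makes the replacement $P''\mapsto P$ semantically vacuous (since $V(P)=V(P)''$ under the constraint); in (R6) the fresh variable $Q$ is pinned by $Q=_{\sharp_{P'}}P'$ to the value $V(P')$, so the substitution $P'\mapsto Q$ is again semantically an identity.

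The Galois-connection rules (R2), (R3) apply Lemma~\ref{basic facts}(6): if $G$ is Galois, then $W\subseteq G\iff W''\subseteq G$, and in both rules the right-hand side of the main inequality is manifestly Galois (a nonempty intersection of primed terms). Rule (R7) records adjunctions already present in the extended powerset algebra, namely $\diamondvert\dashv\bbox_1$, $\diamondminus\dashv\bbox_\partial$, and the residuation $\odot\dashv\rspoon$ of Theorem~\ref{preservation of residuals}. Rule (R8) follows from antitonicity of the residual in its first argument, interpreting the main inequality as universally quantified over the value of $P$ (as is standard for frame validity of a sequent) and instantiating $P$ minimally. The REWRITE rules (R5.1)--(R5.9) are then direct consequences of the interaction identities collected in Table~\ref{interaction identities} and of Propositions~\ref{box as restriction}, \ref{ltdown and black}, \ref{Ra long and short}, Corollary~\ref{boxvert as restriction}, together with the Galois identities $W'''=W'$ and $(W_1\cup W_2)'=W_1'\cap W_2'$; the ``constrained variable'' side-condition in many of these cases is precisely what ensures that the operand is already interpreted as a Galois set, as the cited identities require.

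The main obstacle I anticipate is in (R5.6), first part, and in (R9), both of which concern the interplay between closure and intersection. The equality $(\eta\cap\zeta)''=\eta''\cap\zeta''$ holds in general only with the inclusion $\subseteq$; the rule must therefore be applied as an invariant maintained by all prior reduction steps, namely that both $\eta$ and $\zeta$ are already Galois (each being a closure $\alpha''$, a box term, or a constrained variable), so that $\eta=\eta''$, $\zeta=\zeta''$ and the rewrite trivialises. For (R9), the forward direction is immediate from $\zeta\subseteq\zeta''$; the delicate backward direction $\zeta\cap\boxplus P\leq G\Rightarrow\zeta''\cap\boxplus P\leq G$ is handled by observing that $\boxplus P$ is itself Galois when $P$ is constrained (by Proposition~\ref{box as restriction} and Corollary~\ref{boxvert as restriction}), moving it across the inequality via the diamond--box adjunction so as to reduce to an inequality whose right-hand side is Galois, and then applying Lemma~\ref{basic facts}(6) to the stripped inequality. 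Once these two cases are settled, the remaining verifications reduce to routine equational reasoning from the identities already established in Section~\ref{dual algebras section}.
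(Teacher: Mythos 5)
Your overall strategy---a rule-by-rule case analysis resting on Lemma~\ref{basic facts}(6), the residuation facts of Section~\ref{dual algebras section} and the interaction identities---is the same as the paper's, and your treatment of (R1)--(R4), (R6) and (R7) matches it. Your remark on (R5.6) is in fact \emph{more} careful than the paper, which dismisses all of (R5) as ``replacement of equivalents'': the rewrite $(\eta\cap\zeta)''\mapsto\eta''\cap\zeta''$ is only an inclusion in general, and making explicit the invariant that the subterms already denote Galois sets is a genuine improvement. Two steps, however, do not go through as you describe them.

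For (R9), the backward direction cannot be obtained by ``moving $\boxplus P$ across the inequality via the diamond--box adjunction.'' The adjunctions available in the extended powerset algebra ($\ldvert\dashv\lbbox_1$, $\ldminus\dashv\lbbox_\partial$, $\bigodot\dashv\Ra_T$) residuate the image operators, not intersection; the powerset residual of $\cap$ with $\lbplus V(P)$ fixed is the Boolean implication $-(\lbplus V(P))\cup G_1$, which is not a Galois set, so Lemma~\ref{basic facts}(6) cannot be applied to the ``stripped'' inequality and your reduction stalls. The paper argues differently: since $P$ is constrained, $\lbplus V(P)$ is Galois (Proposition~\ref{box as restriction}, Corollary~\ref{boxvert as restriction}), it asserts the closure identity $\val{\zeta''}\cap\lbplus V(P)=(\val{\zeta}\cap\lbplus V(P))''$, and then concludes with $W\subseteq G_1$ iff $W''\subseteq G_1$ for the Galois right-hand side $G_1$. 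If you keep your route you must supply a substitute for the missing adjunction; if you adopt the paper's, you should actually prove that closure identity rather than assert it. Separately, in (R8) the instantiation that does the work is not a \emph{minimal} one: one substitutes $\zeta$ itself for $P$ in the premiss, obtaining $\zeta\rspoon\zeta\leq_\sharp\xi\rspoon\zeta$, i.e.\ $\top\leq_\sharp\xi\rspoon\zeta$, which is equivalent to $\xi\leq_\sharp\zeta$; a minimal choice of $P$ would not yield the conclusion.
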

\begin{proof}
We treat each case in turn.
\begin{enumerate}
\item[(R1)] Immediate, exactly because the variable $P$ does not occur in $\zeta\leq_\sharp\xi$. 
\item[(R2)] Straightforward, following from the fact that if $G$ is a Galois set and $W$ is any set, then $W''\subseteq G$ iff $W\subseteq G$.
\item[(R3)] The argument is the same as for (R2).
\item[(R4)]  For the introduction of a stability constraint rule (R4), validity of the inequality depends only on interpreting $P''$ as a Galois set and since $P$ does not occur unprimed, too, by assumption, the particular value of $V(P)$ is not relevant.
\item[(R5)] Immediate, since the rule just performs replacement of equivalents.
\item[(R6)] Straightforward, since by Definition~\ref{equivalence of sets of inequalities} the formal equality $Q=_{\sharp_{P'}}P'$ must be also valid.
\item[(R7)] Immediate.
\item[(R8)] From $\xi\leq_\sharp\zeta$ it is immediate that we obtain $\zeta\rspoon P\leq_\sharp\xi\rspoon\zeta$. The converse is straightforward, too, since in particular by substitution we obtain $\zeta\rspoon\zeta\leq_\sharp \xi\rspoon\zeta$. The latter is equivalent to each of $\top\leq_\sharp\xi\rspoon\zeta$ and $\xi\leq_\sharp\zeta$.
\item[(R9)] The rule was written in this form to avoid loops that can otherwise arise given also the re-write rules (R5.3) and (R5.6). The constraint on $P$ means that $P$ is interpreted as a Galois set $G=V(P)$.  Letting $\val{\zeta''}=U''$ and $\val{\theta_1'\cap\cdots\cap\theta_n'}=G_1$ be the Galois set interpreting the conclusion, and given that $\lbplus G=(\lbplus G)''$ for $\lbplus\in\{\lbminus,\lbvert\}$, by Proposition~\ref{box as restriction} and Corollary~\ref{boxvert as restriction}, we have $U''\cap\lbplus G=(U\cap\lbplus G)''\subseteq G_1$ iff $U\cap\lbplus G\subseteq G_1$, since for any set $W$, $W\subseteq G_1$ iff $W''\subseteq G_1$, given that $G_1$ is a Galois set.
\end{enumerate}
All cases have been examined, hence the proof is complete.
\end{proof}

\begin{defn}
  \label{canonical Sahlqvist form}
A system $\lset \mathrm{STB,CVC}\midsp\zeta\leq_\sharp\xi\rset $ is in {\em canonical Sahlqvist form} if the main inequality $\zeta\leq_\sharp\xi$ is simple Sahlqvist and for any stability constraint $P''_1\leq_{\sharp_{P_1}}P_1$ in $\mathrm{STB}$ and change-of-variables constraint $P_2=_{\sharp_{Q'}}Q'$ in $\mathrm{CVC}$, $P_1$ and $P_2$ occur only unprimed in $\zeta,\xi$.
\end{defn}
Note that in the right-hand-side of the inequality an unprimed variable $P$ may  be within the scope of $(\;)'$, if $P$ occurs as a subterm of a primed term, as in $(\diamondvert P)''$.

\begin{defn}\label{sahlqvist sequents and inequality systems}
A system of inequalities as in \eqref{set of inequalities} {\em is Sahlqvist} if it can be reduced to canonical Sahlqvist form, using the reduction rules (R1)--(R9). 

A 1-sequent $\zeta\proves\xi$ is Sahlqvist if  the associated inequality system $\lset \zeta\leq_1\xi\rset $ is Sahlqvist. Similarly for a $\partial$-sequent $\zeta\vproves\xi$.

A sequent $\varphi\proves\psi$ in the language of distribution-free modal logic with negation and implication is Sahlqvist iff either its translation $\varphi^\bullet\proves\psi^\bullet$, or its co-translation (dual translation) $\psi^\circ\vproves\varphi^\circ$ is Sahlqvist.
\end{defn}

\begin{lemma}\label{decidable sahlqvist}
  It is decidable whether a sequent $\varphi\proves\psi$ in the language of DfML is Sahlqvist, or not.
\end{lemma}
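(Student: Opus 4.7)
The plan is to exhibit an algorithm that, given $\varphi\proves\psi$, first computes both the translation $\varphi^\bullet\proves\psi^\bullet$ and the co-translation $\psi^\circ\vproves\varphi^\circ$ by straightforward recursion on the parse tree according to Table~\ref{syntactic translation into sorted}, and then attempts to reduce each of the associated singleton systems $\lset\varphi^\bullet\leq_1\psi^\bullet\rset$ and $\lset\psi^\circ\leq_\partial\varphi^\circ\rset$ to canonical Sahlqvist form using the rules (R1)--(R9). By Definition~\ref{sahlqvist sequents and inequality systems}, the original sequent is Sahlqvist iff at least one of the two attempts succeeds. The translations are computable in polynomial time, and checking whether a system is in canonical Sahlqvist form is decidable by direct syntactic inspection: one verifies that the main inequality has simple Sahlqvist shape in the sense of Definition~\ref{simple sahlqvist}, and that every variable constrained in $\mathrm{STB}$ or $\mathrm{CVC}$ occurs only unprimed in the main inequality. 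Hence the entire question reduces to showing that the reduction procedure can be carried out in a bounded, searchable fashion.

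The key step is to assign to every system $\lset\mathrm{STB,CVC}\midsp\zeta\leq_\sharp\xi\rset$ a well-founded measure $\mu$ that is strictly decreased by every rule in (R1)--(R9). One natural candidate is a lexicographic tuple $(n_1,n_2,n_3,n_4)$ where $n_1$ counts the occurrences of variables $P$ in the main inequality with $P''$ appearing but $P$ not yet constrained (bounding the applicability of (R4)), $n_2$ counts single-primed occurrences of variables not yet abbreviated by a $\mathrm{CVC}$-constraint (bounding (R6)), $n_3$ is a weighted term-size measure on $\zeta,\xi$ that strictly decreases under each rewrite in (R5) and under (R2),(R3),(R7),(R8),(R9), and $n_4$ is the number of constraints in $\mathrm{STB,CVC}$ (which decreases under (R1)). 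One then verifies rule-by-rule that each of (R1)--(R9), when applied, strictly decreases this measure in the lexicographic order. With such a $\mu$ in place, every reduction sequence has length bounded by a computable function of the input, so the tree of all possible rule applications is finite. The decision algorithm then performs a bounded depth-first search of this tree, returning Sahlqvist if any leaf is in canonical Sahlqvist form, and not-Sahlqvist otherwise.

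The main obstacle is calibrating the weights in $n_3$ so that the rewrite rules (R5.5), (R5.6) (distributing box and prime over meets and joins) and rules such as (R5.1)--(R5.3) (which swap between $(\boxminus\alpha)''$, $(\diamondminus\alpha')'$ and $\boxminus\alpha$) genuinely decrease the same measure. These rules can create transient growth in raw formula size before rules like (R4) or (R9) contract the result, and the rewrites in (R5) apply in overlapping contexts, so naive measures such as length or depth fail. The fix is to use a polynomial term-weight in which unprimed boxes carry strictly smaller weight than primed diamonds, and in which the prime-depth of subterms appears lexicographically above raw size. Once the coefficients are fixed so that each rule of (R5) has local decrease, the remaining rules present no difficulty: (R1),(R8),(R9) are monotone shrinkings, (R2),(R3),(R7) eliminate prime or apply residuation, and (R4),(R6) strictly decrement $n_1$ or $n_2$ while increasing only $n_4$. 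This is bookkeeping rather than conceptual work, and once it is in place the lemma follows.
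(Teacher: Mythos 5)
Your algorithm is essentially the paper's own proof: branch over the translation and the co-translation (and over any alternative ways of translating subsentences), exhaustively explore all reduction sequences under (R1)--(R9), and syntactically test each terminal system for canonical Sahlqvist form. The one place you go beyond the paper is the termination argument --- the paper merely asserts that the unidirectional statement of the rules precludes loops --- and there your well-founded measure is the right idea but remains a sketch; note in particular that, as stated, your first component $n_1$ can \emph{increase} under (R5.6), since $(P\cap\zeta)''\mapsto P''\cap\zeta''$ turns an unprimed occurrence of an unconstrained $P$ into a double-primed one, so the calibration you defer as ``bookkeeping'' is not entirely routine and would need to be carried out explicitly for the lexicographic decrease to hold.
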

\begin{proof}
Given $\varphi\proves\psi$ as input, non-deterministically choose one of $\varphi^\bullet\leq_1\psi^\bullet$ or $\psi^\circ\leq_\partial\varphi^\circ$. For each possible choice, if multiple equivalent ways to (co)translate a subsentence exist, make a non-deterministic choice for one of them. Having completed the (co)translation and as long as some reduction rule applies, make a non-deterministic choice among the applicable rules and update the system of inequalities. The statement of rules guarantees that there can be no loops in rule application, hence the process eventually terminates.  By inspection of syntax it is decidable whether the output of this process is in canonical Sahlqvist form. This decides whether the input sequent is Sahlqvist.
\end{proof}

We provide below some examples, showing that there are cases where both the translation and the dual translation can be reduced to canonical Sahlqvist form, but there are also cases where only one can be so reduced (and there are of course cases where none can be so reduced).

\begin{ex}
  \label{both}
  The sequent $\bb p\proves p$ in the language of DfML translates to $\boxminus P''\proves P''$, which can be reduced quickly to $\lset P''\leq_1 P\midsp \boxminus P\leq_1 P\rset$. Its co-translation, according to Table~\ref{syntactic translation into sorted}, is the $\partial$-sequent $P'\vproves (\diamondminus P')''$, reducing to $\lset Q=_\partial P'\midsp Q\leq_\partial(\diamondminus Q)''\rset$. In this case both the translation and the co-translation reduce to canonical Sahlqvist form.

Consider also the DfML sequent $p\proves\dd p$. It  translates to the 1-sequent $P''\proves(\diamondvert P'')''$ and it reduces to $\lset P''\leq_1 P\midsp P\leq_1(\diamondvert P)''$, which is in canonical Sahlqvist form. 
  
Its co-translation is the $\partial$-sequent $(\diamondvert P'')'\vproves P'$. Applying (R5.1) with $\alpha=P'$ to the original system $\lset (\diamondvert P'')'\leq_\partial P'\rset$ we obtain its equivalent $\lset \boxvert P'''\leq_\partial P'\rset$ and then, by (R5.4), $\lset \boxvert P'\leq_\partial P'\rset$. By the change of variables rule (R6) this is further reduced to $\lset Q=_\partial P'\midsp \boxvert Q\leq_\partial Q\rset$, which is also in canonical Sahlqvist form. 
\end{ex}

\begin{ex}\label{only one is sahlqvist}
Consider the sequent ${\dd\dd} p\proves\dd p$ in the language of distribution-free modal logic. Recall that $(\dd p)^\bullet=(\diamondvert p^\bullet)''=(\diamondvert P'')''$ and then $({\dd\dd}p)^\bullet=(\diamondvert(\dd p)^\bullet)''=(\diamondvert(\diamondvert P'')'')''$

The translation of ${\dd\dd}p\proves{\dd}p$ is the 1-sequent $(\diamondvert(\diamondvert P'')'')''\proves(\diamondvert P'')''$ and its co-translation (dual translation) is the $\partial$-sequent $(\diamondvert P'')'\vproves (\diamondvert(\diamondvert P'')'')'$. The reduction sequence for the dual translation is shown below, reaching canonical Sahlqvist form.

\begin{tabbing}
1.\hskip6mm\=$\lset(\diamondvert P'')'\leq_\partial (\diamondvert(\diamondvert P'')'')'\rset$\hskip3cm\= \\
2.\>$\lset P''\leq_1P\midsp (\diamondvert P)'\leq_\partial (\diamondvert(\diamondvert P)'')'\rset$\> by (R4)\\
3.\>$\lset P''\leq_1P\midsp \boxvert P'\leq_\partial (\diamondvert(\boxvert P')')'\rset$\> by (R5.2)\\
4.\>$\lset P''\leq_1P,Q=_\partial P'\midsp \boxvert Q\leq_\partial (\diamondvert(\boxvert Q)')'\rset$\> by (R6)\\
5.\>$\lset Q=_\partial P'\midsp \boxvert Q\leq_\partial (\diamondvert(\boxvert Q)')'\rset$\> by (R1)
\end{tabbing}
The reduction rules applied to the 1-sequent do not succeed to reduce it to a system in canonical Sahlqvist form.

\begin{tabbing}
1.\hskip6mm\= $\lset (\diamondvert(\diamondvert P'')'')''\leq_1(\diamondvert P'')''\rset$\hskip2cm\=\\
2.\> $\lset P''\leq_1P\midsp (\diamondvert(\diamondvert P)'')''\leq_1(\diamondvert P)''\rset$\> by (R4)\\
3.\> $\lset P''\leq_1P\midsp (\diamondvert(\boxvert P')')''\leq_1(\diamondvert P)''\rset$\> by (R5.2)\\
4.\> $\lset P''\leq_1P\midsp (  \boxvert(\boxvert P')''   )''\leq_1(\diamondvert P)''\rset$  \> by (R5.1), with $\alpha=\boxvert P'$
\end{tabbing}
There are some further steps that can be carried out, but no further reduction rule application will result in eliminating all the occurrences of the priming operator on the left-hand-side of the inequality, as the reader can surely verify.
\end{ex}

\subsection{Structure of the Algorithm}
\label{reduction structure section}
\paragraph{Step 1 (Reduce to Canonical Sahlqvist Form).}
\noindent
\underline{Input:} A sequent $\varphi\proves\psi$ in the language of DfML.\\
Non-deterministically choose (spawn parallel threads) to process
 either the translation \mbox{$\varphi^\bullet\le_1\psi^\bullet$}, or the co-translation (dual translation) $\psi^\circ\leq_\partial\varphi^\circ$ of the input. Run the reduction process described in the proof of Lemma~\ref{decidable sahlqvist}. If neither of the (co)translation sequents reduces to a system of formal inequalities in canonical Sahlqvist form, then fail, else continue to step 2, with input either a system $\lset \mathrm{STB,CVC}\midsp\alpha\leq_1\eta\rset $, or a system $\lset \mathrm{STB,CVC}\midsp\beta\leq_\partial\delta\rset$, whichever was the output of this step.

\paragraph{Step 2 (Calculate $\mathrm{t}$-Invariance Conditions).}
\noindent
\underline{Input:} A system $\lset \mathrm{STB,CVC}\midsp\alpha\leq_1\eta\rset $ (or $\lset \mathrm{STB,CVC}\midsp\beta\leq_\partial\delta\rset$) in canonical Sahlqvist form, where $\mathrm{STB}=\{P''_i\leq_{\sharp_i}P_i\midsp i=1,\ldots,n\}$ and $\mathrm{CVC}=\{P_{n+j}=_{\sharp_{Q'_j}}Q'_j\midsp j=1,\ldots,k\}$.\\
Set $\mathrm{t{-}INV}=\top$.\\
{\bf For each} $i=1,\ldots,n+k$\\
\hspace*{5mm} {\bf If} $i\leq n$ and $\sharp_i=1$, or $n< i=n+j\leq n+k$ and $\sharp_{Q'_j}=1$\\
\hspace*{1cm} Update $\mathrm{t{-}INV}\leftarrow\mathrm{t{-}INV}\wedge\forall^1 u_i[\mathrm{t}(P_i)(u_i)\lra P_i(u_i)]$, \\
\hspace*{5mm} {\bf else} (cases $\sharp_i=\partial, \sharp_{Q'_j}=\partial$)\\
\hspace*{1cm} Update $\mathrm{t{-}INV}\leftarrow\mathrm{t{-}INV}\wedge\forall^\partial u_i[\mathrm{t}(P_i)(u_i)\lra P_i(u_i)]$,\\
where the introduced conjunct abbreviates the formula in expression~\eqref{t-invariance condition 1} in the ``if'' case
\begin{equation}\label{t-invariance condition 1}
\forall^1 u[\underbrace{\forall^\partial v(u\mathbf{I}v\lra\exists^1 u_1(u_1\mathbf{I}v\wedge {P_i}(u_1)))}_{t({P_i})(u)}\lra{P_i}(u)],
\end{equation}
and that in expression~\eqref{t-invariance condition dual} in the ``else'' case
\begin{equation}\label{t-invariance condition dual}
\forall^\partial y[\underbrace{\forall^1 z (z\mathbf{I}y\lra\exists^\partial v(z\mathbf{I}v\wedge{P_i}(v)))}_{\mathrm{t}({P_i}(u))}\lra {P_i}(y)],
\end{equation}
respectively. The terminology ``$\mathrm{t}${-}invariance'' was introduced in \cite{vb}.

\paragraph{Step 3 (Generate the Guarded Second-Order Translation).}
\noindent
\underline{Input:} A guard $\mathrm{t{-}INV}=\bigwedge_{i=1}^{n+k}\forall^{\sharp_i} u_i[\mathrm{t}(P_i)(u_i)\lra P_i(u_i)]$, where for each $i=1,\ldots,n+k$, $\sharp_i\in\{1,\partial\}$ is the sort of $P_i$ and $\forall^{\sharp_i}\in\{\forall^1,\forall^\partial\}$, according to the value of $\sharp_i$.\\
\underline{Output:} The  {\em guarded second-order translation}, an expression of the form
\begin{equation}\label{guarded second order}
\forall^{\sharp_1} P_1\cdots\forall^{\sharp_{n+k}} P_{n+k}\forall^{\flat_1} Q^*_1\cdots\forall^{\flat_m} Q^*_m\forall^1 x(\mathrm{t{-}INV}\wedge\;\mathrm{ST}_x(\alpha)\lra\mathrm{ST}_x(\eta))
\end{equation}
or of the form 
\begin{equation}\label{guarded second order dual}
\forall^{\sharp_1} P_1\cdots\forall^{\sharp_{n+k}} P_{n+k}\forall^{\flat_1} Q^*_1\cdots\forall^{\flat_m} Q^*_m\forall^\partial y(\mathrm{t{-}INV}\wedge\;\mathrm{ST}_y(\beta)\lra\mathrm{ST}_y(\delta))
\end{equation}
depending on whether the translation, or the co-translation is being processed and, furthermore (a)  $P_i$, for $i=1,\ldots,n+k$, and $Q^*_j$, for $j=1,\ldots,m$, are all the second-order variables occurring in $\mathrm{ST}_x(\alpha)$ and $\mathrm{ST}_x(\eta)$ (respectively, in $\mathrm{ST}_y(\beta)$ and $\mathrm{ST}_y(\delta)$), corresponding to the propositional variables $P_i,Q^*_j$ occurring in $\alpha,\eta$ (respectively, in $\beta,\delta$) and (b) $\sharp_i\in\{1,\partial\}$ and $\flat_j\in\{1,\partial\}$ designate the sort of $P_i$ and of $Q^*_j$, respectively.

\paragraph{Step 4 (Pull-out Existential Quantifiers).}
This step is the same as in the classical case, using familiar equivalences to pull existential quantifiers in prenex position. It is detailed in the course of the proof of the correspondence result (Theorem~\ref{Sahlqvist thm}).

\paragraph{Step 5 (Determine Minimal Instantiations).}
The formal detail is explained in the course of the proof of Theorem~\ref{Sahlqvist thm}. We clarify the issues involved by discussing an example.

Consider the sequent $p\proves\dd p$ in the language of distribution-free modal logic. Applying the translation of Table~\ref{syntactic translation into sorted} we obtain the 1-sequent $P''\proves(\diamondvert P'')''$. By reduction, we obtain the system
 $\lset P''\leq_1 P\midsp P\leq_1 (\diamondvert P)''\rset $ and the guarded second-order translation is
\begin{equation}\label{so2}
\forall^1 P\forall^1 x[\forall^1 u[\mathrm{t}(P)(u)\lra P(u)]\wedge P(x)\lra\mathrm{ST}_x((\diamondvert P)''), \mbox{ where }
\end{equation}
  \begin{equation}\label{consequent}
 \mathrm{ST}_x((\diamondvert P)'')= \forall^\partial v[x\mathbf{I}v\lra\exists^1 z(z\mathbf{I}v\wedge\exists^1 u(z\mathbf{R}^{11}_\Diamond u\wedge P(u)))].
  \end{equation}
For the minimal instantiation in the classical case we set $\lambda(P)=\lambda s.(s=x)$. This is because we interpret $P$ as the smallest set possible and since $P(x)$ is in the antecedent (which we think of as $x\in P$), we choose $\{x\}$ as the interpretation of $P$.
But $\{x\}$ is not a stable set and this interpretation results in falsifying the $\mathrm{t}${-}invariance condition, hence making the implication in~\eqref{so2} vacuously true (so that no correspondent is computed). Instead, the minimal stable set containing $x$ is the principal upper set $\Gamma x=\{z\midsp x\leq z\}$. If $P$ is (interpreted as) a principal upper set, i.e. a closed element $\Gamma x$, then we need to set $\lambda(P)=\lambda s.x\leq s$, so that, allowing ourselves some notational abuse, $u\in P$ iff $x\leq u$, which is what we obtain by the $\beta$-reduction $\lambda(P)u=(x\leq u)$.

\paragraph{Step 6 (Eliminate Second-Order Quantifiers).}
The rationale is the same as in the classical Sahlqvist -- Van Benthem algorithm, substituting $\lambda(P)$ for $P$ and performing $\beta$-reduction, and it is detailed in the proof of Theorem~\ref{Sahlqvist thm}.

\subsection{Box and Diamond}
\label{box and diamond section}
To simplify the presentation, we first restrict to the fragment $\mathcal{L}_{\Box\Diamond}$ of the language of distribution-free modal logic, returning to a treatment of negation and implication in the subsequent two Sections~\ref{negation section} and~\ref{implication section}.

\begin{thm}\label{Sahlqvist thm}
  Every Sahlqvist sequent in the language $\mathcal{L}_{\Box\Diamond}$  of distribution-free modal logic restricted to $\bb,\dd$ has a first-order local correspondent, effectively computable from the input sequent.
\end{thm}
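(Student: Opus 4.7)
The plan is to follow the six-step algorithm set out in Section~\ref{reduction structure section}, verifying that each step either preserves equivalence or produces a logically equivalent reformulation. First, invoke Definition~\ref{sahlqvist sequents and inequality systems} together with the full abstraction Theorem~\ref{full abstraction of trans in sorted modal} and Lemma~\ref{R1-R6}: it suffices to produce a first-order local correspondent for a system $\lset \mathrm{STB,CVC}\midsp \alpha\leq_1\eta\rset$ (or its sort-$\partial$ analogue) in canonical Sahlqvist form. In the restricted fragment $\mathcal{L}_{\Box\Diamond}$, the antecedent $\alpha$ is built from $\top,\bot$, boxed atoms (of the form $\boxminus P$ or $\boxvert P$, arising from constrained variables via the reduction rules), conjunction, and the additive diamonds $\diamondvert,\diamondminus$; the consequent $\eta$ is positive in all propositional variables.

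Next, apply the standard translation of Table~\ref{std-trans} and conjoin the guard to generate the guarded second-order formula~\eqref{guarded second order}. Using the usual first-order equivalences, pull the existential quantifiers arising from diamond subformulas in $\mathrm{ST}_x(\alpha)$ into prenex position; this binds ``diamond witnesses'' $\vec z$ and exposes the universally quantified translation clauses $\mathrm{ST}_{z_j}(\boxminus P_{i_j})$ of the boxed atoms. The antecedent then becomes a conjunction of: (i) the $\mathrm{t}$-invariance conditions; (ii) relational atoms $\mathrm{REL}(x,\vec z)$ encoding the diamond chains; and (iii) for each constrained variable $P_i$, clauses $\forall^\ast w\,(z_{k_j}\mathbf{R}'' w \lra P_i(w))$ forcing the set $W_{P_i}:=\bigcup_j R''\, z_{k_j}$ to lie within the interpretation of $P_i$.

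The heart of the argument is minimal instantiation. In the classical case, one would set $\lambda(P_i)(s):=W_{P_i}(s)$, but this set is typically not Galois stable, which would falsify the $\mathrm{t}$-invariance guard and trivialize the implication. Instead, instantiate $P_i$ by the Galois closure $W_{P_i}''$, expressible in first-order terms through the relation $\mathbf{I}$: for a sort-$1$ variable,
\[
\lambda(P_i)(s)\;:=\;\forall^\partial v\,\bigl[\,\forall^1 w\,(W_{P_i}(w)\lra w\mathbf{I} v)\lra s\mathbf{I} v\,\bigr],
\]
which is membership in $W_{P_i}''=\rperp(W_{P_i}\rperp)$ as described in Section~\ref{frames section}. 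By construction $\lambda(P_i)$ is Galois stable (hence the $\mathrm{t}$-invariance conjuncts are satisfied), and $W_{P_i}\subseteq W_{P_i}''$ ensures the boxed-atom clauses hold. Substituting $\lambda(P_i)$ for each constrained $P_i$, $\beta$-reducing, and using that $\eta$ is positive in the constrained variables (so $\mathrm{ST}_x(\eta)$ is $\subseteq$-monotone in each $P_i$), one obtains the candidate first-order correspondent
\[
\forall^1 x\,\forall\vec z\,\bigl[\mathrm{REL}(x,\vec z)\lra \mathrm{ST}_x(\eta)[\lambda(\vec P)/\vec P]\bigr].
\]

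The main obstacle is proving that this first-order formula is equivalent to, not merely implied by, the guarded second-order formula. The forward direction is immediate by instantiation: if the second-order statement holds for every Galois-set valuation it holds for $\lambda(\vec P)$, which satisfies the $\mathrm{t}$-invariance guard and the antecedent, forcing the consequent. The converse uses monotonicity together with a key stability observation: for any Galois-set valuation $V$ making the guarded antecedent true, the boxed-atom clauses give $W_{P_i}\subseteq V(P_i)$; since $V(P_i)$ is Galois stable, double-priming preserves the inclusion, giving $W_{P_i}''\subseteq V(P_i)''=V(P_i)$, i.e.\ $\lambda(P_i)\subseteq V(P_i)$ pointwise; positivity of $\eta$ then lifts $\mathrm{ST}_x(\eta)[\lambda(\vec P)/\vec P]$ to $\mathrm{ST}_x(\eta)[V]$. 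The delicate technical point --- and the reason the classical Sahlqvist argument requires genuine adaptation here --- is that the minimal Galois stable set containing a subset $W$ is the closure $W''$ rather than $W$ itself, and the correspondence algorithm must express this closure in the first-order frame language through the relation $\mathbf{I}$.
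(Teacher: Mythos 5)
Your proposal follows essentially the same route as the paper's proof: reduce to canonical Sahlqvist form via Lemma~\ref{R1-R6}, form the guarded second-order translation, pull existential quantifiers to prenex position, instantiate each constrained second-order variable by the \emph{Galois closure} of the classical minimal valuation (so that the $\mathrm{t}$-invariance guard is satisfied), and close the converse direction by positivity/monotonicity together with the fact that $W\subseteq G$ iff $W''\subseteq G$ for Galois stable $G$. Two slips in the write-up are worth repairing. First, your decomposition of the antecedent lists only $\mathrm{t{-}INV}$, $\mathrm{REL}$ and the boxed-atom clauses, and accordingly you set $W_{P_i}:=\bigcup_j R''\,z_{k_j}$; but the antecedent in canonical Sahlqvist form also contains unboxed atomic conjuncts $P_i(u)$ (the paper's $\mathrm{AT}$), and the minimal set must include the corresponding singletons $\{V(u)\}$ as well --- the paper's $U_i$ is $\bigcup_p\{V(z^i_p)\}\cup\bigcup_q V(x_q)(R''_\Box)^{k_q+1}$. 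As written, your instantiation already fails on $p\proves\dd p$, whose reduced antecedent is just $P(x)$ with no boxed atoms, where the intended minimal valuation is $\{V(x)\}''=\Gamma V(x)$. Second, your displayed formula for $\lambda(P_i)$ writes the $\rperp(\cdot\rperp)$ pattern using the predicate $\mathbf{I}$; that pattern requires the Galois relation $\widetilde{\mathbf{I}}$ (interpreted as $\upv$, the complement of $I$), whereas with $\mathbf{I}$ itself one must use the equivalent $\rho\lambda$ form $\forall^\partial y\,(s\mathbf{I}y\lra\exists^1 z\,(z\mathbf{I}y\wedge W_{P_i}(z)))$ as the paper does. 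Both are notational/bookkeeping repairs; the underlying argument and its key idea coincide with the paper's.
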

\begin{proof}
Note that if $\lset\mathrm{STB,CVC}\midsp\zeta\leq_\sharp\xi\rset$ is a system of inequalities in canonical Sahlqvist form obtained by pre-processing (reducing) a sequent $\varphi\proves\psi$ in the fragment $\mathcal{L}_{\Box\Diamond}$ of the language of DfML, then for each stability constraint $P''\leq_{\sharp_P}P$ and each change-of-variables constraint $P=_{\sharp_{Q'}}Q'$ we shall have $\sharp_P=\sharp=\sharp_{Q'}$. This does not affect the proof argument, but it does simplify notation in the guarded second-order translation.

\paragraph{Reduction and Guarded Second-Order Translation.}
The hypothesis implies that pre-processing terminates with output a system of formal inequalities in canonical Sahlqvist form.
 
We give the proof for the case $\lset\mathrm{STB,CVC}\midsp \alpha\leq_1\eta\rset$, as the case where the co-translation was processed is similar.

Given $\lset\mathrm{STB,CVC}\midsp \alpha \leq_1\eta \rset$ in canonical Sahlqvist form,  let \eqref{guarded second order}, repeated below, 
\[
\forall^1 P_1\cdots\forall^1 P_n\forall^1 P_{n+1}\cdots\forall^1 P_{n+k}\forall^1 Q^*_1\cdots\forall^1 Q^*_m\forall^1 x(\mathrm{t{-}INV}\wedge\;\mathrm{ST}_x(\alpha )\lra\mathrm{ST}_x(\eta ))
\]
be its guarded second order translation. By a change of bound variables, if necessary, we ensure that quantifiers bind distinct variables.

\paragraph{Special/Trivial Cases.}
If a quantifier binds a second-order variable $P$ that occurs only in the consequent $\eta $, then replace such occurrences of $P$ by $\bot$, remove the quantification $\forall^1P$ and, if applicable, remove the $\mathrm{t}$-invariance conjunct $\forall^1u[\mathrm{t}(P)(u)\ra P(u)]$.

If $\mathrm{ST}_x(\alpha )$ is equivalent to $\top$, then the formula $\mathrm{ST}_x(\tilde{\eta})$ is a local first-order correspondent, where $\mathrm{ST}_x(\tilde{\eta})$ results by executing the action in the previous paragraph.

If $\mathrm{ST}_x(\alpha )$ is equivalent to $\bot$, then \eqref{guarded second order} is equivalent to $\forall^1 x(\bot\ra\mathrm{ST}_x(\tilde{\eta}))$, where $\mathrm{ST}_x(\tilde{\eta})$ is as in the previous paragraph, and $x=x$ may be taken to be a trivial local first-order correspondent.

\paragraph{Pulling-out Existential Quantifiers.}
Use equivalences of the form $\exists^1 z(\zeta_1(z)\wedge \zeta_2)\equiv \exists^1z\zeta_1(z)\wedge\zeta_2$ and $\forall^1z(\zeta_1(z)\lra\zeta_2)\equiv \exists^1z\zeta_1(z)\lra\zeta_2$ to pull in prenex position all occurring existential quantifiers in $\mathrm{ST}_x(\alpha )$. This modifies~\eqref{guarded second order} to the form shown in~\eqref{step1-modified so}, 
\begin{equation}\label{step1-modified so}
\forall^1 P_1\cdots\forall^1P_{n+k}\forall^1Q^*_1\cdots\forall^1Q^*_m\forall^1 x\forall^1z_1\cdots\forall^1 z_r(\mathrm{t{-}INV}\wedge\mathrm{REL}\wedge\mathrm{AT}\wedge\mathrm{BOXED}\lra\mathrm{POS}),
\end{equation}
where 
\begin{itemize}
\item $P_i$, for $i=1,\ldots,n+k$, and $Q^*_j$, for $j=1,\ldots,m$, are all the second-order variables occurring in $\mathrm{ST}_x(\alpha )$ and $\mathrm{ST}_x(\eta )$, corresponding to the propositional variables $P_i,Q^*_j$ occurring in $\alpha ,\eta $, where for $i=1,\ldots,n$ a stability constraint $P''_i\leq_1 P_i$  for the interpretation is in $\mathrm{STB}$, while for $i=n+1,\ldots,n+k$ a change of variables constraint $P_i=_1Q'_i$ is in $\mathrm{CVC}$,
\item $\mathrm{POS}=\mathrm{ST}_x(\eta )$, 
\item $\mathrm{t{-}INV}$ 
is the conjunction of the $\mathrm{t}$-invariance constraints for the second-order variables $P_i, i=1,\ldots,n+k$, 
\item $\mathrm{REL}$
is a conjunction of relational atomic formulae 
\item $\mathrm{AT}$
is a conjunction of atomic formula of the form $P(u)$ 
\item $\mathrm{BOXED}$
is a conjunction of formulae of the form $\forall^1u(w(\mathbf{R}''_\Box)^{k+1}u\lra P(u))$, corresponding to the translation of sentences of the form $\bb^{k+1}P$, for some $k\geq 0$.
\end{itemize}
\paragraph{Determining Minimal Instantiations.}
\subparagraph{Unconstrained Variables:}
For each $Q_j, j=1,\ldots,m$ (a) let the atomic formulae $Q_j(u^j_r), r=1,\ldots,r(j)$, be all the occurrences of atomic $Q_j$-formulae in $\mathrm{AT}$ and (b) let the formulae $\forall^1w^j_t(x_t(\mathbf{R}''_\Box)^{k_t+1}w^j_t\lra Q_j(w^j_t))$, $t=1,\ldots,t(j)$, be all the occurrences in $\mathrm{BOXED}$ of translations of boxed sentences $\bb^{k_t+1}Q_j$. A model $\mathfrak{M}=(\mathfrak{F},V, Q^\mathfrak{M}_j,\ldots)$ is a model with a minimal interpretation for the predicate variable $Q_j$ if $Q^\mathfrak{M}_j$
is the set $Q^\mathfrak{M}_j=\{V(u^j_1),\ldots,V(u^j_{r(j)})\}\cup\bigcup_{t=1}^{t(j)}V(x_t)(R''_\Box)^{k_t+1}$, where $V(x_t)(R''_\Box)^{k_t+1}$ is the set of points $z\in Z_1$ such that $V(x_t)(R''_\Box)^{k_t+1}z$ obtains in the frame. In other words, for $w\in Z_1$, we choose to interpret $Q_j$ so that $w\in Q_j^\mathfrak{M}$ iff $w=V(u^j_r)$, for some $1\leq r\leq r(j)$, or $V(x_t)(R''_\Box)^{k_t+1}w$ holds, for some $1\leq t\leq t(j)$.

To route the syntactic re-write rules of the Sahlqvist - Van Benthem algorithm, define  
\begin{equation}\label{lambda-Q}
\lambda(Q_j)=\lambda s.\left(\bigvee_{r=1}^{r(j)}(s=u^j_r)\right)\vee\left(\bigvee_{t=1}^{t(j)}x_t(\mathbf{R}''_\Box)^{k_t+1}s\right)
\end{equation}
so that in the model $\mathfrak{M}=(\mathfrak{F},V, Q^\mathfrak{M}_j,\ldots)$ we shall have that $\mathfrak{F}\models\lambda(Q_j)u[V]$ iff $V(u)\in Q_j^\mathfrak{M}$. Phrased differently, in the model $\mathfrak{M}=(\mathfrak{F},V, Q^\mathfrak{M}_j,\ldots)$ the formula $\lambda(Q_j)u$ evaluates to $\mathrm{TRUE}$ iff $V(u)\in Q_j^\mathfrak{M}$ (i.e. $\lambda(Q_j)$ designates the characteristic function of $Q^\mathfrak{M}_j$).

\subparagraph{Constrained Variables:}  For each $P_i$, for $i=1,\ldots, n+k$ (for which a corresponding $\mathrm{t}${-}invariance constraint is in $\mathrm{t{-}INV}$) (a) let $P_i(z^i_p), p=1,\ldots, p(i)$  be all the occurrences of atomic $P_i$-formulae in $\mathrm{AT}$ and (b) let the formulae $\forall^1w^i_q(x_q(\mathbf{R}''_\Box)^{k_q+1}w^i_q\lra P_i(w^i_q))$, $q=1,\ldots,q(i)$, be all the occurrences in $\mathrm{BOXED}$ of translations of boxed sentences $\bb^{k_q+1}P_i$. 

Note that given a model $\mathfrak{M}=(\mathfrak{F},V)$ and subset $U\subseteq Z_1$, the point $V(u)\in Z_1$ is in the Galois closure $\lbboxi\ldi U=U''$ of $U$ iff $\forall y\in Z_\partial[V(u)Iy\lra\exists z\in Z_1(zIy\wedge z\in U)]$. 
In particular, consider the set 
\[
U_i=\val{\bigvee_{p=1}^{p(i)}(z=z^i_p)\vee\bigvee_{q=1}^{q(i)}x_q(\mathbf{R}''_\Box)^{k_q+1}z}_\mathfrak{M}= \bigcup_{p=1}^{p(i)}\{V(z^i_p)\}\cup\bigcup_{q=1}^{q(j)}V(x_q)(R''_\Box)^{k_q+1}. 
\]
We choose to interpret $P_i$ as the least Galois stable set containing $U_i$ (the Galois closure of $U_i$). In accordance to this choice, we specify the characteristic function $\lambda(P_i)$ by setting
\begin{equation}\label{lambda-P}
\lambda(P_i)=\lambda s.\forall^\partial y\left(s\mathbf{I}y\lra\exists^1 z \left(z\mathbf{I}y \wedge\left[\bigvee_{p=1}^{p(i)}(z=z^i_p)\vee\bigvee_{q=1}^{q(i)}x_q(\mathbf{R}''_\Box)^{k_q+1}z\right]\right)\right).
\end{equation} 
Hence, $\mathfrak{F}\models\lambda(P_i)u[V]$ (i.e.  $\lambda(P_i)u$ evaluates to $\mathrm{TRUE}$ in the model $\mathfrak{M}$),  iff $V(u)$ is in the Galois closure of the set $U_i=\val{\bigvee_{p=1}^{p(i)}(z=z^i_p)\vee\bigvee_{q=1}^{q(i)}x_q(\mathbf{R}''_\Box)^{k_q+1}z}_\mathfrak{M}$.

\paragraph{Eliminating Second-Order Quantifiers.}
Replace occurrences of $P_i, Q_j$ in the formula~\eqref{step1-modified so} by the designations  $\lambda(P_i)$ and $\lambda(Q_j)$ of the characteristic functions of their interpretations and perform $\beta$-reduction. 

The above action results in eliminating the occurrences of the predicate variables $P_i,Q_j$ in $\mathrm{t{-}INV, AT, BOXED}$ and each of them evaluates to $\mathrm{TRUE}$.  We provide some details.
\begin{itemize}
  \item[($\mathrm{t{-}INV}$):]\hskip1cm Let $P$ be one of the predicate variables $P_i$ with an invariance constraint in  ($\mathrm{t{-}INV}$). Substituting $\lambda(P)$ for $P$ in equation~\eqref{t-invariance condition 1} we obtain
      \[
\forall^1 u[\forall^\partial v(u\mathbf{I}v\lra\exists^1 u_1(u_1\mathbf{I}v\wedge {\lambda(P)}u_1))\lra{\lambda(P)}u]
      \]
which evaluates to $\mathrm{TRUE}$ just in case the closure of the interpretation of $P$ is contained in the interpretation of $P$.
By $\beta$-conversion we obtain
\[
\lambda(P)u=\forall^\partial y\left(u\mathbf{I}y\lra\exists^1 z \left(z\mathbf{I}y \wedge\left[\bigvee_{p=1}^{p(i)}(z=z^i_p)\vee\bigvee_{q=1}^{q(i)}x_q(\mathbf{R}''_\Box)^{k_q+1}z\right]\right)\right),
\] 
i.e. $\lambda(P)u$ interprets to $\mathrm{TRUE}$ in a model $\mathfrak{M}=(\mathfrak{F},V)$ 
iff $V(u)$ is in the closure of the interpretation of the set $U=\val{\bigvee_{p=1}^{p(i)}(z=z^i_p)\vee\bigvee_{q=1}^{q(i)}x_q(\mathbf{R}''_\Box)^{k_q+1}z}_\mathfrak{M}$. By choice of the minimal interpretation, $P$ interprets to a stable set (the closure of the set $U$) and, thereby, the corresponding $\mathrm{t}${-}invariance constraint for $P$ interprets to $\mathrm{TRUE}$.

  \item [($\mathrm{AT}$):] \hskip1.1cm Let $P_i, Q_j$ be predicate variables, where $\mathrm{t{-}INV}$ contains a $\mathrm{t}${-}invariance constraint $\forall^1u_i[\mathrm{t}(P_i)(u_i)\ra P_i(u_i)]$  and let $P_i(z^i_{pp}), Q_j(u^j_{rr})$ be atomic formulae in $\mathrm{AT}$, for some $1\leq pp\leq p(i)$ and $1\leq rr\leq r(j)$. Given the definition of the characteristic functions $\lambda(Q_j)$ in~\eqref{lambda-Q} and $\lambda(P_i)$ in~\eqref{lambda-P} and replacing $P_i(z^i_{pp}), Q_j(u^j_{rr})$ in $\mathrm{AT}$ by $\lambda(P_i)z^i_{pp}$ and $\lambda(Q_j)u^j_{rr}$, respectively, we obtain after $\beta$-reduction
      \begin{align}
        \lambda(Q_j)u^j_{rr} & = \; \bigvee_{r=1}^{r(j)}(u^j_{rr}=u^j_r)\vee\left(\bigvee_{t=1}^{t(j)}x_t(\mathbf{R}''_\Box)^{k_t+1}u^j_{rr}\right),\label{lambda-Q-beta}\\
        \lambda(P_i)z^i_{pp} & = \; \forall^\partial y\left(z^i_{pp}\mathbf{I}y\lra\exists^1 z \left(z\mathbf{I}y \wedge\left[\bigvee_{p=1}^{p(i)}(z=z^i_p)\vee\bigvee_{q=1}^{q(i)}x_q(\mathbf{R}''_\Box)^{k_q+1}z\right]\right)\right). \label{lambda-P-beta}
      \end{align}
      Then, for some index $1\leq rr=r\leq r(j)$, $\lambda(Q_j)u^j_{rr}\equiv (u^j_r=u^j_r)\equiv\mathrm{TRUE}$. For the constrained $P_i$, we may re-write the right-hand side of~\eqref{lambda-P-beta} as follows
      \begin{equation}\label{lambda-P-beta2}
  \forall^\partial y\left(z^i_{pp}\mathbf{I}y\lra\exists^1 z \left(z\mathbf{I}y \wedge\left[(z=z^i_{pp})\vee\bigvee_{p=1}^{p(i),p\neq pp}(z=z^i_p)\vee\bigvee_{q=1}^{q(i)}x_q(\mathbf{R}''_\Box)^{k_q+1}z\right]\right)\right).     
      \end{equation}
      Now $\lambda(P_i)z^i_{pp}$ evaluates to $\mathrm{TRUE}$ in a model $\mathfrak{M}=(\mathfrak{F},V)$ just in case the point $V(z^i_{pp})$ assigned to the individual variable $z^i_{pp}$ is in the set ${P_i}^\mathfrak{M}$ interpreting the predicate variable $P_i$. But observing that, by~\eqref{lambda-P-beta2}, $P_i$ interprets as the closure of a set containing the point $V(z^i_{pp})$, we may conclude that $\lambda(P_i)z^i_{pp}$ evaluates to $\mathrm{TRUE}$ in any model $\mathfrak{M}=(\mathfrak{F},V, P_i^\mathfrak{M},\ldots)$ with the minimal interpretation of $P_i$.
      
  \item[($\mathrm{BOXED}$):] \hskip 1.1cm Let $P_i, Q_j$ be predicate variables, where $\mathrm{t{-}INV}$ contains a $\mathrm{t}${-}invariance constraint $\forall^1u_i[\mathrm{t}(P_i)(u_i)\ra P_i(u_i)]$  and let  $\forall^1w^i_{qq}(x_{qq}(\mathbf{R}''_\Box)^{k_{qq}+1}w^i_{qq}\lra P_i(w^i_{qq}))$, for some $qq\in\{1,\ldots,q(i)\}$, and  $\forall^1w^j_{tt}(x_{tt}(\mathbf{R}''_\Box)^{k_{tt}+1}w^j_{tt}\lra Q_j(w^j_{tt}))$, for some $tt\in\{1,\ldots,t(j)\}$ be the translations of boxed atoms in $\mathrm{BOXED}$. Replace $P_i,Q_j$ by $\lambda(P_i)$ and $\lambda(Q_j)$, respectively.
      
      For $Q_j$, the corresponding formula becomes
      \[
      \forall^1w^j_{tt}\left(x_{tt}(\mathbf{R}''_\Box)^{k_{tt}+1}w^j_{tt}\lra \left(\bigvee_{r=1}^{r(j)}(s=u^j_r)\right)\vee\left(\bigvee_{t=1}^{t(j)}x_t(\mathbf{R}''_\Box)^{k_{t}+1}w^j_{t}\right)\right)
      \] 
      and then the antecedent $x_{tt}(\mathbf{R}''_\Box)^{k_{tt}+1}w^j_{tt}$ is a disjunct of the consequent, hence the formula evaluates to $\mathrm{TRUE}$.
      
      For $P_i$, the corresponding formula becomes
      \[
      \forall^1w^i_{qq}\left(x_{qq}(\mathbf{R}''_\Box)^{k_{qq}+1}w^i_{qq}{\ra}  \forall^\partial y\left(w^ i_{qq}\mathbf{I}y{\ra}\exists^1 z \left(z\mathbf{I}y \wedge\left[\bigvee_{p=1}^{p(i)}(z=z^i_p)\vee\bigvee_{q=1}^{q(i)}x_q(\mathbf{R}''_\Box)^{k_q+1}z\right]\right)\right)   \right).
      \]
      For any model $\mathfrak{M}=(\mathfrak{F},V)$, the above formula evaluates to $\mathrm{TRUE}$ if $V(x_{qq})(R''_\Box)^{k_{qq}+1}\subseteq (\bigcup_{p=1}^{p(i)}\{V(z^i_p)\}\cup\bigcup_{q=1}^{q(i)}V(x_q)(R''_\Box)^{k_q+1})''=\lbboxi\ldi    (\bigcup_{p=1}^{p(i)}\{V(z^i_p)\}\cup\bigcup_{q=1}^{q(i)}V(x_q)(R''_\Box)^{k_q+1})$. The inclusion indeed obtains since for the value $qq$ of the index $q$ the set $V(x_{qq})(R''_\Box)^{k_{qq}+1}$ is one of the sets whose union is taken on the right-hand side, hence it is contained in the closure of this union.
      
\end{itemize}

\paragraph{Correctness Proof.}
Given a Sahlqvist inequality (sequent) 
\begin{equation}\label{inequality}
  \alpha_1\leq_1\eta_1
\end{equation}
the reduction steps of Lemma~\ref{R1-R6} produce a system in canonical Sahlqvist form
\begin{equation}\label{reduced}
  \lset \mathrm{STB,CVC}\midsp\alpha \leq_1\eta \rset
\end{equation}
with guarded second order translation
\begin{equation}\label{guarded}
\forall^1 P_1\cdots\forall^1 P_n\forall^1 P_{n+1}\cdots\forall^1 P_{n+k}\forall^1 Q^*_1\cdots\forall^1 Q^*_m\forall^1 x(\mathrm{t{-}INV}\wedge\;\mathrm{ST}_x(\alpha )\lra\mathrm{ST}_x(\eta )).
\end{equation} 
Pulling out existential quantifiers, \eqref{guarded} is transformed to \eqref{step1} below
\begin{equation}\label{step1}
\forall^1 P_1\cdots\forall^1P_{n+k}\forall^1Q^*_1\cdots\forall^1Q^*_m\forall^1 x\forall^1z_1\cdots\forall^1 z_r(\mathrm{t{-}INV}\wedge\mathrm{REL}\wedge\mathrm{AT}\wedge\mathrm{BOXED}\lra\mathrm{POS}).
\end{equation}
Having defined minimal instantiations, substituting $\lambda(P_i)$ for $P_i$ and $\lambda(Q_j)$ for $Q_j$, performing a subsequent $\beta$-reduction,
 removing the thereby redundant second-order quantifiers and letting $\lambda(\mathrm{POS})$ designate the result in the consequent position, the formula below
\begin{equation}\label{final corr}
\forall^1z_1\cdots\forall^1 z_r(\mathrm{REL}\lra\lambda(\mathrm{POS}))
\end{equation}
is returned as the local first-order correspondent.

Proving correctness amounts to showing that \eqref{inequality}--\eqref{final corr} are all semantically equivalent. 

Equivalence of \eqref{inequality} and \eqref{reduced} was shown in Lemma~\eqref{R1-R6}.

\eqref{reduced} and~\eqref{guarded} are clearly equivalent, given the semantic understanding that $\alpha \leq_1\eta $ in \eqref{reduced} will not be validated in a model unless all constrained variables in $\mathrm{STB,CVC}$ are interpreted as Galois stable sets.  In the guarded second-order translation \eqref{guarded} this requirement is equivalently enforced by the $\mathrm{t}${-}invariance constraints.

Formulae \eqref{guarded} and \eqref{step1} are certainly equivalent since one is the transformation of the other by using established prenex form equivalences. 

The computed first-order local correspondent \eqref{final corr} is true in a model $\mathfrak{M}=(\mathfrak{F},V)$ in which \eqref{step1} is true, since the former is an instance of the latter. Conversely, assume \eqref{final corr} is true in $\mathfrak{M}$ and that the antecedent $\mathrm{t{-}INV}\wedge\mathrm{REL}\wedge\mathrm{AT}\wedge\mathrm{BOXED}$ of \eqref{step1} is also true in $\mathfrak{M}$. Note first that since in particular $\mathfrak{M}\models\mathrm{t{-}INV}$, the predicate variables $P_i,i=1,\ldots,n+k$, are interpreted as Galois stable sets in $\mathfrak{M}$. Also, since $\mathfrak{M}\models\mathrm{REL}$, then by the assumption that \eqref{final corr} is true in $\mathfrak{M}$ it follows that $\lambda(\mathrm{POS})$ is also true in $\mathfrak{M}$. But by definition of the Sahlqvist property, $\stx{x}{\eta }=\mathrm{POS}$ is positive, hence its interpretation is monotone in $P_i$ and $Q_j$. By choice of $\lambda(P_i),\lambda(Q_j)$ as minimal interpretations it follows that $\mathfrak{M}\models\stx{x}{\eta }$, as well.
\end{proof}

To clarify the detail of calculating correspondents using the generalized Sahlqvist - Van Benthem approach, we provide a few application examples. 

\begin{rem}\label{simple minimal}
In the general  case, we defined the minimal interpretation for a constrained predicate variable $P$ (where a constraint $P''\leq_1P$ or $P=_{\sharp_{Q'}}Q'$ is in the system)
by setting
\begin{equation}\label{lambda-P-again}
\lambda(P)=\lambda s.\forall^\partial y\left(s\mathbf{I}y\lra\exists^1 z \left(z\mathbf{I}y \wedge\left[\bigvee_{p=1}^{n}(z=u_p)\vee\bigvee_{q=1}^{m}x_q(\mathbf{R}''_\Box)^{k_q+1}z\right]\right)\right),
\end{equation} 
where the atomic formulae $P(u_1),\ldots,P(u_n)$ occur in the standard first-order translation of the antecedent and the consequent of the sequent under examination and there are also boxed atoms $\bb^{k_q+1}P$, translating to corresponding formulae $\forall^1u_j(x_q(\mathbf{R}''_\Box)^{k_q+1} u_j\lra P(u_j))$.

For use in application examples, we list here some frequently occurring simple cases of the definition of a minimal instantiation for constrained variables $P$.  
\begin{description}
  \item[($m=0,n=1$)] A single atomic formula $P(u)$ occurs and there are no boxed atoms. Then the intended minimal interpretation $P^\mathfrak{M}$ in a model $\mathfrak{M}=(\mathfrak{F},V)$ is the least Galois stable set containing the point $V(u)$, i.e. $P^\mathfrak{M}=\{V(u)\}''=\lbboxi\ldi\{V(u)\}=\{w\in Z_1\midsp V(u)\leq w\}$, so that we may simply define $\lambda(P)=\lambda s.u\leq s$, where by definition of the order $u\leq w$ iff $\forall y(wIy\lra uIy)$.
  \item[($m=0,n>1$)] $\lambda(P)$ now needs to designate the characteristic function of the set $\{w_1,\ldots,w_n\}''$, where we set $w_i=V(u_i)$. Since 
      \[
      \{w_1,\ldots,w_n\}''=\lbboxi\ldi\{w_1,\ldots,w_n\}=\{w\in Z_1\midsp\forall y(wIy\lra\exists z(zIy\wedge z\in\{w_1,\ldots,w_n\}))
      \]
      it follows that we may just define $\lambda(P)=\lambda s.\forall^\partial y(s\mathbf{I}y\lra\exists^1 z(z\mathbf{I}w_1\vee\cdots\vee z\mathbf{I}w_n))$.

  \item[($m=1,n=0$)] Here the definition can be the same as in the classical case, i.e. $\lambda(P)=\lambda s.x_1\mathbf{R}''_\Box s$. This is because $\mathbf{R}''_\Box$ interprets to the relation $R''_\Box$ and for any $x$ the set $xR''_\Box$ is already a Galois stable set, as it was defined by setting $xR''_\Box =(xR'_\Box)'$. 
\end{description}
\end{rem}

\begin{ex}\label{box p proves p example}
Consider the sequent $\boxminus P\proves P$, as well as the sequent $\bb p\proves p$ in the regular fragment of the sorted modal logic system (equivalently, in the language of distribution-free modal logic). 

Treating the first is not different from its treatment in the classical case. Indeed, we have $\stx{x}{\boxminus P}=\forall^1 z(x\mathbf{R}''_\Box z\lra\mathbf{P}(z))$, hence for the second-order translation of the sequent we obtain $\forall^1 P\forall^1 x[\underbrace{\forall^1 z(x\mathbf{R}''_\Box z\lra P(z))}_{\mathrm{BOXED}}\lra \underbrace{P(x)}_{\mathrm{POS}}]$.

The minimal valuation that makes the antecedent of the implication true is obtained by setting $\lambda({P})=\lambda s.x\mathbf{R}''_\Box s$. In other words we interpret ${P}$ as the set $xR''_\Box$. Hence $\lambda({P})z=x\mathbf{R}''_\Box z$ and $\lambda({P})x=x\mathbf{R}''_\Box x$. We conclude with the familiar first-order equivalent $\Phi(x)= x\mathbf{R}''_\Box x$.

The sequent $\bb p\proves p$ in the regular fragment (equivalently, in the distribution-free modal logic) is a notational variant of the sequent $\boxminus P''\proves P''$. This is a Sahlqvist sequent, reducing to $\lset P''\leq_1P\midsp\boxminus P\leq_1 P\rset$. Its guarded second-order translation is 
\[
\forall^1P\forall^1x[\underbrace{\forall^1u[\mathrm{t}(P)(u)\lra P(u)]}_{\mathrm{t{-}INV}}\wedge \underbrace{\forall^1 z(x\mathbf{R}''_\Box z\lra P(z))}_{\mathrm{BOXED}}\lra \underbrace{P(x)}_{\mathrm{POS}}]. 
\]
For the minimal instantiation, taking also into consideration Remark~\ref{simple minimal}, we set $\lambda(P)=\lambda s.x\mathbf{R}''_\Box s$. Since $xR''_\Box=(xR'_\Box)'$ is Galois stable,  $\mathrm{t{-}INV}$ evaluates to $\mathrm{TRUE}$ and the same holds for $\mathrm{BOXED}$. The consequent, after $\beta$-reduction, is the formula $x\mathbf{R}''_\Box x$ and this is precisely the local first-order correspondent of the sequent $\bb p\proves p$ (see also the discussion and computation in \cite[Section~5.2]{dfnmlA}, where it was shown that the double dual relation $R''_\Box$ is reflexive iff $R^{\partial\partial}_\Box$ is reflexive).
\end{ex}

\begin{ex}\label{p proves diamond p example}
Consider the DfML sequent $p\proves\dd p$. In Example~\ref{both} we saw that its translation $P''\proves(\diamondvert P'')''$  reduces to 
 $\lset P''\leq_1 P\midsp P\leq_1(\diamondvert P)''$, which is in canonical Sahlqvist form. 
Its co-translation is the $\partial$-sequent $(\diamondvert P'')'\vproves P'$ which reduces to $\lset Q=_\partial P'\midsp \boxvert Q\leq_\partial Q\rset$, which is also in canonical Sahlqvist form. 

\paragraph{{\bf Co-translation sequent}.} 
The system $\lset Q=_\partial P'\midsp \boxvert Q\leq_\partial Q\rset$ is essentially the same system as in Example~\ref{box p proves p example}, except that it corresponds to a $\partial$-sequent and $\boxvert$ is the $\mathcal{L}_\partial$ box operator, with $\boxvert Q$ being interpreted using the double dual $R''_\Diamond$ of the frame relation $R^{11}_\Diamond$ (see Table~\ref{sorted sat table}). It follows, given Example~\ref{box p proves p example}, that the resulting local first-order correspondent is the formula $\Psi(y)=y\mathbf{R}''_\Diamond y$.

\paragraph{{\bf Translation sequent}.}
Given $P''\proves(\diamondvert P'')''$, we may consider directly the system $\lset P''\leq_1 P\midsp P\leq_1 (\diamondvert P)''\rset$ and the guarded second-order translation
\begin{equation}\label{so}
\forall^1 P\forall^1 x[\forall^1 u[t(P)(u)\lra P(u)]\wedge P(x)\lra\mathrm{ST}_x((\diamondvert P)''), \mbox{ where }
\end{equation}
  \begin{equation}\label{consequent}
 \mathrm{ST}_x((\diamondvert P)'')= \forall^\partial v[\mathbf{I}(x,v)\lra\exists^1 z(\mathbf{I}(z,v)\wedge\exists^1 u(z\mathbf{R}^{11}_\Diamond u\wedge P(u)))].
  \end{equation}
 For the minimal instantiation we set $\lambda(P)=\lambda s.x\leq s$, so that $u\in P$ iff $x\leq u$, which is what we obtain by the $\beta$-conversion $\lambda(P)u=(x\leq u)$.

For the $\mathrm{t}$-invariance constraint, after substituting $\lambda(P)$ for $P$, we get
\[
\forall^1 u[\forall^\partial v(\mathbf{I}(u,v)\lra\exists^1 u_1(\mathbf{I}(u_1,v)\wedge {\lambda(P)}u_1))\lra{\lambda(P)}u]
\]
and after $\beta$-reduction we obtain
\[
\forall^1 u[\forall^\partial v(\mathbf{I}(u,v)\lra\exists^1 u_1(\mathbf{I}(u_1,v)\wedge x\less u_1))\lra x\less u]
\]
which evaluates to $\mathrm{TRUE}$ because $\Gamma u$ is stable. Since also $\lambda(P)x=(x\leq x)$ we obtain
\[
\forall^\partial v[\mathbf{I}(x,v)\lra\exists^1 z(\mathbf{I}(z,v)\wedge\exists^1 u(z\mathbf{R}^{11}_\Diamond u\wedge x\leq u))].
\]
Assuming also the frame axiom (F3) on the monotonicity properties of $R^{11}_\Diamond$  we get the equivalent formula
\[
\forall^\partial v[\mathbf{I}(x,v)\lra\exists^1 z(\mathbf{I}(z,v)\wedge z\mathbf{R}^{11}_\Diamond x)].
\]
That this is equivalent to $R''_\Diamond$ being reflexive can be seen by the following calculation.
\begin{tabbing}
\hskip8mm\= $\forall^\partial v[xIv\lra\exists^1 z(zIv\wedge zR^{11}_\Diamond x)]$\\
iff\> $\forall^\partial v[\neg\exists^1 z(zIv\wedge zR^{11}_\Diamond x)\lra x\upv v]$\\
iff\> $\forall^\partial v[\forall^1 z(zR^{11}_\Diamond x\lra z\upv v)\lra x\upv v]$\\
iff\> $\forall^\partial v[vR^{\partial 1}_\Diamond x\lra x\upv v]$\\
iff\> $\forall^\partial v[vR'_\Diamond x\lra x\upv v]$\\
iff\> $\forall^\partial v \;v\in (vR'_\Diamond)'$\\
iff\> $\forall^\partial v\; vR''_\Diamond v$.
\end{tabbing}
It has been argued in \cite[Section~5.2]{dfnmlA} that $R''_\Diamond$ is reflexive iff $R_\Diamond$ is.
\end{ex}

\begin{ex}\label{s4 diamond}
In Example~\ref{only one is sahlqvist} we considered the sequent ${\dd\dd} p\proves\dd p$ in the language of DfML and we verified that the co-translation sequent reduces to  $\lset Q=_\partial P'\midsp \boxvert Q\leq_\partial(\diamondvert(\boxvert Q)')'\rset$, which is in canonical Sahlqvist form, while its translation sequent does not so reduce. 

Working out the example manually, notice that using the reduction rule (R5.1) $(\diamondvert\beta')'\mapsto\boxvert\beta''$, for $\beta=\boxvert Q\in\mathcal{L}_\partial$, results in obtaining the equivalent system $\lset Q=_\partial P'\midsp \boxvert Q\leq_\partial \boxvert(\boxvert Q)''\rset$. Finally, applying (R5.3) we get the simple Sahlqvist system $\lset Q=_\partial P'\midsp \boxvert Q\leq_\partial{\boxvert\boxvert}Q\rset$.

The guarded second-order translation is given by
\[
\forall^\partial Q\forall^\partial y[\underbrace{\forall^\partial v[\mathrm{t}(Q)(v){\ra} Q(v)]}_{\mathrm{t{-}INV}}\wedge\underbrace{\mathrm{ST}_y(\boxvert Q)}_{\mathrm{BOXED}}\lra\underbrace{\mathrm{ST}_y({\boxvert\boxvert}Q)}_{\mathrm{POS}}]
\]
where $\mathrm{ST}_y(\boxvert Q)=\forall^\partial w(y\mathbf{R}''_\Diamond w\lra Q(w))$. By the proof of Theorem~\ref{Sahlqvist thm}   and Remark~\ref{simple sahlqvist}   we may set $\lambda(Q)=\lambda s.y\mathbf{R}''_\Diamond s$, obtaining $\lambda(Q)w=y\mathbf{R}''_\Diamond w$. Since $yR''_\Diamond$ is a Galois set, the $\mathrm{t}$-invarianvce constraint evaluates to $\mathrm{TRUE}$, and by the choice of minimal interpretation the same holds for the $\mathrm{BOXED}$ conjunct of the antecedent of the implication. The consequent, after replacing $Q$ with $\lambda(Q)$ and $\beta$-reduction is the formula
\[
\forall^\partial w_1(y\mathbf{R}''_\Diamond w_1\lra\forall^\partial w_2(w_1\mathbf{R}''_\Diamond w_2\lra y\mathbf{R}''_\Diamond w_2))
\]
which is precisely a transitivity constraint for the double dual relation $R''_\Diamond$. In \cite[Section~ 5.2]{dfnmlA} it was argued and verified that the double dual relation $R''_\Diamond$ is transitive iff $R_\Diamond$ is transitive (and similarly for $R''_\Box$ and $R_\Box$).
\end{ex}

\begin{ex}
\label{K1-K2}
Classically, the K-axiom $\Box(p\ra q)\proves\Box p\ra\Box q$ is equivalent to each of (K1) $\Diamond p\wedge\Box q\proves\Diamond(p\wedge q)$ and (K2) $\Box(p\vee q)\proves\Diamond p\vee\Box q$, used in Dunn's Positive Modal Logic (PML) to axiomatize the interaction of $\Box$ and $\Diamond$. In a distributive setting the two axioms impose that both operators be interpreted by one and the same accessibility relation. 

The situation is not as straightforward in a non-distributive setting.  In \cite{choice-free-dmitrieva-bezanishvili}, (K1) is assumed in the axiomatization of the logic, but not (K2), which is not discussed at all in the correspondence Section~4.5. Conradie and Palmigiano \cite[Example~3.15,Example~5.5]{conradie-palmigiano} observe that though both (K1) and (K2) are Sahlqvist if distribution is assumed, the ALBA algorithm fails on either of them in a non-distributive setting. It is interesting to see that both (K1) and (K2) can be handled in the reductionist correspondence approach that we are taking here, as we show below.

\paragraph{(K1) $\Diamond p\wedge\Box q\proves\Diamond(p\wedge q)$}
Consider (K1) in distribution-free modal logic,  in out notation  ${\dd}p_1\wedge{\bb}p_2\proves{\dd}(p_1\wedge p_2)$, translating to $(\diamondvert P''_1)''\cap (\diamondminus P'_2)'\proves(\diamondvert(P''_1\cap P''_2))''$. By (R5.1), we can rewrite $(\diamondminus P'_2)'$ as $\boxminus P''_2$. By (R4) applied for both variables we obtain the system 
\[ 
\{P''_1\leq_1P_1,P''_2\leq_1P_2\midsp (\diamondvert P_1)''\cap \boxminus P_2\leq_1(\diamondvert(P_1\cap P_2))''\}.
\]
Since $P''_2\leq_1P_2$ is an assumed constraint, rule (R9) applies and we obtain the equivalent system
\[
\{P''_1\leq_1P_1,P''_2\leq_1P_2\midsp {\diamondvert} P_1\cap \boxminus P_2\leq_1(\diamondvert(P_1\cap P_2))''\}
\]
which is in canonical Sahlqvist form. 

The guarded second-order translation, after pulling out the existential quantifier $\exists^1z$, is 
\[
\forall^1P_1\forall^1P_2\forall^1x\forall^1z[\mathrm{t{-}INV} \wedge x\mathbf{R}_\Diamond z\wedge P_1(z)\wedge\forall^1u(x\mathbf{R}''_\Box u\lra P_2(u))\lra \mathrm{POS} ],
\]
where $\mathrm{t{-}INV} = \bigwedge_{i=1}^2\forall^1u_i[\mathrm{t}(P_i)(u_i)\ra P_i(u_i)] $ and
\[
  \mathrm{POS} = \mathrm{ST}_x((\diamondvert(P_1\wedge P_2))'')=\forall^\partial y(x\mathbf{I}y\lra\exists^1w[w\mathbf{I}y\wedge \exists^1 u(z\mathbf{R}_\Diamond u\wedge P_1(u)\wedge P_2(u))]).
\]
In accordance to Theorem~\ref{Sahlqvist thm} and Remark~\ref{simple minimal}, we set $\lambda(P_1)=\lambda s.x\leq s$ and $\lambda(P_2)=\lambda s.x\mathbf{R}''_\Box s$ so that the invariance constraints evaluate to $\mathrm{TRUE}$ and we obtain the local correspondent
\[
\forall^1z\forall^\partial y[x\mathbf{R}_\Diamond z\wedge x\leq z\wedge x\mathbf{I}y\lra\exists^1w\exists^1 u(w\mathbf{I}y\wedge z\mathbf{R}_\Diamond u\wedge x\leq u\wedge x\mathbf{R}''_\Box u)].
\]

\paragraph{(K2) $\Box(p\vee q)\proves\Diamond p\vee\Box q$} 
The following is a reduction to canonical Sahlqvist form.
\begin{tabbing}
1.\hskip4mm\= $\langle (\dd p\vee\bb q)^\circ\leq_\partial (\bb(p\vee q))^\circ     \rangle$\hskip4.5cm\= \\
2.\> $\langle  (\dd p)^\circ\cap (\bb q)^\circ\leq_\partial (\diamondminus(p\vee q)^\circ)''    \rangle$\\
3.\> $\langle (\diamondvert P'')'\cap (\diamondminus Q')''\leq_\partial (\diamondminus(P'\cap Q')   )''      \rangle$ \> (R5.3)  $(\diamondvert\beta')'\mapsto\boxvert\beta''$, for $\beta=P'$\\
4.\> $\langle  \boxvert P'''\cap   (\diamondminus Q')''\leq_\partial (\diamondminus(P'\cap Q')   )''      \rangle$\> (R5.4)\\
5.\> $\langle   \boxvert P'\cap   (\diamondminus Q')''\leq_\partial (\diamondminus(P'\cap Q')   )''     \rangle$ \> (R6)\\
6.\> $\langle P_2=_\partial P',P_1=_\partial Q'\midsp  {\boxvert} P_2\cap   (\diamondminus P_1)''\leq_\partial (\diamondminus(P_2\cap P_1)   )''     \rangle$\> (R9)
\\
7.\> $\langle  P_2=_\partial P',P_1=_\partial Q'\midsp  {\boxvert} P_2\cap   \diamondminus P_1\leq_\partial (\diamondminus(P_1\cap P_2))''     \rangle$. 
\end{tabbing} 
Computing the correspondent is left to the interested reader.
\end{ex}

\subsection{Negation}
\label{negation section}
Recall from Section~\ref{dual algebras section} that the following identities hold, for any stable set $A$ and co-stable set $B$.
  \[
(\ltdown A)'=\lttdown A =\lbtdown A'\;\mbox{ and } \lttdown B'=\lbtdown B=(\ltdown B')'.
\]
In other words, the restriction of $\lbtdown:\powerset(Z_\partial)\lra\powerset(Z_1)$ to $\gphi$ (the lattice of co-stable sets) is the Galois dual operation of $\ltdown:\powerset(Z_1)\lra\powerset(Z_\partial)$.

This is what suggested introducing the re-write rule (R5.7) in Table~\ref{rewrite rules table}, displayed below for the reader's convenience.

\begin{quote}
\begin{enumerate}
\item[(R5.7)]\hskip2mm $(\tdown\alpha'')'$ $\mapsto$  $\boxtimes\alpha'$ and\\
\hspace*{1.2mm} $(\tdown P)'\mapsto\btdown P'$, provided $P''\leq_{\sharp_P}P$, or $P=_{\sharp_{Q'}}Q'$ is declared in $\mathrm{STB,CVC}$.
\end{enumerate}
\end{quote}

\begin{thm}\label{Sahlqvist thm with negation}
  Every Sahlqvist sequent in the implication-free fragment of DfML has a first-order local correspondent, effectively computable from the input sequent.
\end{thm}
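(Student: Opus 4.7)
The strategy is to mirror the proof of Theorem~\ref{Sahlqvist thm}, treating the new connective $\ttdown$ as a distribution-type $(1;\partial)$ additive operator whose box-like companion in the reduction language is $\btdown$. First I would verify that Step~1 preprocessing still terminates with output in canonical Sahlqvist form: the reduction rules of Table~\ref{reduction rules} already accommodate $\tdown$ via rule (R5.7), which eliminates a prime on top of $\tdown\alpha''$ or of $\tdown P$ (with $P$ constrained) by rewriting to $\btdown\alpha'$ or $\btdown P'$, in direct analogy with (R5.1) and (R5.2) for $\diamondminus,\diamondvert$. Soundness of (R5.7) rests on the identities $(\ltdown A)'=\lbtdown A'$ and $\lttdown B'=\lbtdown B$ recorded in Table~\ref{interaction identities} and established in Proposition~\ref{ltdown and black}. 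The outcome of reduction is a system $\lset\mathrm{STB},\mathrm{CVC}\midsp\zeta\leq_\sharp\xi\rset$ in canonical Sahlqvist form in which $\zeta$ is built from $\top,\bot$ and boxed atoms (now potentially including $\btdown P$ with $P$ a constrained sort-$\partial$ variable) by conjunction and the additive operators $\diamondvert,\diamondminus,\tdown$, while $\xi$ is positive.

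\textbf{Guarded translation and minimal instantiations.} Steps~2 and~3 produce the guarded second-order translation exactly as in Theorem~\ref{Sahlqvist thm}. After the standard translation and the prenexing of existentials, two new patterns can appear in the antecedent: (i) universal conjuncts $\forall^\partial y(x\mathbf{R}''_\tdown y\lra P(y))$ coming from boxed atoms $\btdown P$ with $P$ of sort $\partial$ and constrained, and (ii) existential conjuncts $\exists^1 z(v\mathbf{R}^{\partial 1}_\tdown z\wedge\cdots)$ obtained by pulling out the existentials from $\sty{v}{\tdown\alpha}$, handled in prenex form exactly as the existentials introduced by $\diamondvert$ and $\diamondminus$. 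For the minimal instantiation of a constrained sort-$\partial$ variable $P$ that occurs in atomic conjuncts $P(v_1),\dots,P(v_n)$ and in boxed-atom conjuncts $\forall^\partial y(x_q\mathbf{R}''_\tdown y\lra P(y))$ for $q=1,\dots,m$, I would take the dual-sort analogue of~\eqref{lambda-P}:
\[
\lambda(P)=\lambda s.\forall^1 w\left(w\mathbf{I}s\lra\exists^\partial z\left(w\mathbf{I}z\wedge\left[\bigvee_{p=1}^{n}(z=v_p)\vee\bigvee_{q=1}^{m}x_q\mathbf{R}''_\tdown z\right]\right)\right),
\]
the characteristic function of the Galois closure of the corresponding subset of $Z_\partial$. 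That each section $x_q R''_\tdown$ is already Galois co-stable is exactly the content of the smoothness axiom (F2); together with the dual-sort $\mathrm{t}$-invariance condition~\eqref{t-invariance condition dual}, this guarantees that all $\mathrm{t{-}INV}$, $\mathrm{AT}$, and $\mathrm{BOXED}$ conjuncts involving $P$ evaluate to $\mathrm{TRUE}$ after $\beta$-reduction.

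\textbf{Correctness and obstacles.} With these ingredients in hand, the elimination of second-order quantifiers and the correctness argument proceed verbatim as in Theorem~\ref{Sahlqvist thm}: the consequent $\mathrm{POS}$ is positive, hence monotone in every predicate variable, so minimal instantiations suffice to establish the equivalence of the guarded second-order formula with the extracted first-order sentence. I expect the main subtlety to be routine bookkeeping rather than a conceptual obstacle: one has to check that every Sahlqvist sequent involving $\ttdown$ does reduce to a system in which occurrences of $\btdown$ fall only on constrained variables (so that the $\mathrm{t}$-invariance guard is actually asserted for them), and that the boxed-atom case analysis for $\btdown P$ mirrors the one for $\boxminus P, \boxvert P$ under the interchange of sorts. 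Once these are verified, the proof closes identically to the box-and-diamond case.
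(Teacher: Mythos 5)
Your proposal is correct and follows essentially the same route as the paper: the paper's proof likewise reduces everything to the argument for Theorem~\ref{Sahlqvist thm}, observing that the only new ingredient is the second box operator $\btdown$, interpreted by the relation $R''_\tdown$, occurring in boxed atoms, so that the minimal instantiations and the correctness argument carry over verbatim. The one point the paper makes explicit that you leave as ``routine bookkeeping'' is that boxed atoms may now involve composite strings of boxes over $\{\boxminus,\btdown\}$, so in \eqref{lambda-Q} and \eqref{lambda-P} the iterated predicate $(\mathbf{R}''_\Box)^{k+1}$ is replaced by a predicate $\mathbf{R}''_*$ denoting the corresponding composition of $\mathbf{R}''_\Box$ and $\mathbf{R}''_\tdown$, exactly as for multimodal languages in the classical case; your $\lambda(P)$ with a single $\mathbf{R}''_\tdown$ is the special case of this (and, incidentally, co-stability of the sections $x_qR''_\tdown$ follows from their being primed sets, not from axiom (F2)).
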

\begin{proof}
The only amendment to the proof of the extended Sahlqvist -- Van Benthem theorem presented in Section~\ref{box and diamond section} that is further needed  is that we now have two box operators $\boxminus$ and $\btdown$ involved in boxed atoms, interpreted respectively by $\lbminus$, generated in frames by the relation $R''_\Box$, and $\lbtdown$, generated by the relation $R''_\tdown$. 
Having multiple box operators is however a familiar situation already in classical Sahlqvist theory. 

For each composite string of boxes $\Box^*{\in}\{\boxminus,\btdown\}^n$, of some length $n>0$, the standard translation $\mathrm{ST}_x(\Box^*P)$ involves a composition of the predicates $\mathbf{R}''_\Box$ and $\mathbf{R}''_\tdown$, for which we may introduce a predicate $\mathbf{R}''_*$, thus obtaining $\mathrm{ST}_x(\Box^*P)=\forall^1z(x\mathbf{R}''_*z\lra P(z))$. In the proof of Theorem~\ref{Sahlqvist thm}, in equation~\eqref{lambda-Q} for unconstrained variables and in equation~\eqref{lambda-P} for constrained variables, we just replace $\mathbf{R}''_\Box$ with $\mathbf{R}''_*$ and otherwise the same argument applies, which is exactly the strategy followed in the classical case. 
\end{proof}

\begin{ex}
Consider the Galois connection axiom in distribution-free modal logic, $p\proves\ttdown\ttdown p$. Recall first that $(\ttdown \varphi)^\bullet=(\tdown \varphi^\bullet)'$, so that $(\ttdown\ttdown p)^\bullet=(\tdown(\ttdown p)^\bullet)'=(\tdown(\tdown P'')')'$. Hence we have the reductions
\begin{tabbing}
1.\hskip4mm\=$\lset P''\leq_1(\tdown(\tdown P'')')'\rset$\\
2.\>$\lset P''\leq_1P\midsp P\leq_1(\tdown(\tdown P)')'\rset$.
\end{tabbing}
This translates to
\[
\forall^1P\forall^1x[\forall^1u[\mathrm{t}(P)(u)\lra P(u)]\wedge P(x)\lra\underbrace{\mathrm{ST}_{x}{((\tdown(\tdown P)')')}}_{\mathrm{POS}} ]
\]
where $\mathrm{POS}=\forall^\partial y(x\mathbf{I}y\lra\forall^1z(yR^{\partial 1}_\tdown z\lra\exists^\partial v(z\mathbf{I}v\wedge\exists^1u(v\mathbf{R}^{\partial 1}_\tdown u\wedge P(u)))))$. 

We let $\lambda(P)=\lambda s.x\leq s$. After $\beta$-reduction the antecedent evaluates to $\mathrm{TRUE}$ and we obtain the first-order local correspondent
\[
\forall^\partial y(x\mathbf{I}y\lra\forall^1z(yR^{\partial 1}_\tdown z\lra\exists^\partial v(z\mathbf{I}v\wedge\exists^1u(v\mathbf{R}^{\partial 1}_\tdown u\wedge x\leq u))))
\]
By the monotonicity properties of frame relations (assuming also the frame axiom (F3)) we have $\exists^1u(v\mathbf{R}^{\partial 1}_\tdown u\wedge x\leq u)$ iff $v\mathbf{R}^{\partial 1}x$. Hence we obtain the equivalent formula
\begin{equation}\label{galois negation}
\forall^\partial y(x\mathbf{I}y\lra\forall^1z(yR^{\partial 1}_\tdown z\lra\exists^\partial v(z\mathbf{I}v\wedge v\mathbf{R}^{\partial 1}x)))
\end{equation}
\end{ex}

In \cite{choiceFreeStLog}, where a choice-free duality for lattices with a weak complementation operator $\nu$ was presented, we also proved related correspondence results, in \cite[Corollary~3.15]{choiceFreeStLog}. It was shown, in particular, that the axiom imposing that $\nu$ forms a Galois connection with itself, i.e. $a\leq\nu\nu a$ for all lattice elements $a$, defines the class of sorted residuated frames in which the Galois dual $R'_\tdown$ of the frame relation $R^{\partial 1}$ is symmetric. For the reader's benefit we show that symmetry of $R'_\tdown$ is equivalent to the local correspondent~\eqref{galois negation} that we computed above.

We first show that \eqref{galois negation} is equivalent to the claim that the composite relation $I\circ R^{\partial 1}_\tdown$ is symmetric.
\begin{tabbing}
  \hskip1cm \=  
    $\forall^\partial y(x\mathbf{I}y\lra\forall^1z(yR^{\partial 1}_\tdown z\lra\exists^\partial v(z\mathbf{I}v\wedge v\mathbf{R}^{\partial 1}x)))$\\
  iff \> $\forall^\partial y\forall^1z(x\mathbf{I}y\lra(yR^{\partial 1}_\tdown z\lra\exists^\partial v(z\mathbf{I}v\wedge v\mathbf{R}^{\partial 1}x)))$\\
  iff \> $\forall^\partial y\forall^1z(x\mathbf{I}y\wedge yR^{\partial 1}_\tdown z\lra\exists^\partial v(z\mathbf{I}v\wedge v\mathbf{R}^{\partial 1}x)))$\\
  iff \> $\forall^1z[\exists^\partial y(x\mathbf{I}y\wedge yR^{\partial 1}_\tdown z)\lra\exists^\partial v(z\mathbf{I}v\wedge v\mathbf{R}^{\partial 1}x))]$\\
  iff \> $\forall^1z[x(I\circ R^{\partial 1}_\tdown) z\lra z(I\circ R^{\partial 1}_\tdown)x]$
\end{tabbing}
We next verify that $I\circ R^{\partial 1}_\tdown$ is symmetric iff the Galois dual relation $R'_\tdown$ is symmetric. Sketching the argument, assume first that $I\circ R^{\partial 1}$ is symmetric, assume also that $xR'_\tdown z$ holds, but suppose, for a contradiction, that $zR'_\tdown x$ fails. But $\neg(zR'_\tdown x)$ is equivalent to $\neg\forall^\partial y(yR^{\partial 1}_\tdown x\lra z\upv y)$, in turn equivalent to $\exists^\partial y(yR^{\partial 1}_\tdown x\wedge zIy)$, which is the same as $z(I\circ R^{\partial 1}_\tdown)x$. Symmetry for $I\circ R^{\partial 1}$ is assumed, hence  $x(I\circ R^{\partial 1}_\tdown)z$ also obtains. Unfolding conditions backwards, this is equivalent to $\exists^\partial y(xIy\wedge yR^{\partial 1}_\tdown z)$, in turn equivalent to $\neg\forall^\partial y(yR^{\partial 1}_\tdown z\lra x\upv y)$. But this is the same as negating the assumption that $xR'_\tdown z$ holds. We leave to the interested reader the argument for the converse direction, i.e. that symmetry of the Galois dual relation $R'_\tdown$ implies symmetry of the composite relation $I\circ R^{\partial 1}_\tdown$.

\begin{ex}
  Consider the axiom $\ttdown\ttdown p\proves p$, translating to $(\tdown(\tdown P'')')'\proves P''$ and co-translating to $P'''\vproves (\tdown(\tdown P'')')''$. The first cannot be reduced to canonical Sahlqvist form. For the $\partial$-sequent (the co-translation), we have the following reduction:
  \begin{tabbing}
1.\hskip4mm\=  $\lset P'\leq_\partial (\tdown(\tdown P'')')''\rset$\hskip3cm\=
  \\
2.\>  $\lset P'\leq_\partial (\tdown \boxtimes P''')''\rset$ \> (R5.7) with $\alpha=P$
  \\
3.\>  $\lset P'\leq_\partial (\tdown\boxtimes P')''\rset$ \> (R5.4)
  \\
4.\>   $\lset Q=_\partial P'\midsp Q\leq_\partial(\tdown\btdown Q)''\rset$ \> (R6)
  \end{tabbing}
Computing the correspondent from the canonical Sahlqvist form is left to the reader.
\end{ex}

We end this section by mentioning two non-examples.
\begin{ex}
The Kleene negation axiom $p\wedge\ttdown p\proves q\vee\ttdown q$ is not Sahlqvist. It translates to $P''\cap (\tdown P'')'\proves (Q''\cup (\tdown Q'')')''$ and co-translates to $(Q''\cup (\tdown Q'')')'''\vproves(P''\cap (\tdown P'')')'$. The 1-sequent reduces to $\lset P''\leq_1P, Q''\leq_1Q\midsp P\cap(\tdown P)'\leq_1(Q\cup(\tdown Q)')''\rset$. Using the re-write rule $(\tdown P)'\mapsto\btdown P'$ does not remove the difficulty since variables occur both primed and unprimed. Essentially the same difficulty appears in processing the co-translation.

The intuitionistic principle $p\wedge\ttdown p\proves\bot$,
 translating to $P''\cap (\tdown P'')'\leq_1\bot$, reduces to 
$\lset P''\leq_1 P\midsp P\cap(\tdown P)'\leq_1\bot\rset$, which cannot be further reduced to canonical Sahlqvist form because $P$ occurs both positively and negatively.
  
The co-translation $\top\vproves (P''\cap (\tdown P'')')'$, reduces to
$\lset P''\leq_1 P\midsp \top\vproves (P\cap(\tdown P)')'\rset$,
 or after re-writing $\lset P''\leq_1 P\midsp (P\cap\btdown P')'\rset$, again not in canonical Sahlqvist form, since the right-hand-side of the main inequality is not positive.
\end{ex}

\subsection{Implication}
\label{implication section}
Generalizing the correspondence algorithm to the case of the full language of DfML, thus including implication as well, is immediate. This is because our reduction strategy eliminates implication altogether, in favour of the additive (diamond) operators $\odot$, or $\tright$. Reversing our so far presentation choices, we start with examples.

Recall first from Section~\ref{dual algebras section} that the frame relation $T^{\partial 1\partial}$ generates a sorted normal additive operator $\ltright$, defined by
\begin{equation}\label{arrow image op}
U{\largetriangleright} V=\{y\in Y\midsp\exists x,v(x\in U\;\wedge\;v\in V\;\wedge\;yTxv)\}=\bigcup_{x\in U}^{v\in V}Txv
\end{equation}
and a stable sets implication operator was defined by $A\Ra C=(A\ltright C')'$, for $A,C\in\gpsi$.

Recall also that the double dual relation (Definition~\ref{double dual relations}) $R^{111}$ of $T^{\partial 1\partial}$ generates a binary normal additive operator $\bigodot$ on $\powerset(Z_1)$, residuated with an implicative construct $\Ra_T$ on $\powerset(Z_1)$ which coincides with $\Ra$ when restricted to stable sets (Proposition~\ref{Ra long and short}).

Because of the above we have two equivalent ways to translate an implicative sentence in the language of DfML, namely $(p\rfspoon q)^\bullet=(p^\bullet\tright q^\circ)'=(P''\tright Q')'=P''\rspoon Q''$.

\begin{ex}
Consider the contraction sequent $p\rfspoon(p\rfspoon q)\proves p\rfspoon q$ in the language of distribution-free modal logic. 

\paragraph{Choose $(p\rfspoon q)^\bullet=p^\bullet\rspoon q^\bullet$}
The following gives a reduction sequence to a system in canonical Sahlqvist form.
  \begin{tabbing}
1.  \hskip4mm\= $\langle P''\rspoon(P''\rspoon Q'')\leq_1 P''\rspoon Q''\rangle$ \hskip3cm\=\\
2.  \> $\langle P''\leq_1 P,Q''\leq_1 Q\midsp P\rspoon(P\rspoon Q)\leq_1 P\rspoon Q\rangle$\> (R4)\\
3.  \> $\langle P''\leq_1 P,Q''\leq_1 Q\midsp P\odot P\rspoon Q\leq_1 P\rspoon Q\rangle$ \>  (R5.8)\\
4.  \> $\langle P''\leq_1 P,Q''\leq_1 Q\midsp P\leq_1 P\odot P\rangle$ \>  (R8)\\
5.  \> $\langle P''\leq_1 P\midsp P\leq_1 P\odot P\rangle$ \>  (R1)
  \end{tabbing}
The interested reader can verify that the generalized Sahlqvist -- Van Benthem algorithm returns the formula $\exists^1u\exists^1z(xR^{111}uz\wedge x\leq u\wedge x\leq z)$ as a local first-order correspondent for the contraction sequent.

\paragraph{Choose $(p\rfspoon q)^\bullet=(p^\bullet\tright q^\circ)'$}
  The translation of the given sequent is $(P''\tright(P''\tright Q''')')'\proves (P''\tright Q''')'$. Introducing stability constraints we have a reduction to $\lset P''\leq_1P,Q''\leq_1Q\midsp (P\tright(P\tright Q')')'\leq_1 (P\tright Q')'\rset$. Introducing a change-of-variables constraint and discarding the thereby redundant related stability constraint we obtain the system of inequalities $\lset P''\leq_1 P,Q_1=_\partial Q'\midsp (P \tright (P\tright Q_1)')'\leq_1 (P\tright Q_1)'\rset$, hence the computation thread that picks to process the translation 1-sequent fails to reduce it to a system in canonical Sahlqvist form.
  
  For the co-translation (dual translation), recall first that $(\varphi\rfspoon\psi)^\circ$ = $(\varphi^\bullet\tright\psi^\circ)''$ (cf Table~\ref{syntactic translation into sorted}). Hence $(p\rfspoon(p\rfspoon q))^\circ=(p^\bullet\tright(p\rfspoon q)^\circ)''=(p^\bullet\tright(p^\bullet\tright q^\circ)'')''=(P''\tright(P''\tright Q')'')''$  and the following reduction steps 
  \begin{tabbing}
1.  \hskip4mm\= $\lset (P''\tright Q')''\leq_\partial(P''\tright(P''\tright Q')'')''\rset$\\
2.  \> $\lset P''\leq_1P\midsp (P\tright Q')''\leq_\partial(P\tright(P\tright Q')'')''\rset$\\
3.  \> $\lset P''\leq_1P,Q_1=_\partial Q'\midsp (P\tright Q_1)''\leq_\partial(P\tright(P\tright Q_1)'')''\rset$\\
4.  \> $\lset P''\leq_1P,Q_1=_\partial Q'\midsp P\tright Q_1\leq_\partial(P\tright(P\tright Q_1)'')''\rset$
  \end{tabbing}
succeed in reducing it to canonical Sahlqvist form. The guarded second-order translation is  
 \[
 \forall^1P\forall^\partial Q_1\forall^\partial y\forall^\partial v\forall^1 z[\mathrm{t{-}INV}\wedge y\mathbf{T}^{\partial 1\partial}zv\wedge P(z)\wedge Q_1(v)\lra\mathrm{ST}_y((P{\tright}(P{\tright}Q_1)'')'')]
 \]
 and we trust the reader to compute the resulting local correspondent.
\end{ex}

\begin{ex}
Consider the weakening axiom $p\proves q\rfspoon p$. None of the translation $P''\proves (Q''\tright P')'$ or co-translation $(Q''\tright P')''\vproves P'$ is Sahlqvist.

In the alternative (equivalent) translation we obtain the inequality $P''\leq_1 Q''\rspoon P''$. This is further equivalent to $Q''\odot P''\leq_1 P''$, reducing to $\lset P''\leq_1P,Q''\leq_1Q\midsp Q\odot P\leq_1 P\rset$, which is in canonical Sahlqvist form.
\end{ex}

\begin{ex}
  Consider the exchange axiom $p_1\rfspoon(p_2\rfspoon p_3)\proves p_2\rfspoon(p_1\rfspoon p_3)$. The following is a reduction sequence to canonical Sahlqvist form, choosing the translation $(p\rfspoon q)^\bullet=P''\rspoon Q''$.
\begin{tabbing}
1.\hskip4mm\= $\langle P_1''\rspoon(P_2''\rspoon P_3'')\leq_1 P_2''\rspoon(P_1''\rspoon P_3'') \rangle$ \hskip3cm\= \\
2.\> $\langle P_i''\leq_1P_i (i=1,2,3)\midsp P_1\rspoon(P_2\rspoon P_3)\leq_1 P_2\rspoon(P_1\rspoon P_3) \rangle$ \> (R4)\\
3.\> $\langle P_i''\leq_1P_i (i=1,2,3)\midsp P_2\odot P_1\rspoon P_3\leq_1 P_1\odot P_2\rspoon P_3\rangle$\> (R5.9), Table~\ref{rewrite rules table}\\
4.\> $\langle P_1''\leq_1P_1,P_2''\leq_1P_2\midsp P_1\odot P_2\leq_1P_2\odot P_1\rangle$. \> (R8)
\end{tabbing}
\end{ex}

\begin{ex}
As a last example, we consider the Fisher-Servi \cite{servi-Stud} axioms for IML, $\dd(p\rfspoon q)\proves\bb p\rfspoon\dd q$ and $\dd p\rfspoon\bb q\proves\bb(p\rfspoon q)$. The second axiom demonstrates the usefulness of having residuals $\bbox$ for the diamond operators in the language of the companion sorted modal logic (the reduction language).

For the first, we have a reduction
\begin{tabbing}
1.\hskip4mm\= $\langle (\bb p\rfspoon\dd q)^\circ\leq_\partial (\dd(p\rfspoon q))^\circ \rangle$\\
2.\> $\langle (\bb p)^\bullet\tright (\dd q)^\circ)''\leq_\partial \boxvert (p\rfspoon q)^\circ\rangle$\\
3.\> $\langle (\boxminus P''\tright {\boxvert} Q')''\leq_\partial  \boxvert(P''\tright Q')''   \rangle$\\
4.\> $\langle P''\leq_1 P,Q_1=_\partial Q'\midsp (\boxminus P\tright {\boxvert} Q_1)''\leq_\partial  \boxvert(P\tright Q_1)'' \rangle$\\
5.\> $\langle P''\leq_1 P,Q_1=_\partial Q'\midsp {\boxminus} P\tright {\boxvert}Q_1\leq_\partial  \boxvert(P\tright Q_1)'' \rangle$
\end{tabbing}
which is in canonical Sahlqvist form. 

Next consider the following reduction for $\dd p\rfspoon\bb q\proves\bb(p\rfspoon q)$.
\begin{tabbing}
1.\hskip4mm\= $\langle (\bb(p\rfspoon q))^\circ\leq_\partial (\dd p\rfspoon\bb q)^\circ \rangle$\\
2.\> $\langle (\diamondminus(p\rfspoon q)^\circ)''\leq_\partial ( (\dd p)^\bullet\tright(\bb q)^\circ    )'' \rangle$\\
3.\> $\langle (\diamondminus(P''\tright Q')''   )''\leq_\partial  ( (\diamondvert P'')''\tright (\diamondminus Q')''  )'' \rangle$\\
4.\> $\langle P''\leq_1P,Q_1=_\partial Q'\midsp (\diamondminus(P\tright Q_1)''   )''\leq_\partial  ( (\diamondvert P)''\tright (\diamondminus Q_1)''  )'' \rangle$\\
5.\> $\langle P''\leq_1P,Q_1=_\partial Q'\midsp \diamondminus(P\tright Q_1)'' \leq_\partial  ( (\diamondvert P)''\tright (\diamondminus Q_1)''  )''  \rangle$\\
6.\> $\langle P''\leq_1P,Q_1=_\partial Q'\midsp (P\tright Q_1)'' \leq_\partial  \bbox( (\diamondvert P)''\tright (\diamondminus Q_1)''  )''  \rangle$\\
7.\> $\langle P''\leq_1P,Q_1=_\partial Q'\midsp P\tright Q_1 \leq_\partial  \bbox( (\diamondvert P)''\tright (\diamondminus Q_1)''  )''  \rangle$
\end{tabbing}
which is in canonical Sahlqvist form.

Generating the guarded second-order translation and computing the local correspondent are left to the interested reader.
\end{ex}

\begin{thm}
  \label{Sahlqvist thm with implication}
  Every Sahlqvist sequent in the language of DfML has a first-order local correspondent, effectively computable from the input sequent.
\end{thm}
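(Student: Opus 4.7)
The plan is to reduce the argument to that of Theorem~\ref{Sahlqvist thm} and its negation-extension Theorem~\ref{Sahlqvist thm with negation}, exploiting the fact, highlighted at the start of Section~\ref{implication section}, that the reduction strategy eliminates $\rspoon$ from the main inequality altogether in favour of the additive binary operators $\odot$ and $\tright$ (via the rewrite rules (R5.8), (R5.9), together with the residuation rule (R7) and rule (R8)). Thus, given a Sahlqvist sequent $\varphi\proves\psi$ in the full language of DfML, pre-processing (Step~1 of Section~\ref{reduction structure section}, whose termination is guaranteed by Lemma~\ref{decidable sahlqvist} and whose correctness is Lemma~\ref{R1-R6}) returns a system $\lset\mathrm{STB,CVC}\midsp\zeta\leq_\sharp\xi\rset$ in canonical Sahlqvist form whose main inequality is $\rspoon$-free, with antecedent $\zeta$ built from $\top,\bot$ and boxed atoms by conjunction and the additive operators $\diamondvert,\odot$ (at sort~$1$) or $\diamondminus,\tdown,\tright$ (at sort~$\partial$), and with positive consequent $\xi$.

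Next, I would generate the guarded second-order translation as in Step~3 of Section~\ref{reduction structure section} and pull existential quantifiers into prenex form exactly as in the proof of Theorem~\ref{Sahlqvist thm}. The notational novelties are that the standard translations
\[
\stx{u}{\alpha\odot\eta}=\exists^1 x\exists^1 z\,(u\mathbf{R}^{111}xz\wedge\stx{x}{\alpha}\wedge\stx{z}{\eta})
\]
and
\[
\sty{v}{\alpha\tright\beta}=\exists^1 x\exists^\partial y\,(v\mathbf{T}^{\partial 1\partial}xy\wedge\stx{x}{\alpha}\wedge\sty{y}{\beta})
\]
contribute binary existential blocks over $\mathbf{R}^{111}$ and $\mathbf{T}^{\partial 1\partial}$, whose pulled-out relational atoms augment $\mathrm{REL}$ and whose embedded atomic formulae augment $\mathrm{AT}$. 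Boxed atoms may now involve, in addition to $\boxminus,\boxvert,\btdown$, the residuals $\bbox_1,\bbox_\partial$ of $\diamondvert,\diamondminus$ introduced via (R7); a composite string of such boxes standing over a second-order variable $P$ translates, just as in the proof of Theorem~\ref{Sahlqvist thm with negation}, to a formula $\forall w\,(x(\mathbf{R}^{**})^{k+1}w\lra P(w))$ for an appropriate composite relation predicate $\mathbf{R}^{**}$.

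Minimal instantiations $\lambda(Q_j)$ and $\lambda(P_i)$ are then defined by direct adaptation of equations~\eqref{lambda-Q} and~\eqref{lambda-P}, taking the join (respectively, the Galois closure of the union) over all $\mathbf{R}^{**}$-sections and all atomic $P$- and $Q$-occurrences in $\mathrm{AT}$. Crucially, each relevant $\mathbf{R}^{**}$-section is already a Galois set: the double-duals $R''_\Box,R''_\Diamond,R''_\tdown$ have stable sections by construction (Definition~\ref{double dual relations}) together with frame smoothness (axiom (F2) of Table~\ref{frames axioms}), while the relations $R^{11}_\Diamond,R^{\partial\partial}_\Box$ underlying $\bbox_1,\bbox_\partial$ are likewise smooth. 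Consequently, the $\mathrm{t}$-invariance and $\mathrm{BOXED}$ conjuncts of the antecedent of the guarded second-order translation evaluate to $\mathrm{TRUE}$ under these choices. Performing the substitutions, $\beta$-reducing, and dropping the redundant second-order quantifiers yields the proposed local first-order correspondent.

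Correctness would then be established verbatim as in the proof of Theorem~\ref{Sahlqvist thm}: the only nontrivial direction---showing that the computed correspondent implies the guarded second-order translation---relies on monotonicity of $\mathrm{POS}=\mathrm{ST}_x(\xi)$ in each $P_i,Q_j$ (from positivity of $\xi$) together with the complete additivity of all diamond-type operators, including the binary $\odot$ and $\tright$, in each argument place (Theorem~\ref{dist from section stability}). The main obstacle is essentially bookkeeping: verifying that the rewrite rules indeed eliminate every occurrence of $\rspoon$ from the reduced system, and that boxed-atom chains formed with the enlarged pool of box operators still interpret into Galois-stable sections. The former is immediate from Definition~\ref{canonical Sahlqvist form}, while the latter is delivered by Theorem~\ref{preservation of residuals} together with the standing smoothness assumption, so the argument proceeds as outlined.
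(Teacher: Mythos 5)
Your proposal is correct and follows essentially the same route as the paper: the paper's own proof of Theorem~\ref{Sahlqvist thm with implication} consists precisely of the observation that a Sahlqvist sequent reduces to canonical Sahlqvist form in which $\rspoon$ has been eliminated, so that the only novelty over Theorems~\ref{Sahlqvist thm} and~\ref{Sahlqvist thm with negation} is the possible presence of the binary diamonds $\odot$ and $\tright$ in the simple Sahlqvist antecedent. Your write-up simply expands that one-paragraph argument into the explicit algorithmic steps (binary existential blocks, enlarged $\mathrm{REL}$ and $\mathrm{AT}$, closure-based minimal instantiations), all of which are consistent with the proofs of the earlier theorems.
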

\begin{proof}
Being Sahlqvist, the sequent reduces to canonical Sahlqvist form. But the latter has no occurrences of implication. The only difference is that the binary diamond operators $\odot$, or $\tright$ may occur in the simple Sahlqvist inequality at the end of the reduction process. 
\end{proof}

\section{Restrictions, Extensions and Related Research}
\label{etc}

\subsection{Classical Sahlqvist--Van Benthem Correspondence}
\label{kripke case}
In defining the syntax of sorted modal logic, we certainly assume that the sets of propositional variables $\{P_i\midsp i\in\mathbb{N}\}$ and $\{P^i\midsp i\in\mathbb{N}\}$  are distinct. Undoing sorting, we let them be one and the same set, with $P_i=P^i$ for all $i\in\mathbb{N}$. The distinction between sorts collapses and ceases to exist. The two proof relations $\proves$ and $\vproves$ become essentially identified, since $\psi\vproves\varphi$ iff $\neg\varphi\proves\neg\psi$. $(\;)'$ appears as a negation operator, Galois connected with itself. If the semantics is given in frames $(s,Z,I,\ldots)$ where $s=\{1,\partial\}$, $Z_1=Z_\partial$ and $I\subseteq Z_1\times Z_\partial$ is the identity relation (classical Kripke frames, consult \cite[Remark~3.4, Remark~3.9]{dfnmlA}), then $(\;)'$ is interprted as the set-complement operator. This has the further consequences that every subset is stable, since $U=--U$, and as a simple calculation shows we also have $R''_\Box=R_\Box$, $R''_\Diamond=R_\Diamond$. Furthermore, for any propositional variable $P$ we have a semantic equivalence of $\bb P$ with each of $(\diamondminus P')'$ and $(\diamondvert P')'$, which in the classical Kripke case are $\neg(\diamondminus (\neg P))$ and similarly for $\diamondvert$. Consequently, the two diamond operators $\diamondvert, \diamondminus$ collapse to just one. Semantically, from $\ldvert\{x\}=\ldminus\{x\}$, for any $x$, it follows that $R_\Diamond=R_\Box$. 

For implication, in the classical Kripke frame case the double dual $R^{111}$ coincides with $T^{\partial 1\partial}$ (modulo a permutation of variables that was involved in defining $R^{111}$) and, with a thereby distributive frame logic, $uR^{111}xz$ iff $x\leq u$ and $z\leq u$ holds (consult \cite[Section~7]{dfnmlA} and \cite{choiceFreeHA} for details). The set image operator $\bigodot$ generated by $R^{111}$ is identified with intersection. With the identification of $R^{111}$ and $T^{\partial 1\partial}$  it follows that $\ltright=\bigodot$, hence both are intersection and thus implication is defined as in the classical case, since $A\rspoon C=(A\ltright C')'=-(A\cap(-C))$.

None of the above special assumptions about sorting, frame structure and semantic interpretation affects in the least the way the Sahlqvist - Van Benthem algorithm computes local correspondents.  Guarded second-order translation collapses to classical second-order translation since, by the fact that every set is Galois stable, the $\mathrm{t}${-}invariance constraint is trivially true (see also \cite{vb}, Final Comments section). The only difference is that the resulting correspondent can be now significantly simplified. 

Conclude that if the modal logic is classical and frames are then classical Kripke frames, as explained above, then the generalized Sahlqvist -- Van Benthem correspondence argument we presented collapses to the classical result for Sahlqvist implications. 

\subsection{Correspondence for Substructural Logics}
\label{suzuki case}
Consider  the language of (modal) substructural logics (with weak negation)
\[
\mathcal{L}\ni\varphi\;:=\;p_i(i\in\mathbb{N})\midsp\top\midsp\bot\midsp\varphi\wedge\varphi\midsp \varphi\vee\varphi\midsp\bb\varphi\midsp\dd\varphi\midsp\ttdown\varphi\midsp\varphi\lfspoon\varphi\midsp\varphi\rfspoon\varphi\midsp\varphi\ccirc\varphi\midsp\mathbf{t}.
\]
The Lambek product operator $\ccirc$ is interpreted as $\bigovert$ (consult Section~\ref{frames section}), the closure of the binary image operator $\bigodot$ generated by the double dual $R^{111}$ of the frame relation $T^{\partial 1\partial}$. Consult \cite{redm,choiceFreeHA} for details on modeling reverse implication $\lfspoon$. Frames need to be slightly richer, and we assume a distinguished point $\type{t\!t}\in Z_1$, letting $\mathbf{t}$ be interpreted as the closure $\{\type{t\!t}\}''=\Gamma\type{t\!t}=\mathbb{I}$. An axiom to the effect that $\mathbb{I}$ is a two-sided unit for the product operator $\bigovert$, $\mathbb{I}\bigovert A=A=A\bigovert\mathbb{I}$, needs to be added as well (see \cite{redm} for details).

The reader may notice that not much else changes in the correspondence argument we have provided. For example, the Exchange Rule amounts to commutativity of $\ccirc$, i.e. the sequent $p\ccirc q\proves q\ccirc p$ is to be added. Translation and reduction produce the system $\lset P''\leq P, Q''\leq Q\midsp (P\odot Q)''\leq_1(Q\odot P)''\rset$, which reduces, by (R3), to $\lset P''\leq P, Q''\leq Q\midsp P\odot Q\leq_1(Q\odot P)''\rset$, which is in canonical Sahlqvist form. A first-order correspondent can be now calculated.

The language may be extended with the Grishin operators $\leftharpoondown,\ast,\rightharpoondown$, but we write $\boxast$ for $\ast$, to emphasize that it is really a binary box operator. In Lambek-Grishin algebras $\boxast$ is co-resoduated with $\leftharpoondown$ and $\rightharpoondown$, i.e. the condition $a\geq c\leftharpoondown b$ iff $a\boxast b\geq c$ iff $b\geq a\rightharpoondown c$.  The (representation and) duality argument of \cite{duality2} yields sorted frames in which both the full complex algebra $\gpsi$ and its dual $\gphi$ are residuated. Given the residuation structure  $B\subseteq D\mapsto G$ iff $D{\Large \oast}B\subseteq G$ iff $B\subseteq G\leftmapsto B$ in $\gphi$, the Galois dual structure obtained by defining $A\boxast C=(A'\oast C')'$, $A\Rrightarrow C=(A'\mapsto C')'$ and $C\Lleftarrow A=(C'\leftmapsto A')'$ furnishes $\gpsi$ with a co-residuation (Grishin) structure $A\supseteq F\Lleftarrow C$ iff $A\boxast C\supseteq F$ iff $C\supseteq A\Rrightarrow F$. Issues to be clarified by the interested reader exist, but the principle of extending the correspondence argument is clear.

\begin{rem}
Sahlqvist correspondence and canonicity for substructural logics, in the language
\[
\varphi:= p\midsp\mathbf{t}\midsp\mathbf{f}\midsp\varphi\wedge\varphi\midsp\varphi\vee\varphi\midsp\varphi\leftarrow\varphi\midsp\varphi\circ\varphi\midsp \varphi\rightarrow\varphi,
\] 
have been studied by Suzuki \cite{Suzuki_2011,SUZUKI_2013}, building on some of the insights of Ghilardi and Meloni's constructive canonicity for non-classical logics \cite{GHILARDI1997}. Suzuki identifies subclasses of sentences in the above syntax, designated by $\varphi_\cup,\varphi_\cap,\varphi_\vee,\varphi_\wedge$, defined by  mutual recursion. Sequents $\varphi_\cup\proves\psi_\cap$ are said to have  consistent variable occurrence, while other combinations for the antecedent and consequent of a sequent reveal that a critical subformula must exist. A detailed syntactic and semantic analysis is given in \cite{SUZUKI_2013} (rather extensive to be briefly reviewed), and a correspondence algorithm is proposed. 

In \cite{Suzuki_2011} Suzuki presented results on canonicity for substructural logics and the syntactic approach of \cite{Suzuki_2011} was carried over and adapted for the study of correspondence in \cite{SUZUKI_2013}.
\end{rem}
\subsection{Quantifier Elimination by Ackermann's Lemma}
\label{conradie case}
Second-Order quantifier elimination and its applications in logic, in particular in correspondence theory, is well studied \cite{quantifier-elimination} and two basic algorithms, SCAN and DLS were first  proposed for the task. In a series of articles starting with \cite{Conradie-Goranko-Vakarelov-I}, Conradie, Goranko and Vakarelov introduced the algorithm SQEMA for computing first-order correspondents and which avoids some of the technical difficulties of SCAN and DLS relating to Skolemization. In \cite{Conradie-Goranko-Vakarelov-I} the authors treat the correspondence and canonicity problem for classical modal logic, presenting the core algorithm which uses a modal version of Ackermann's lemma, with the algorithm and the application area variously extended in the sequel in \cite{Conradie-Goranko-Vakarelov-II,Conradie-Goranko-Vakarelov-III, Conradie-Goranko-IV,Conradie-Goranko-Vakarelov-V}, though distribution was assumed in the logics where the approach was applied.

As mentioned already in the Introduction, 
an algebraic and order-theoretic, unified approach to correspondence and canonicity theory was founded in \cite{Conradie-unified}, by Conradie, Ghilardi and Palmigiano. Three of the tools that can be singled out in this approach are, first the use of Ackermann's Lemma, second, a systematic extension of the target language by the addition of adjoint operators (residuals, or Galois connected operators) and third a use of a hybrid language with nominals and co-nominals, used to eliminate second-order variables. 

Directly relating to the content of the present article is \cite{conradie-palmigiano}, where the correspondence problem for non-distributive logics was presented, which can be then specialized to distribution-free modal logics, as well. The extended language according to \cite{conradie-palmigiano}, if DfML is the intended  application, would then be
\begin{eqnarray*}
\widehat{\mathcal{L}}\ni\varphi\;:=& \;\;p_i(i\in\mathbb{N})\midsp{\mathbf{j}\;(\mathbf{j}\in\mathrm{NOM}) \midsp\mathbf{m}\;(\mathbf{m}\in\mathrm{CO{-}NOM})} \midsp\top\midsp\bot\midsp
 \midsp\varphi\wedge\varphi\midsp \\ & \midsp \varphi\vee\varphi\midsp\bb\varphi\midsp\blackdiamond\varphi\midsp\dd\varphi\midsp\bbox\varphi\midsp\ttdown\varphi \midsp\blacktriangledown\varphi \midsp\varphi\rfspoon\varphi\midsp\varphi\bullet\varphi\midsp\varphi\lfspoon\varphi,
\end{eqnarray*}
where the  meaning of the added operators is fixed by introducing the relevant additional adjunction rules to the proof system, such as 
\small{
$\infrule{\varphi\proves\ttdown\psi}{\overline{\psi\proves\blacktriangledown\varphi}},
\hskip2mm
\infrule{\varphi\proves\psi\rfspoon\vartheta}{\overline{\psi\proves\vartheta\lfspoon\varphi}}$
and
$\infrule{\varphi\proves\psi\rfspoon\vartheta}{\overline{\;\;\psi\bullet\varphi\proves\vartheta\;\;}}$.
}
\normalsize

The intended interpretation in \cite{conradie-palmigiano} considers perfect LEs $\mathbb{A}$ from which relational structures $(J^\infty(\mathbb{A}),\leq,M^\infty(\mathbb{A}),\bb^\pi,\dd^\sigma,\ttdown^\pi,\rfspoon^\pi)$ can be extracted, providing relational semantics for the logic. 
The scope of the algebraic approach to correspondence is large, both correspondence and canonicity are subject to this approach and a number of authors contribute to the related research. For details, we direct the reader to \cite{Conradie-unified} and to the subsequent publications by the authors.

A point that is worth making relates to the axioms used by Dunn to axiomatize positive modal logic (PML), (K1) $\Diamond p\wedge\Box q\proves\Diamond(p\wedge q)$ and (K2) $\Box(p\vee q)\proves\Diamond p\vee\Box q$. As mentioned in Example~\ref{K1-K2}, in the algebraic correspondence approach of \cite{conradie-palmigiano} both axioms are Sahlqvist if distribution is assumed and both fail to be Sahlqvist  in distribution-free systems. It is interesting that, for distribution-free modal logic, both (K1) and (K2) can be reduced to canonical Sahlqvist form (consult Example~\ref{K1-K2}), from which first-order correspondents can be calculated, hence some useful complementarity of the approaches appears to be in place.

\bibliographystyle{plain}

\end{document}